\documentclass{article}
\usepackage[utf8]{inputenc}
\usepackage{amsmath}
\usepackage{amsfonts}
\usepackage{amsthm}
\usepackage{amssymb}
\usepackage{mathtools}
\usepackage{dsfont}
\usepackage{geometry}
\usepackage{hyperref}
\usepackage{mathrsfs}
\usepackage{float}
\usepackage{subcaption}
\usepackage{xcolor}
\usepackage{siunitx}
\usepackage{booktabs}
\usepackage{authblk}

\newcommand{\NN}{\mathbb{N}}
\newcommand{\RR}{\mathbb{R}}
\newcommand{\PP}{\mathbb{P}}
\newcommand{\EE}{\mathbb{E}}
\newcommand{\Sp}{\mathbb{S}}
\newcommand{\blank}{{\mspace{2mu}\cdot\mspace{2mu}}}
\newcommand{\lp}{(}
\newcommand{\rp}{)}

\DeclareMathOperator{\conv}{conv}
\DeclareMathOperator*{\argmin}{arg\,min}

\theoremstyle{plain}
\newtheorem{theorem}{Theorem}
\newtheorem{lemma}{Lemma}
\newtheorem{proposition}{Proposition}
\newtheorem{corollary}{Corollary}
\theoremstyle{remark}
\newtheorem{remark}{Remark}
\theoremstyle{definition}
\newtheorem{definition}[theorem]{Definition} 
\newtheorem{example}{Example}

\title{Inverting Poisson-Laguerre tessellations}
\author{Thomas van der Jagt}
\author{Geurt Jongbloed}
\author{Martina Vittorietti}
\affil{Delft Institute of Applied Mathematics, Delft University of Technology.}
\date{\today}

\begin{document}

\maketitle

\begin{abstract}
    While it is well-known how to compute the cells of a Laguerre tessellation for a given set of weighted generator points, it is not obvious how to invert a Laguerre tessellation. That is, given that one observes a Laguerre tessellation, how can one retrieve the weighted generators corresponding to the observed cells. In this paper, we consider inversion of a class of random Laguerre tessellations known as Poisson-Laguerre tessellations. The weighted generators of observed cells of a Poisson-Laguerre tessellation are of interest because knowledge of these weighted generators is useful for statistical inference of Poisson-Laguerre tessellations. For general Laguerre tessellations we provide a characterization of all configurations of weighted generator points which yield the same Laguerre tessellation. For Poisson-Laguerre tessellations we propose a method for consistent inversion, meaning that as one observes the tessellation through increasing observation windows, a closer approximation of the original weighted generators can be obtained. In a simulation study we examine both performance of the inversion procedure, as well as the use of the obtained approximated weighted generators for nonparametrically estimating the weight distribution function corresponding to a Poisson-Laguerre tessellation. 
\end{abstract}

\section{Introduction}
Voronoi tessellations and their weighted generalizations have been successfully applied in a wide range of applications and fields. We refer to \cite{Okabe2000,Redenbach2025} for an overview of these tessellations and examples of their applications. In this paper we focus on a weighted generalization of the Voronoi tessellation known as the Laguerre tessellation. From an application point of view, Laguerre tessellations have shown to be accurate models of materials microstructures for a range of different materials. See for instance \cite{Falco2017}, \cite{Lautensack2008b}, \cite{Liebscher2015},  and \cite{Wu2010} for examples.

The Voronoi tessellation is defined via a set of points, referred to as the generators, and each generator corresponds to a cell in the tessellation. For Laguerre tessellations, the generators are weighted points. As opposed to generators in the Voronoi model, not all generators will necessarily generate a cell in the Laguerre setting. That is, the cell corresponding to some generator may be the empty set. For Voronoi- and Laguerre tessellations algorithms have been developed for computing the cells in the tessellation for a given set of generator points, see also \cite{Okabe2000} and \cite{Aurenhammer1987}. Additionally, in the case of the Voronoi tessellation it is known how to perform the inverse process. That is, given some description of the cells of a Voronoi tessellation it is possible to retrieve the generator points used to obtain this Voronoi tessellation. See for instance \cite{Schoenberg2003} for an algorithm to perform this procedure. In the case of Laguerre tessellations it is not possible to uniquely determine the set of weighted generator points that was used to obtain the Laguerre tessellation. More specifically, in \cite{Duan2014} it was shown via simulations that multiple configurations of weighted points may result in the same Laguerre tessellation, meaning that Laguerre tessellations are overparameterized. Questions related to the reconstruction and inversion of Laguerre tessellations have also been considered from several different perspectives in the literature. In \cite{bourne:hal-05008630,Bourne2025}, the authors study a related problem through semi-discrete convex order and establish reconstruction results for Laguerre tessellations based on cell volumes and centers of mass, revealing a close connection between reconstruction and inversion. In a more applied setting, \cite{Petrich2019} proposes a reconstruction procedure for polycrystalline materials from incomplete data, where weighted generator points are estimated using the Cross-Entropy Method. Furthermore, practical optimization-based approaches for fitting Laguerre-type polyhedral structures to experimental data have also been developed, see for instance \cite{Quey2018}.

In Figure \ref{overparameterization_example} two configurations of generators are shown which yield the same Laguerre tessellation. The circle around each generator point has radius equal to the weight of this generator. To the best of our knowledge there is no known result which presents a characterization of all configurations of weighted points which yield the same Laguerre tessellation. The process of retrieving generators from a given Voronoi- or Laguerre tessellation is referred to as inverting the Voronoi- or Laguerre tessellation, which explains the title of this paper. 

\begin{figure}[t!]
    \centering
    \makebox[\textwidth]{\makebox[\textwidth]{
    \begin{subfigure}[t]{0.5\textwidth}
        \centering
        \includegraphics[width=0.9\linewidth]{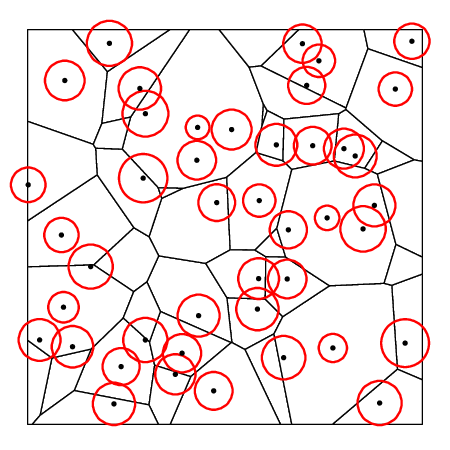}
    \end{subfigure}
    \begin{subfigure}[t]{0.5\textwidth}
        \centering
        \includegraphics[width=0.9\linewidth]{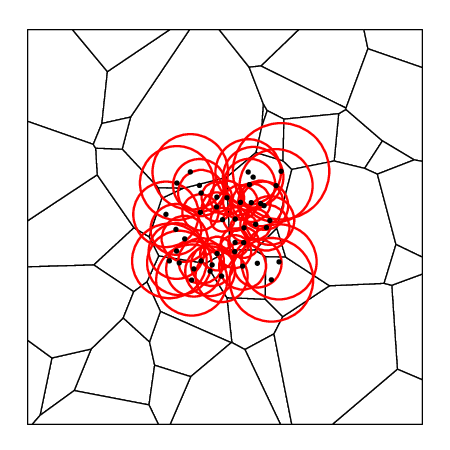}
    \end{subfigure}\hfill}}
    \caption{Example highlighting the overparameterization of a Laguerre tessellation.}\label{overparameterization_example}
\end{figure}

In the literature, various instances of random Laguerre tessellations have been studied, see \cite{Lautensack2007}, \cite{gusakova2025} and \cite{Seitl2021}. These random Laguerre tessellations are obtained by taking the weighted generator points to be a realization of a (marked) point process. The fact that it is not possible to uniquely recover the weighted generator points of a given Laguerre tessellation is somewhat disappointing in the context of statistical inference for random Laguerre tessellations. This being the case because the usual approach to performing statistical inference for random Laguerre tessellations is to consider the point process of weighted generators corresponding to observed cells to be known, and to then use these points to estimate the parameter(s) of the underlying point process model. This approach is for instance taken in \cite{Seitl2021} and \cite{vdjagt2025b}.

In this paper we first study general Laguerre tessellations, and then we shift our focus towards Poisson-Laguerre tessellations. For general Laguerre tessellations, we show that under a set of commonly satisfied regularity conditions it is possible to fully characterize all configurations of weighted generator points which yield the same Laguerre tessellation. Poisson-Laguerre tessellations were first studied in \cite{Lautensack2007} and \cite{Lautensack2008}. A Poisson-Laguerre tessellation in $\RR^d$ is obtained by taking the weighted generators to be a realization of a Poisson (point) process on $\RR^d \times E$ for some set $E \subset \RR$. Throughout, we consider the choice $E=(0,\infty)$, and we assume that the underlying Poisson process $\eta$ has intensity measure $\nu_d \times \mathbb{F}$. Here, $\nu_d$ is Lebesgue measure on $\RR^d$ and $\mathbb{F}$ is a locally finite measure on $(0,\infty)$. While a Poisson-Laguerre tessellation is a tessellation of $\RR^d$, in many practical settings it is common to take a so-called observation window $W \subset \RR^d$, and to only observe a part of the tessellation through this window. We provide sufficient conditions for being able to consistently invert the observed Poisson-Laguerre tessellation, as the observation window expands unboundedly to the whole space. This essentially means that while it is in general not possible to uniquely determine the weighted generator points of a Laguerre tessellation, if one observes a Poisson-Laguerre tessellation one can get a closer and closer approximation of the original weighted generator points corresponding to observed cells as one observes the Poisson-Laguerre tessellation through observation windows of increasing sizes. 

Taking these results into account, we are interested in whether the estimators for the distribution function $F(z) = \mathbb{F}((0,z])$, $z \geq 0$, as proposed in \cite{vdjagt2025b} still perform well if these estimators are computed based on our proposed approximation of the generator points instead of the true generator points. That is, we want to know if it is possible to estimate $F$, when we only rely on the cells of the tessellation which are (partially) observed through the observation window, without prior knowledge of the generators. We address this question via a simulation study.


In a practical context, it is often the case that one does not actually have Laguerre tessellation data. Instead, one may have some image data, for example a microscopic image of a materials microstructure. Then, one may fit a Laguerre tessellation to this image data. We refer to \cite{Alpers2025} and references therein for an overview of methods which can fit a unique Laguerre tessellation to image data. Typically, these procedures also provide a configuration of weighted generators which generate the fitted Laguerre tessellation. If one has a Laguerre tessellation without a corresponding configuration of weighted generators, then algorithm 1 in \cite{Duan2014} may be used to obtain such a configuration. Once a fitting procedure is performed one may apply the methodology proposed in this paper, to obtain a specific configuration of the weighted generators which is suitable for statistical analysis. This approach will especially be sensible if the image data at hand may approximately be considered a realization of a Poisson-Laguerre tessellation.


This paper is organized as follows. In section \ref{reconstruction_preliminaries} we introduce necessary notation and definitions. We present a characterization of the overparameterization of Laguerre tessellations in section \ref{section_overparameterization_laguerre}. Inspired by this result we propose a method for inverting a Poisson-Laguerre tessellation in section \ref{section_wls_reconstruction}. Sufficient conditions for consistent inversion of Poisson-Laguerre tessellations are given in section \ref{inversion_sufficient_conditions_section}. Then, we perform some simulations in section \ref{section_simulations_inversion}. Here, we apply the proposed inversion procedure and use the resulting approximation of the weighted generators to compute estimates of $F$. Finally, we provide some conclusions in section \ref{section_discussion_inversion}.


\section{Preliminaries}\label{reconstruction_preliminaries}
In this section we introduce necessary notation and definitions. Let $\nu_d$ denote Lebesgue measure on $\RR^d$, and $\sigma_{d-1}$ Lebesgue measure on the sphere $\Sp^{d-1} = \{x \in \RR^d: \Vert x \Vert = 1\}$, also known as the spherical measure. Given $x \in \RR^d$ and $r > 0$, we write $B(x, r) = \{y \in \RR^d : \Vert x-y\Vert < r\}$ and $\bar{B}(x, r) = \{y \in \RR^d : \Vert x-y\Vert \leq r\}$ for the open and closed ball respectively, with radius $r$ centered at $x$. Let $\kappa_d$ denote the volume of the $d$-dimensional unit ball:
\[\kappa_d := \nu_d\left(\Bar{B}(0,1)\right) = \frac{2\pi^\frac{d}{2}}{\Gamma\left(1 + \frac{d}{2} \right)}.\]
Let $A, B \subset \RR^d$, then the sum of sets is defined as: $A + B = \{a + b: a \in A, b \in B\}$. If $x \in \RR^d$, we also write: $A + x = \{a+x: a \in A\}$. For $c \in \RR$, we write $cA = \{ca:a\in A\}$.

\subsection{Point processes}
We now introduce several definitions related to point processes. For more background on the theory of point processes we refer to \cite{Daley2003}, \cite{Daley2008} and \cite{Last2018}. Suppose $\mathbb{X} = \RR^d \times (0,\infty)$, and let $(\Omega, \mathcal{A},\PP)$ be a probability space. A measure $\mu$ on $\mathbb{X}$ is locally finite if $\mu(B) < \infty$ for all bounded $B \in \mathcal{B}(\mathbb{X})$. Here, $\mathcal{B}(\mathbb{X})$ denotes the Borel $\sigma$-algebra of $\mathbb{X}$. Let $\mathbf{N}(\mathbb{X})$ denote the space of locally finite counting measures (integer-valued measures) on $\mathbb{X}$. We equip $\mathbf{N}(\mathbb{X})$ with the usual $\sigma$-algebra $\mathcal{N}(\mathbb{X})$, which is the smallest $\sigma$-algebra on $\mathbf{N}(\mathbb{X})$ such that the mappings $\mu \mapsto \mu(B)$ are measurable for all $B \in \mathcal{B}(\mathbb{X})$. A point process on $\mathbb{X}$ is a random element $\eta$ of $(\mathbf{N}(\mathbb{X}), \mathcal{N}(\mathbb{X}))$, that is a measurable mapping $\eta:\Omega \to \mathbf{N}(\mathbb{X})$. The intensity measure of a point process $\eta$ on $\mathbb{X}$ is the measure $\Lambda$ defined by $\Lambda(B):= \EE(\eta(B))$, $B \in \mathcal{B}(\mathbb{X})$. 

\begin{definition}
    Suppose $\Lambda$ is a $\sigma$-finite measure on $\mathbb{X}$. A Poisson process with intensity measure $\Lambda$ is a point process $\eta$ on $\mathbb{X}$ with the following two properties:
\begin{enumerate}
    \item For every $B \in \mathcal{B}(\mathbb{X})$, the random variable $\eta(B)$ is Poisson distributed with mean $\Lambda(B)$.
    \item For every $m \in \NN$ and pairwise disjoint sets $B_1,\dots,B_m \in \mathcal{B}(\mathbb{X})$, the random variables $\eta(B_1),\dots,\eta(B_m)$ are independent.
\end{enumerate}
\end{definition}

Let $\delta$ denote the Dirac measure, hence for $x \in \mathbb{X}$ and $B \in \mathcal{B}(\mathbb{X})$: $\delta_x(B) = \mathds{1}\{x \in B\}$. A counting measure $\mu$ on $\mathbb{X}$ is called simple if $\mu(\{x\}) \leq 1$ for all $x \in \mathbb{X}$. Similarly, a point process $\eta$ on $\mathbb{X}$ is called simple if $\PP\left(\eta(\{x\}) \leq 1, \text{ } \forall x \in \mathbb{X} \right) = 1$. Let $\mathbf{N}_s(\mathbb{X})$ be the subset of $\mathbf{N}(\mathbb{X})$ containing all simple measures. Define: $\mathcal{N}_s(\mathbb{X}):= \{A \cap \mathbf{N}_s(\mathbb{X}): A \in \mathcal{N}(\mathbb{X})\}$. Then, a simple point process on $\mathbb{X}$ is a random element $\eta$ of $(\mathbf{N}_s(\mathbb{X}), \mathcal{N}_s(\mathbb{X}))$. Simple point processes can be identified with their support, meaning that one may view the point process as a random set of discrete points in $\mathbb{X}$. We may, for example, write $x \in \eta$ instead of $x \in \mathrm{supp}(\eta)$. It is common practice to switch between the interpretations of a simple point process as a random counting measure or as a random set of points, depending on whichever interpretation is more convenient. We will also do this throughout this paper. Enumerating the points of a simple point process in a measurable way we may write:
\[\eta = \{x_1,x_2,\dots\}, \text{ \quad and \quad } \eta = \sum_{i =1}^{\eta(\mathbb{X})}\delta_{x_i}.\]
For $v \in \RR^d$ let $S_v$ denote the shift operator. Suppose $\eta = \{(x_1,h_1), (x_2, h_2),\dots\}$ is a point process with $x_i \in \RR^d$ and $h_i > 0$. Then, we define $S_v\eta := \{(x_1-v,h_1),(x_2-v,h_2),\dots\}$. We call a point process $\eta$ on $\RR^d$ stationary if $S_v\eta$ and $\eta$ are equal in distribution for all $v \in \RR^d$. Throughout, $(W_n)_{n \geq 1}$ denotes a so-called convex averaging sequence. That is, each $W_n \subset \RR^d$ is convex and compact, and the sequence is increasing: $W_n \subset W_{n+1}$. Finally, the sequence $(W_n)_{n \geq 1}$ expands unboundedly in all directions: $\sup\{r \geq 0: B(x,r) \subset W_n \text{ for some } x \in W_n\} \to \infty$ as $n \to \infty$.

\subsection{Laguerre tessellations} \label{prelim_laguerre}
In this section we describe general Laguerre tessellations as well as Poisson-Laguerre tessellations. By a tessellation of $\RR^d$ we mean a countable collection $T = \{C_i: i\in \NN\}$, of sets $C_i \subset \RR^d$ (the cells of the tessellation) satisfying some conditions. Specifically, the $C_i$'s have pairwise disjoint interiors, the union of all $C_i$'s equals $\RR^d$, the collection $T$ is locally finite and each $C_i$ is a compact and convex set with interior points. Local finiteness of $T$ means that only finitely many $C_i$'s have a non-empty intersection with any bounded $B \in \mathcal{B}(\RR^d)$. We should note that (Laguerre) tessellations are often defined on some bounded domain $A \subset \RR^d$, instead of $\RR^d$. Then, $T$ should be a finite collection instead and the union of the $C_i$'s equals $A$.

Now, we introduce the Laguerre diagram. Let $\varphi = \{(x_i,h_i)\}_{i \in \NN}$, with $x_i \in \RR^d$ and $h_i \in \RR$. Assume moreover that $x_i \neq x_j$ for $i\neq j$. The Laguerre cell associated with $(x,h) \in \varphi$ is defined as:
\begin{equation}
    C((x,h), \varphi) = \left\{y \in \RR^d: \Vert y-x \Vert^2 + h \leq \Vert y-x' \Vert^2 + h' \text{ for all } (x',h') \in \varphi\right\}. \label{eq_Laguerre_cell_def}
\end{equation}
For $i \in \NN$, the cell $C((x_i,h_i),\varphi)$ may be written as the intersection of half spaces:
\begin{equation}
    C((x_i,h_i),\varphi) = \bigcap_{j \in \NN}\left\{y\in\RR^d: 2\langle y,x_j - x_i\rangle \leq \Vert x_i \Vert^2 - \Vert x_j \Vert^2 +h_j - h_i\right\}. \label{laguerre_cel_halfspaces}
\end{equation}
Here, $\langle \blank,\blank\rangle$ is the usual inner product in $\RR^d$. The Laguerre diagram generated by $\varphi$ is the set of non-empty Laguerre cells, and is denoted by $L(\varphi)$: 
\[L(\varphi) := \left\{C((x,h), \varphi): (x,h) \in \varphi \text{ and } C((x,h), \varphi) \neq \emptyset\right\}.\]
A Laguerre diagram is not necessarily a tessellation, conditions on $\varphi$ are needed to ensure that $L(\varphi)$ is locally finite and that all cells are bounded. The random Laguerre diagrams we consider in this paper are in fact tessellations, as we will discuss in a moment. A Laguerre diagram has an interesting interpretation as a crystallization process. From the definition of a Laguerre cell it follows that:
\[x \in C((x_i,h_i),\varphi) \iff \exists t \geq h_i: x \in \bar{B}\left(x_i,\sqrt{t - h_i}\right) \text{ and } x \notin \bigcup_{j \neq i} B\left(x_j, \sqrt{(t-h_j)_{+}}\right),\]
with $(x)_{+} = \max\{x, 0\}$. Hence, we may associate the ball $B_i(t) := \bar{B}(x_i, \sqrt{(t-h_i)_{+}})$ with the pair $(x_i,h_i)$. This ball starts growing at time $t = h_i$. The ball initially grows fast, and then its growth slows down due to the square root. If $B_i$ is the first ball to hit a given point $x \in \RR^d$, then $x \in C((x_i,h_i),\varphi)$. It is possible that $x_i$ lies in another cell $C((x_j,h_j),\varphi)$, $i\neq j$ and yet $C((x_i,h_i),\varphi)$ may be non-empty. It can also be that $(x_i,h_i)$ does not generate a cell, essentially because its ball starts growing too late.

Let $P \subset \RR^d$ be a full-dimensional polytope. A subset $F$ of $P$ is called a face of $P$ if either $F=\emptyset$, $F=P$ or if there exists a supporting hyperplane $H$ of $P$ such that $F=P\cap H$. $H$ is a supporting hyperplane of $P$ if $H \cap P \neq \emptyset$ and if $P$ is contained in only one of the closed half spaces bounded by $H$. Note that each face of a convex polytope is again a convex polytope. A $k$-dimensional face of $P$ is called a $k$-face. Usually, $0$-faces are called vertices, $1$-faces edges and $(d-1)$-faces facets. A tessellation $T = \{C_i: i\in \NN\}$ is called face-to-face if the intersection of any two cells $C_i$ and $C_j$ is either empty or a $k$-face ($k \leq d-1$) of both cells. If moreover every $k$-face is contained in the boundary of exactly $d-k+1$ cells ($k=0,\dots,d-1$), then the tessellation is called normal. Let $\varphi = \{(x_i,h_i)\}_{i \in \NN} \subset \RR^d \times \RR$. We say that $\varphi$ satisfies the regularity conditions if:
\begin{enumerate}
    \item $\conv\{x_i: (x_i,h_i) \in \varphi\} = \RR^d$.
    \item Only finitely many $(x_i,h_i) \in \varphi$ satisfy $\Vert x_i - y \Vert^2 + h_i \leq t$ for any $y \in \RR^d$ and any $t \in \RR$.
\end{enumerate}
We say that the points of $\varphi$ are in general position if:
\begin{enumerate}
    \item A $(k-1)$-dimensional affine subspace contains at most $k$ $x_i$'s $(k = 2,\dots,d)$.
    \item At most $d+1$ points $(x,h) \in \varphi$ satisfy $\Vert x - y \Vert^2 + h = t$ for any $y \in \RR^d$, $t \in \RR$. 
\end{enumerate}
If $\varphi$ satisfies the regularity conditions, then the Laguerre diagram generated by $\varphi$ is a face-to-face tessellation. If the points of $\varphi$ are also in general position, then all cells of this Laguerre tessellation have dimension $d$ and this tessellation is normal. The aforementioned results can be found in \cite{Lautensack2007}. 

We are now ready to introduce Poisson-Laguerre tessellations. The Poisson-Laguerre tessellation is a generalization of the well-known Poisson-Voronoi tessellation, and was first studied in \cite{Lautensack2007} and \cite{Lautensack2008}. We will mostly follow the description of the Poisson-Laguerre tessellation as given in \cite{gusakova2025}. Throughout, $\eta$ is a Poisson process on $\RR^d \times(0,\infty)$ with intensity measure $\nu_d \times \mathbb{F}$. Here, $\mathbb{F}$ is a locally finite measure concentrated on $(0,\infty)$. In \cite{gusakova2025} it was shown that with probability one $\eta$ satisfies the regularity conditions and its points are in general position. As a consequence, $L(\eta)$ is with probability one a normal tessellation, which is also face-to-face. The random tessellation $L(\eta)$ is known as the Poisson-Laguerre tessellation. One may define the distribution function $F(z) = \mathbb{F}((0,z])$ for $z \geq 0$. Consistent estimators for $F$ based on weighted generators corresponding to observed cells were introduced in \cite{vdjagt2025b}. 

In this paper we will also use the following notation. As shown in Lemma 1 in \cite{vdjagt2025b}, for $x,y \in \RR^d$ and $h > 0$ we have $y \in C((x,h),\eta)-x \iff \eta(A_{x,h,y}) = 0$, with:
\begin{equation}
    A_{x,h,y} = \left\{(x',h') \in \RR^d \times (0,\infty): \Vert y \Vert^2 + h - h' > \Vert x + y-x' \Vert^2\right\}. \label{set_A_xhy}
\end{equation}
Moreover, 
\begin{equation}
    \PP\left(y \in C((x,h),\eta)-x\right) = \exp\left(-\kappa_d \int_0^{\Vert y \Vert^2 +h}\left(\Vert y \Vert^2 +h -t \right)^{\frac{d}{2}}\mathrm{d}F(t) \right). \label{point_inclusion_probability}
\end{equation}
For deterministic $(x,h) \in \RR^d \times \RR$, the set $C((x,h),\eta)$ is a so-called random closed set. Let $\mathcal{F}(\RR^d)$ denote the system of closed subsets of $\RR^d$. $\mathfrak{F}(\RR^d)$ denotes the $\sigma$-algebra on $\mathcal{F}(\RR^d)$ which is generated by all families $\mathcal{F}^K = \{F \in \mathcal{F}(\RR^d):F\cap K = \emptyset\}$, $K \in \mathcal{K}^d$. Here, $\mathcal{K}^d$ denotes the space of convex bodies in $\RR^d$, and a convex body is a convex and compact set with non-empty interior. Then, a random closed set is a random element of $(\mathcal{F}(\RR^d), \allowbreak\mathfrak{F}(\RR^d))$. Robbins' theorem states that for any random closed set $X \subset \RR^d$ and $p \in \NN$ we have:
\begin{equation}
    \EE\left(\nu_d(X)^p\right) = \int_{\RR^d}\dots\int_{\RR^d}\PP(y_1,\dots,y_p \in X)\mathrm{d}y_1 \dots \mathrm{d}y_p. \label{Robbins_thm}
\end{equation}
Robbin's theorem is essentially Fubini's theorem in the context of random closed sets. We refer to \cite{Molchanov2017} for more details on random closed sets and Robbins' theorem. From the proof of Theorem 6 in \cite{vdjagt2025b} we recall that for any $x \in \RR^d$:
\begin{align}
    &\int_0^\infty \EE\left(\nu_d(C((x,h),\eta)) \right)\mathrm{d}F(h) = \nonumber\\ 
    &= \int_0^\infty \int_{\RR^d}\PP\left(y \in C((x,h),\eta)-x\right)\mathrm{d}y\mathrm{d}F(h) \nonumber\\
    &=\int_0^\infty\int_{\RR^d}  \exp\left(-\kappa_d \int_0^{\Vert y \Vert^2 +h}\left(\Vert y \Vert^2 +h -t \right)^{\frac{d}{2}}\mathrm{d}F(t) \right)\mathrm{d}y\mathrm{d}F(h) = 1. \label{mean_cell_volume1}
\end{align}

\section{The overparameterization of Laguerre tessellations}\label{section_overparameterization_laguerre}

For a Laguerre tessellation it can be seen from the definition of a Laguerre cell that transforming all weights $h_i$ via $h_i \mapsto h_i + z$ for a fixed $z \in \RR$ does not change the tessellation. In \cite{Duan2014} it was shown via simulations that there also exist various other choices of generator points and weights which generate the same Laguerre tessellation. However, to the best of our knowledge there is no explicit characterization of the class of all configurations of points and weights which yield the same Laguerre tessellation. In this section, we obtain such a characterization for a class of Laguerre tessellations generated by weighted points satisfying commonly used regularity conditions. The following lemma from \cite{Meyron2019} highlights a large class of generator points and weights which result in the same Laguerre tessellation. 

\begin{lemma}[Proposition 6 in \cite{Meyron2019}]\label{overparameterization_laguerre_easy}
    Let $\mathcal{I} \subset \NN$ and $ \varphi =\{(x_i,h_i)\}_{i \in \mathcal{I}} \subset \RR^d \times \RR$. Let $\lambda > 0, c \in \RR^d$ and $z \in \RR$. For $i \in \mathcal{I}$ define:
\begin{align*}
    x_i' &:= \lambda x_i + c\\
    h_i' &:= \lambda h_i - \lambda(\lambda - 1)\Vert x_i \Vert^2 - 2\lambda \langle x_i, c\rangle + z.
\end{align*}
Set $\psi = \{(x_i',h_i')\}_{i \in \mathcal{I}}$. Then, $L(\varphi) = L(\psi)$. In fact, $C((x_i,h_i),\varphi) = C((x_i',h'_i),\psi)$ for all $i \in \mathcal{I}$.
\end{lemma}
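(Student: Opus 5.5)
We prove the cell identity $C((x_i,h_i),\varphi) = C((x_i',h_i'),\psi)$ for every $i \in \mathcal{I}$. Once this holds, $L(\varphi) = L(\psi)$ is immediate, because both diagrams are defined as the collections of non-empty cells indexed by the same set $\mathcal{I}$.

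The plan is to show that the power function $\Vert y - x\Vert^2 + h$ of each generator transforms in the same affine way for all generators. Then the comparisons in \eqref{eq_Laguerre_cell_def} are preserved.

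The key step is a change of variable. Let $y' := \lambda y + c$, which is a bijection of $\RR^d$. We expand
\begin{align*}
\Vert y' - x_i'\Vert^2 + h_i' &= \lambda^2\Vert y - x_i\Vert^2 + \lambda h_i - \lambda(\lambda-1)\Vert x_i\Vert^2 - 2\lambda\langle x_i,c\rangle + z.
\end{align*}
Next we write $\lambda^2\Vert y-x_i\Vert^2 = \lambda\Vert y - x_i\Vert^2 + \lambda(\lambda-1)\Vert y-x_i\Vert^2$. Expanding $\Vert y-x_i\Vert^2 = \Vert y\Vert^2 - 2\langle y,x_i\rangle + \Vert x_i\Vert^2$, the term $\lambda(\lambda-1)\Vert x_i\Vert^2$ cancels. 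The right-hand side then becomes
\[
\lambda\bigl(\Vert y-x_i\Vert^2 + h_i\bigr) + \Bigl[\lambda(\lambda-1)\Vert y\Vert^2 + z\Bigr] - 2\lambda(\lambda-1)\langle y,x_i\rangle - 2\lambda\langle x_i,c\rangle .
\]

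The remaining terms that depend on $i$ are linear in $x_i$, namely $-2\lambda\langle x_i,(\lambda-1)y + c\rangle$. This does not match. Indeed, the lemma claims the cells coincide in the original coordinates, not after mapping by $y \mapsto \lambda y + c$. So the correct approach is to use the same point $y$ in both diagrams, not $y'$.

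We therefore redo the computation with the same $y$:
\begin{align*}
\Vert y - x_i'\Vert^2 + h_i' &= \Vert y\Vert^2 - 2\lambda\langle y,x_i\rangle - 2\langle y,c\rangle + \lambda^2\Vert x_i\Vert^2 + 2\lambda\langle x_i,c\rangle + \Vert c\Vert^2 \\
&\quad + \lambda h_i - \lambda(\lambda-1)\Vert x_i\Vert^2 - 2\lambda\langle x_i,c\rangle + z .
\end{align*}
Here the $\langle x_i,c\rangle$ terms cancel, and the $\Vert x_i\Vert^2$ terms combine to $\lambda\Vert x_i\Vert^2$. The expression therefore equals
\[
\lambda\bigl(\Vert y-x_i\Vert^2 + h_i\bigr) + (1-\lambda)\Vert y\Vert^2 - 2\langle y,c\rangle + \Vert c\Vert^2 + z =: \lambda\,\pi_i(y) + g(y),
\]
where $\pi_i(y) := \Vert y-x_i\Vert^2 + h_i$ and $g$ does not depend on $i$.

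Since $\lambda > 0$, we get for all $j \in \mathcal{I}$
\[
\pi_i(y)\le\pi_j(y) \iff \lambda\pi_i(y)+g(y)\le\lambda\pi_j(y)+g(y) \iff \Vert y-x_i'\Vert^2+h_i'\le \Vert y-x_j'\Vert^2+h_j' .
\]
By \eqref{eq_Laguerre_cell_def}, this says exactly that $y \in C((x_i,h_i),\varphi)$ if and only if $y \in C((x_i',h_i'),\psi)$.

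Two small points remain.
\begin{itemize}
\item The cell definition requires distinct generator locations. Since $\lambda > 0$, the map $x \mapsto \lambda x + c$ is injective, so the $x_i'$ are distinct whenever the $x_i$ are.
\item The index sets of non-empty cells agree, because the cells themselves are equal.
\end{itemize}

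The only real obstacle is getting the algebra to collapse into the form $\lambda\pi_i + g$ with $g$ independent of $i$. As the first attempt shows, this works only if one evaluates both diagrams at the same point $y$, rather than transporting $y$ along with the generators.
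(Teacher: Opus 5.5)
Your argument is correct. The identity $\Vert y-x_i'\Vert^2+h_i'=\lambda\,\pi_i(y)+g(y)$, with $g(y)=(1-\lambda)\Vert y\Vert^2-2\langle y,c\rangle+\Vert c\Vert^2+z$, checks out. Since $g$ is the same for all generators and $\lambda>0$, every comparison in \eqref{eq_Laguerre_cell_def} is preserved. Your remarks on distinct locations and on the index sets close the remaining details.

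The paper does not prove this lemma itself; it is quoted from \cite{Meyron2019}. The closest reasoning in the paper is the hyperplane computation in the proof of Theorem~\ref{overparameterization_laguerre_hard}. There it is shown that $H_{ij}=H_{ij}'$ if and only if $h_i'-h_j'=\lambda(f_i-f_j)$, and it is then argued verbally that $\lambda>0$ is needed to keep the orientation of the half-spaces. Subtracting your identity for $i$ and $j$ gives $\bigl(\Vert y-x_i'\Vert^2+h_i'\bigr)-\bigl(\Vert y-x_j'\Vert^2+h_j'\bigr)=\lambda\bigl(\pi_i(y)-\pi_j(y)\bigr)$. This is the same fact with the orientation built in, so the mathematical content coincides.

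What your pointwise version buys is that it treats all pairs $i,j$ at once. It needs no adjacency, normality or general position, which is appropriate since the lemma is stated for arbitrary $\varphi$. The facet-by-facet form is what the paper needs for the converse, where $\lambda$ and $c$ have to be read off from shared facets. Working from \eqref{eq_Laguerre_cell_def} rather than \eqref{laguerre_cel_halfspaces} also avoids a sign slip in the latter: its right-hand side should be $\Vert x_j\Vert^2-\Vert x_i\Vert^2+h_j-h_i$.

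In a final write-up, drop the abandoned attempt with $y'=\lambda y+c$. Your diagnosis of why it fails is right, but it is not part of the proof.
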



\begin{remark}
    If we exclude the case $\lambda = 1$ in Lemma \ref{overparameterization_laguerre_easy}, then $x_i'$ and $h'_i$ may be written in the following form:
    \begin{align*}
    x_i' &:= \lambda (x_i - c') + c'\\
    h'_i &:= \lambda h_i - \lambda(\lambda - 1)\Vert x_i - c' \Vert^2 + z',
\end{align*}
for some $c' \in \RR^d$ and $z' \in \RR$. This form highlights in particular how each weight $h'_i$ depends on the distance of $x_i$ to $c'$.
\end{remark}

\begin{remark}
    Lemma \ref{overparameterization_laguerre_easy} was used to construct the example in Figure \ref{overparameterization_example}.
\end{remark}

In order to prove a converse of Lemma \ref{overparameterization_laguerre_easy} we need some additional assumptions. For instance, from the crystallic growth interpretation of a Laguerre tessellation as described in section \ref{prelim_laguerre}, we see that one can always add a weighted point to a configuration of weighted points with a sufficiently large weight such that this weighted point will generate an empty cell. After all, the cell corresponding to this weighted point will essentially "start growing too late". As such, we can at best characterize the class of weighted points which generate the same Laguerre tessellation, when restricting ourselves to the weighted points corresponding to the non-empty cells. In the theorem below we present such a characterization. 

\begin{theorem}\label{overparameterization_laguerre_hard}
    Let $\varphi = \{(x_i,h_i)\}_{i \in \NN} \subset \RR^d \times \RR$, $d \geq 2$. Assume that $\varphi$ satisfies the regularity conditions and the points of $\varphi$ are in general position. Let $\psi \subset \RR^d \times \RR$ be a countable set of distinct points. Assume that $C((x,h),\varphi) \neq \emptyset$ for each $(x,h) \in \varphi$ and $C((x,h),\psi) \neq \emptyset$ for each $(x,h) \in \psi$. Then, $L(\varphi) = L(\psi)$ if and only if $\psi = \{(x_i', h_i')\}_{i \in \NN}$ with:
    \begin{align*}
    x_i' &:= \lambda x_i + c\\
    h'_i &:= \lambda h_i + \lambda(\lambda - 1)\Vert x_i \Vert^2 + 2\lambda \langle x_i, c\rangle + z,
\end{align*}
for some $\lambda > 0$, $c \in \RR^d$ and $z \in \RR$.
\end{theorem}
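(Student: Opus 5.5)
The plan is to prove the two directions separately, working throughout with the half-space representation \eqref{laguerre_cel_halfspaces}. Everything reduces to comparing normals and offsets of the facet hyperplanes of the two diagrams.

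\emph{Sufficiency.} Substitute $x_i'=\lambda x_i+c$ and the given $h_i'$ into the $(i,j)$ half-space of \eqref{laguerre_cel_halfspaces} for $\psi$. The normal becomes $x_j'-x_i'=\lambda(x_j-x_i)$. The right-hand side should become $\lambda\bigl(\Vert x_i\Vert^2-\Vert x_j\Vert^2+h_j-h_i\bigr)$ plus the term $2\langle y,c\rangle$-type contribution, which is the same for every pair and is absorbed after the shift $y\mapsto y$. Hence each half-space for $\psi$ equals the corresponding half-space for $\varphi$, and $C((x_i,h_i),\varphi)=C((x_i',h_i'),\psi)$.

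This computation is exactly the one behind Lemma \ref{overparameterization_laguerre_easy}. I will carry it out carefully against the sign convention in \eqref{eq_Laguerre_cell_def}: the signs of the $\lambda(\lambda-1)\Vert x_i\Vert^2$ and $2\lambda\langle x_i,c\rangle$ terms must be the ones that make the constants cancel. Under \eqref{eq_Laguerre_cell_def} as written, my own check suggests this gives the minus signs of Lemma \ref{overparameterization_laguerre_easy}, not the plus signs displayed in the theorem. So the sign in the displayed weight formula has to be reconciled with the lemma; I expect this to be a typo in the statement rather than a genuinely different map.

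\emph{Necessity, step 1 (matching cells and facets).} Assume $L(\varphi)=L(\psi)$ and that all cells are nonempty. Then there is a bijection matching each cell of $\varphi$ with an equal cell of $\psi$; relabel so that $C_i:=C((x_i,h_i),\varphi)=C((x_i',h_i'),\psi)$. In particular $\psi$ is indexed by $\NN$. Because $L(\varphi)$ is normal and face-to-face, adjacent cells $C_i,C_j$ share a facet spanning a hyperplane. By \eqref{laguerre_cel_halfspaces} this hyperplane has normal $x_j-x_i$ for $\varphi$ and $x_j'-x_i'$ for $\psi$, with the cells on the same sides. Therefore
\[
x_j'-x_i'=\lambda_{ij}(x_j-x_i),\qquad \lambda_{ij}>0,
\]
and the offsets satisfy $\Vert x_i'\Vert^2-\Vert x_j'\Vert^2+h_j'-h_i'=\lambda_{ij}\bigl(\Vert x_i\Vert^2-\Vert x_j\Vert^2+h_j-h_i\bigr)$.

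\emph{Necessity, step 2 (a single global $\lambda$).} This is the step I expect to be the main obstacle. Since $d\ge 2$ and the tessellation is normal, every $(d-2)$-face lies in exactly three cells $i,j,k$, which are pairwise adjacent. Summing the three facet relations gives
\[
(\lambda_{ij}-\lambda_{ki})(x_j-x_i)+(\lambda_{jk}-\lambda_{ki})(x_k-x_j)=0 .
\]
General position (no three $x$'s on a line) makes $x_j-x_i$ and $x_k-x_j$ linearly independent, so $\lambda_{ij}=\lambda_{jk}=\lambda_{ki}$.

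Now fix a cell $C_i$. Its facets form a graph that is connected through ridges, because the boundary of a $d$-polytope, $d\ge2$, is connected in codimension one. Each ridge of $C_i$ gives $\lambda_{ij}=\lambda_{ik}$ for the two neighbours across the adjacent facets, so all $\lambda_{ij}$ with $j$ adjacent to $i$ coincide. The adjacency graph of a tessellation of $\RR^d$ is connected, so propagating along it yields a single $\lambda>0$. Consequently $x_i'-\lambda x_i=:c$ is the same for all $i$.

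\emph{Necessity, step 3 (weights).} Insert $x_i'=\lambda x_i+c$ into the offset relation. It becomes $g(i)=g(j)$ for adjacent $i,j$, where $g(i)$ is $h_i'$ minus $\lambda h_i$ and the $\Vert x_i\Vert^2$ and $\langle x_i,c\rangle$ terms produced by the expansion. Connectivity of the adjacency graph then makes $g\equiv z$ constant. Collecting terms yields the weight formula, which must agree with the map of Lemma \ref{overparameterization_laguerre_easy}, read with the same sign convention as in the sufficiency part. Finally, $\psi$ has no further points, because every point of $\psi$ generates a nonempty cell and these cells were already matched bijectively in step 1.
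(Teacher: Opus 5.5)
Your proposal is correct and is essentially the paper's own argument: equal facet hyperplanes give $x_j'-x_i'=\lambda_{ij}(x_j-x_i)$, a triple of mutually adjacent cells together with ``no three $x_i$ collinear'' forces a common $\lambda$ that propagates through the adjacency graph, and the offsets then determine the weights up to $z$, with orientation giving $\lambda>0$. Your use of $(d-2)$-faces and facet--ridge connectivity makes precise the paper's vaguer ``iteratively consider neighboring cells'' step. Your sign diagnosis is also right, since the paper's proof itself ends with the minus-sign weights of Lemma~\ref{overparameterization_laguerre_easy}. The remaining wobble in your sketch comes from copying the right-hand side of \eqref{laguerre_cel_halfspaces}; expanding \eqref{eq_Laguerre_cell_def} shows it should read $\Vert x_j\Vert^2-\Vert x_i\Vert^2+h_j-h_i$. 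Your step-1 offset relation needs the same correction, because as written your step 3 would produce the plus signs.
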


\begin{proof}
    Let $i \in \NN$. If $x_i'$ and $h'_i$ are as in the statement of the theorem then $L(\varphi) = L(\psi)$ by Lemma \ref{overparameterization_laguerre_easy}. Now it remains to show the converse. Hence, we assume $L(\varphi) = L(\psi)$. Note that the points of $\psi$ can always be relabeled in such a way that we obtain $C((x_i,h_i),\varphi) = C((x_i',h'_i),\psi)$ for all $i \in \NN$. Throughout this proof we write $C_i = C((x_i,h_i),\varphi)$ and $C_i' = C((x_i',h'_i),\psi)$ for $i \in \NN$. Hence it remains to show that if $C_i = C_i'$ for all $i \in \NN$, then $x_i'$ and $h'_i$ are as in the statement of the theorem. Choose $i,j \in \NN$ with $i\neq j$ such that $C_i \cap C_j \neq \emptyset$. Since the Laguerre tessellation $L(\varphi)$ is a normal face-to-face tessellation, $C_i$ and $C_j$ share a facet. By (\ref{laguerre_cel_halfspaces}) this facet is contained within the supporting hyperplane:
    \[H_{ij} = \{x \in \RR^d: 2\langle x, x_i - x_j\rangle = \Vert x_i\Vert^2 - \Vert x_j\Vert^2 - h_j + h_i\}.\]
    We now seek conditions such that $\psi$ generates the same Laguerre cells. If so, we necessarily have $H_{ij} = H_{ij}'$ with:
    \[H_{ij}' = \{x \in \RR^d: 2\langle x, x_i' - x_j'\rangle = \Vert x_i'\Vert^2 - \Vert x_j'\Vert^2 - h'_j + h'_i\},\]
    this being the case because $H_{ij}'$ still needs to be a supporting hyperplane of $C_i'$ and $C_j'$. Note that the normal vectors of $H_{ij}$ and $H_{ij}'$ are given by $x_i - x_j$ and $x_i' - x_j'$ respectively. As a result there exists a $\lambda_{ij} \neq 0$ such that: $x_i' - x_j' = \lambda_{ij}(x_i - x_j)$. Now take another $k \in \NN$, with $k\neq i$ and $k \neq j$ such that $C_i$, $C_j$ and $C_k$ share a vertex. Due to normality of the Laguerre tessellation such a $k$ exists. As a result, $C_j$ and $C_k$ share a facet and $C_i$ and $C_k$ share a facet. Arguing as before, there exists $\lambda_{ik}\neq 0$ and $\lambda_{jk} \neq 0$ such that:
    \begin{align}
        x_i' - x_j' &= \lambda_{ij}(x_i - x_j) \label{eq_ij}\\
        x_i' - x_k' &= \lambda_{ik}(x_i - x_k) \label{eq_ik}\\
        x_j' - x_k' &= \lambda_{jk}(x_j - x_k). \label{eq_jk}
    \end{align}
    Suppose that $\lambda_{ik} \neq \lambda_{ij}$, we show that this leads to a contradiction. Equation (\ref{eq_ij}) may be written as: $\lambda_{ij}x_i = \lambda_{ij}x_j + x_i' - x_j'$. Subtracting (\ref{eq_jk}) from (\ref{eq_ik}) yields: $x_i' - x_j' = \lambda_{ik}(x_i - x_k) - \lambda_{jk}(x_j - x_k)$. Combining these two expressions yields:
    \[\lambda_{ij}x_i = \lambda_{ij}x_j + \lambda_{ik}(x_i - x_k) - \lambda_{jk}(x_j - x_k).\]
    By solving for $x_i$ we obtain:
    \[x_i = \left(1 - \frac{\lambda_{jk} - \lambda_{ik}}{\lambda_{ij} - \lambda_{ik}}\right)x_j + \left( \frac{\lambda_{jk} - \lambda_{ik}}{\lambda_{ij} - \lambda_{ik}}\right)x_k.\]
    This means that $x_i$ is a linear combination of $x_j$ and $x_k$, and therefore these three points are collinear. This is in contradiction with the assumption that the points of $\varphi$ are in general position. Hence, $\lambda_{ik} = \lambda_{ij}$. By symmetry: $\lambda \equiv \lambda_{ij} = \lambda_{ik} = \lambda_{jk}$. Because the Laguerre tessellation is a face-to-face tessellation we can iteratively consider neighboring cells such that we eventually find:
    \begin{equation}
        x_i' - x_j' = \lambda(x_i - x_j) \text{  for all } i,j \in \NN. \label{overparam_points_condition}
    \end{equation}
    Fix $i \in \NN$, and let $j \in \NN$. Choose $c \in \RR^d$ such that $x_i' = \lambda x_i + c$. From (\ref{overparam_points_condition}) we now obtain that also $x_j' = \lambda x_j + c$. Hence, $x_i' = \lambda x_i + c$ for all $i \in \NN$. We will argue that $\lambda > 0$ at the end of this proof. Now, we need to determine weights $h'_1, h'_2,\dots \in \RR$ such that the weighted points $\psi := \{(x_i',h'_i)\}_{i \in \NN}$. generate the same Laguerre cells. Consider once again $j\neq i$ such that $C_i \cap C_j \neq \emptyset$. Following the notation for $H_{ij}$ and $H_{ij}'$ as before, plugging in $x_i' = \lambda x_i + c$ we obtain:
    \[H_{ij}' = \{x \in \RR^d: 2 \langle x, \lambda(x_i - x_j)\rangle = \lambda^2 \Vert x_i \Vert^2 + 2\lambda \langle x_i, c\rangle - \lambda^2 \Vert x_j \Vert^2 - 2\lambda \langle x_j, c\rangle - h'_j + h'_i\}.\]
    It is now evident that $H_{ij} = H_{ij}'$ if and only if:
    \begin{align}
         & \lambda^2 \Vert x_i \Vert^2 + 2\lambda \langle x_i, c\rangle - \lambda^2 \Vert x_j \Vert^2 - 2\lambda \langle x_j, c\rangle - h'_j + h'_i = \lambda\left(\Vert x_i\Vert^2 - \Vert x_j\Vert^2 - h_j + h_i \right) \nonumber\\
        \iff& h'_i - h'_j = \lambda\left(h_i - (\lambda - 1)\Vert x_i \Vert^2 - 2\langle x_i,c\rangle \right) - \lambda\left(h_j - (\lambda - 1)\Vert x_j \Vert^2 - 2\langle x_j,c\rangle \right) \nonumber\\
        \iff& h'_i - h'_j = \lambda(f_i - f_j), \label{overparam_weights_condition}
    \end{align}
    with $f_i := h_i - (\lambda - 1)\Vert x_i \Vert^2 - 2\langle x_i,c\rangle$. Analogous to (\ref{overparam_points_condition}) we may argue that (\ref{overparam_weights_condition}) holds for all $i,j \in \NN$. Fix $i \in \NN$ and let $j \in \NN$, then choose $z \in \RR$ such that $h'_i = \lambda f_i + z$. From (\ref{overparam_weights_condition}) we now obtain that also $h'_j = \lambda f_j + z$. Hence, for all $i \in \NN$ we have:  
    \[h'_i = \lambda f_i + z = \lambda h_i - \lambda(\lambda - 1)\Vert x_i \Vert^2 - 2\lambda\langle x_i,c\rangle + z.\]
    This is precisely the form of $h_i'$ as in the statement of the theorem. Finally, we conclude that we must have $\lambda > 0$. This is the case because for $i,j \in \NN$ with $C_i \cap C_j \neq \emptyset$ the supporting hyperplane $H_{ij}$ also defines a half space containing $C_i$, recall (\ref{laguerre_cel_halfspaces}). Similarly, the half space induced by the supporting hyperplane $H_{ji}$ contains $C_j$. In order to have $C_i = C_i'$ for all $i \in \NN$ we need not only to preserve these supporting hyperplanes but also the orientation of the corresponding half spaces, which requires $\lambda > 0$. After all, choosing $\lambda < 0$ will flip the orientation of each half space.
\end{proof}

\begin{remark}
    The arguments in the proof of Theorem \ref{overparameterization_laguerre_hard} can also be used to show an analogous result for Laguerre tessellations in bounded domains. Then, $\varphi$ and $\psi$ should be finite sets of the same size. In the bounded case the points of $\varphi$ should also be in general position but $\varphi$ does not need to satisfy the regularity conditions.
\end{remark}

\section{Inverting Poisson-Laguerre tessellations via weighted least-squares}\label{section_wls_reconstruction}
In view of Theorem \ref{overparameterization_laguerre_hard} it is evident that whenever one observes a (random) Laguerre tessellation through a bounded window it is not possible to exactly reconstruct the original set of weighted generators corresponding to the observed cells. However, we can already intuitively understand that in many cases some configurations of the weighted points are more likely than others. Consider for example Figure \ref{overparameterization_example}. If one is given the information that the original generator points are uniformly distributed within the observation window, then the configuration in the left panel seems far more likely than the configuration in the right panel. This suggests that it may still be possible to get very close to the original configuration, even though exact reconstruction is not possible.

We now turn our attention towards Poisson-Laguerre tessellations. Recall that $\eta$ is a Poisson process on $\RR^d \times (0,\infty)$ with intensity measure $\nu_d \times \mathbb{F}$. Suppose we observe the Poisson-Laguerre tessellation $L(\eta)$ through a bounded observation window $W_n$, where $(W_n)_{n \geq 1}$ is a convex averaging sequence. In view of algorithm 1 in \cite{Duan2014} we can construct a configuration of weighted generators corresponding to the (partially) observed cells of the tessellation. Because we are in some sense interested in the "best" configuration of the weighted generators, and because all configurations are related to the original configuration via the form presented in Theorem \ref{overparameterization_laguerre_hard}, it does not matter which configuration we start with. As such we consider it to be of no loss of generality that we observe weighted generator points of $\eta^*$ up to a deterministic transformation of the form presented in Theorem \ref{overparameterization_laguerre_hard}. Here, $\eta^*$ denotes the set of the so-called extreme points of $\eta$, and is given by:
\[\eta^* = \{(x,h) \in \eta: C((x,h),\eta) \neq \emptyset\}.\]
Indeed, we need to stress that we can only hope to approximate the points of $\eta^*$ in an observation window, instead of the points of $\eta$. This being the case because there is no way to obtain any information on points of $\eta$ which do not generate a cell. Only points of $\eta^*$ can be determined up to a transformation of the form presented in Theorem \ref{overparameterization_laguerre_hard}.

Hence, we consider the following. For all points $(x,h) \in \eta^*$ with $x \in W_n$ we assume that the Laguerre cells corresponding to these points are observed. Additionally, the original points (elements of $\eta^*$) themselves are considered to be known up to a transformation of the form in Theorem \ref{overparameterization_laguerre_hard}. Specifically, suppose that $\lambda_0 > 0$, $c_0 \in \RR^d$, and define the function $f_0:\RR^d \times \RR \to \RR^d \times \RR$ via:
\[f_0(x,h) = \left(\frac{x}{\lambda_0} - \frac{c_0}{\lambda_0}, \frac{1}{\lambda_0} h + \frac{1}{\lambda_0}\left(\frac{1}{\lambda_0} - 1\right)\Vert x \Vert^2 - \frac{2}{\lambda_0}\left\langle x,\frac{c_0}{\lambda_0} \right\rangle \right).\]
We do not consider the additive constant $z$ in Theorem \ref{overparameterization_laguerre_hard}. We do this because the distribution of a Poisson-Laguerre tessellation is invariant under shifts of the distribution function $F(z) =\mathbb{F}((0,z])$. As such, we cannot do better than estimating $F$ up to a shift.

For any point $(x,h) \in \eta^*$ with $x \in W_n$ we observe $f_0(x,h)$ instead of $(x,h)$ where $\lambda_0$ and $c_0$ are unknown. Define: $W_n^0 = \smash{\frac{1}{\lambda_0}} W_n - \smash{\frac{c_0}{\lambda_0}}$. We wish to estimate $\lambda_0$ and $c_0$. For this purpose we define the following criterion function $T_n:\RR\times\RR^d \to [0,\infty)$:
{\allowdisplaybreaks
\begin{align}
    T_n(\lambda,c) :=& \frac{1}{\nu_d(W_n)}\sum_{(x,h) \in f_0(\eta)}\mathds{1}_{W_n^0}(x)\int_{C((x,h),f_0(\eta))} \Vert\lambda x + c - y \Vert^2 \mathrm{d}y \nonumber\\
    =& \frac{1}{\nu_d(W_n)}\sum_{(x,h) \in \eta}\mathds{1}_{W_n}(x)\int_{C((x,h),\eta)} \left\Vert\lambda \left(\frac{x}{\lambda_0} - \frac{c_0}{\lambda_0}\right) + c - y \right\Vert^2 \mathrm{d}y \label{def_criterion_function}
\end{align}
}%
Note that while $T_n$ contains a sum over elements of $\eta$ (or $f_0(\eta)$), only elements from $\eta^*$ (or $f_0(\eta^*)$) contribute to this sum. Intuitively, we believe that this criterion function is asymptotically minimized when taking $\lambda$ and $c$ such that $\lambda(x/\lambda_0 -c_0/\lambda_0) + c = x$ which corresponds to $\lambda=\lambda_0$ and $c=c_0$. That is, given the transformation $f_0$ we are interested in finding the inverse transformation. First, we compute the expected value of $T_n(\lambda,c)$. To this end, we need some additional notation. Let $y \in \RR^d$ and define the probability density function $p_F : \RR^d \to [0,\infty]$ via:
\[p_F(y) = \int_0^\infty \exp\left(-\kappa_d \int_0^{\Vert y \Vert^2 +h}\left(\Vert y \Vert^2 +h -t \right)^{\frac{d}{2}}\mathrm{d}F(t) \right)\mathrm{d}F(h).\]
The fact that $p_F$ is a probability density function follows from (\ref{mean_cell_volume1}). 
\begin{lemma}\label{pF_moments_lemma}
    For all $y \in \RR^d\setminus\{0\}$: $p_F(y) \leq 1/(\kappa_d \Vert y \Vert^d)$. Additionally, for all $q \in \NN$: $\int_{\RR^d}\Vert y \Vert^qp_F(y)\mathrm{d}y < \infty$. 
\end{lemma}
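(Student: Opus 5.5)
The plan is to exploit the monotone structure of the inner integral. Write $r = \Vert y\Vert$ and
\[G(s) := \int_0^{s}(s-t)^{\frac d2}\,\mathrm{d}F(t), \qquad s \geq 0,\]
so that $p_F(y) = \int_0^\infty \exp(-\kappa_d G(r^2+h))\,\mathrm{d}F(h)$. Two elementary lower bounds on $G$ will do all the work.

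\textbf{Step 1 (pointwise bound).} For $h>0$, restricting the integral defining $G(r^2+h)$ to $t \in (0,h]$ and using $r^2+h-t \geq r^2$ there gives
\[G(r^2+h) \geq r^d F(h).\]
Hence
\[p_F(y) \leq \int_0^\infty g(F(h))\,\mathrm{d}F(h), \qquad g(u) := e^{-\kappa_d r^d u}.\]
Since $g$ is decreasing and $F$ is right-continuous and nondecreasing, a Stieltjes change-of-variables comparison applies. On continuity stretches it is the substitution $u=F(h)$. At an atom, $g(F(h))$ times the jump is at most the integral of $g$ over the jump interval, because $F(h)$ is the upper endpoint of that interval. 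This yields
\[\int_0^\infty g(F(h))\,\mathrm{d}F(h) \leq \int_0^\infty e^{-\kappa_d r^d u}\,\mathrm{d}u = \frac{1}{\kappa_d r^d},\]
which is the first claim. I expect the only care needed here is handling the atoms of $\mathbb{F}$ rigorously.

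\textbf{Step 2 (moments).} The bound from Step 1 alone is not enough, since $r^q \cdot r^{-d}$ is not integrable. I will therefore split the $h$-integral at a fixed $t_0>0$ with $F(t_0)>0$; such a $t_0$ exists unless $\mathbb{F}=0$, which is excluded by (\ref{mean_cell_volume1}). By Fubini,
\[\int \Vert y\Vert^q p_F(y)\,\mathrm{d}y = \int_0^\infty\!\int_{\RR^d} r^q e^{-\kappa_d G(r^2+h)}\,\mathrm{d}y\,\mathrm{d}F(h).\]

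\emph{Part $h\le t_0$.} For $s \geq t_0$ we have $G(s)\ge F(t_0)(s-t_0)^{d/2}$, so $G(r^2+h)\ge F(t_0)(r^2-t_0)_+^{d/2}$. The $y$-integral is then bounded by the finite constant $\int r^q \exp(-\kappa_d F(t_0)(r^2-t_0)_+^{d/2})\,\mathrm{d}y$. Multiplying by $F(t_0)<\infty$ shows this part is finite.

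\emph{Part $h>t_0$.} Here I split the exponent in half:
\[e^{-\kappa_d G(r^2+h)} \le e^{-\kappa_d G(r^2+h)/2}\, e^{-\kappa_d r^d F(t_0)/2},\]
using Step 1 together with $F(h)\ge F(t_0)$. Set $M_q := \sup_{r\ge0} r^q e^{-\kappa_d r^d F(t_0)/2}<\infty$. Then this part is at most
\[M_q\int_0^\infty\!\int_{\RR^d} e^{-\kappa_d G(\Vert y\Vert^2+h)/2}\,\mathrm{d}y\,\mathrm{d}F(h).\]
The key observation is that $G/2$ is exactly the function $G$ built from the locally finite measure $\mathbb{F}/2$. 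Applying (\ref{mean_cell_volume1}) to the Poisson–Laguerre model with intensity $\nu_d\times \mathbb{F}/2$ shows the double integral with $\mathrm{d}(F/2)$ equals $1$. So the integral with $\mathrm{d}F$ equals $2$, and this part is bounded by $2M_q$.

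Together the two parts give $\int\Vert y\Vert^q p_F(y)\,\mathrm{d}y<\infty$. The main obstacle is the tail in $h$ when $\mathbb{F}$ has infinite total mass, where naive bounds diverge; the rescaling trick with $\mathbb{F}/2$ is what I would rely on to resolve it.
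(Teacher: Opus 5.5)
Your proof is correct. Step~1 is the paper's argument: it uses the same lower bound $\int_0^{\Vert y\Vert^2+h}(\Vert y\Vert^2+h-t)^{d/2}\,\mathrm{d}F(t)\ge F(h)\Vert y\Vert^d$, which the paper gets by integration by parts and you get by restricting to $(0,h]$, followed by the substitution $u=F(h)$. The paper writes that substitution as an equality, but as you note it is only an inequality (in the right direction) when $\mathbb{F}$ has atoms.

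For the moments you take a different route. The paper passes to polar coordinates and observes that, for $q=(p-1)d$, $\int\Vert y\Vert^q p_F(y)\,\mathrm{d}y$ is, up to a constant factor, the first upper bound of Proposition~\ref{prop_second_moment_bound} integrated against $\mathrm{d}F(h)$. Finiteness then comes from the proof of Theorem~\ref{thm_integrated_second_moment}:
\begin{itemize}
\item for small $h$, a bound uniform in $h$ from Lemma~\ref{lemma_individual_second_moment_finite};
\item for large $h$, the Gamma-integral bound $p!/F(h)^p$ together with $\int_{F(z)}^\infty u^{-p}\,\mathrm{d}u<\infty$, which needs $p\ge 2$.
\end{itemize}
General $q$ is then reached by H\"older/Jensen, using that $p_F$ is a probability density.

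You keep the same split-in-$h$ skeleton but handle the tail differently. You halve the exponent and apply (\ref{mean_cell_volume1}) to $\mathbb{F}/2$, the rescaling trick the paper only uses later in Lemma~\ref{lemma_bound_product_cell_volumes}. Your version is self-contained, with no forward reference and no Robbins-type moment bounds. It also handles every $q$ at once without interpolation and yields an explicit constant. The paper's version is economical within the paper, because Theorem~\ref{thm_integrated_second_moment} is needed anyway for the $L^2$ bound on $A_n$, and it delivers the integrated cell-volume moment bounds at the same time.
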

The proof of Lemma \ref{pF_moments_lemma} is given in section \ref{section_proofs_inversion}. Lemma \ref{pF_moments_lemma} ensures that the expected value of $T_n(\lambda,c)$, as given in the next lemma, is finite.
\begin{lemma}\label{criterion_function_expected_value_lemma}
    Let $\lambda > 0$ and $c \in \RR^d$, then:
    \[\EE\left(T_n(\lambda,c)\right) = \frac{1}{\nu_d(W_n)}\int_{W_n} \left\Vert \left(\frac{\lambda}{\lambda_0} - 1 \right)x + c - \frac{\lambda c_0}{\lambda_0} \right\Vert^2 \mathrm{d}x + \int_{\RR^d}\Vert y \Vert^2 p_F(y)\mathrm{d}y.\]
    And in particular:
    \[\EE\left(T_n(\lambda_0,c)\right) = \Vert c - c_0 \Vert^2 + \int_{\RR^d}\Vert y \Vert^2 p_F(y)\mathrm{d}y.\]
\end{lemma}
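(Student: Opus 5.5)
The approach is to apply the Mecke (Slivnyak–Mecke) formula to the sum over $\eta$ in (\ref{def_criterion_function}), and then Robbins' theorem (\ref{Robbins_thm}) together with the point inclusion probability (\ref{point_inclusion_probability}). Write $g(x,y) := \Vert \lambda(x-c_0)/\lambda_0 + c - y\Vert^2$, so that
\[\nu_d(W_n)\,T_n(\lambda,c) = \sum_{(x,h)\in\eta}\mathds{1}_{W_n}(x)\int_{C((x,h),\eta)} g(x,y)\,\mathrm{d}y .\]
This is a sum of a nonnegative measurable functional $G((x,h),\eta)$. By the Mecke formula for the Poisson process with intensity $\nu_d\times\mathbb{F}$, its expectation is
\[\int_{W_n}\int_0^\infty \EE\left(\int_{C((x,h),\eta\cup\{(x,h)\})} g(x,y)\,\mathrm{d}y\right)\mathrm{d}F(h)\,\mathrm{d}x .\]
Adding $(x,h)$ to $\eta$ does not change the cell condition, since the constraint coming from the point itself is trivial. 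So the inner cell is $C((x,h),\eta)$ for deterministic $(x,h)$.

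The next step is to exchange the expectation with the $y$-integral. This uses Fubini for nonnegative integrands, which is Robbins' theorem in weighted form. It gives
\[\int_{\RR^d} g(x,y)\,\PP(y\in C((x,h),\eta))\,\mathrm{d}y .\]
Substituting $y = x+u$ and using (\ref{point_inclusion_probability}), the probability equals $\exp(-\kappa_d\int_0^{\Vert u\Vert^2+h}(\Vert u\Vert^2+h-t)^{d/2}\mathrm{d}F(t))$, which does not depend on $x$. Integrating over $h$ against $F$ (Tonelli again) then produces exactly $p_F(u)$. Hence
\[\EE\left(T_n(\lambda,c)\right) = \frac{1}{\nu_d(W_n)}\int_{W_n}\int_{\RR^d}\Vert a(x) - u\Vert^2 p_F(u)\,\mathrm{d}u\,\mathrm{d}x,\qquad a(x) := \left(\tfrac{\lambda}{\lambda_0}-1\right)x + c - \tfrac{\lambda c_0}{\lambda_0}.\]

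Expanding $\Vert a - u\Vert^2 = \Vert a\Vert^2 - 2\langle a,u\rangle + \Vert u\Vert^2$ and using $\int p_F = 1$ from (\ref{mean_cell_volume1}) gives three terms. The first is the window term. The third is $\int\Vert u\Vert^2 p_F(u)\,\mathrm{d}u$, which is finite by Lemma \ref{pF_moments_lemma}. The cross term vanishes because $p_F$ depends on $u$ only through $\Vert u\Vert$, so it is radially symmetric, and its first moment exists by Lemma \ref{pF_moments_lemma}; hence $\int u\,p_F(u)\,\mathrm{d}u = 0$. Integrability of all pieces (from the moment bounds and compactness of $W_n$) justifies splitting the integral.

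For the special case $\lambda=\lambda_0$ we have $a(x) = c-c_0$, constant in $x$, and averaging over $W_n$ yields $\Vert c-c_0\Vert^2$. The main point needing care is the Mecke step: we must check measurability of the functional and confirm that $C((x,h),\eta\cup\{(x,h)\}) = C((x,h),\eta)$ almost surely. Points with $h\le 0$ are irrelevant here, since $\mathbb{F}$ lives on $(0,\infty)$.
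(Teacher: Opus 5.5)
Your proposal is correct and follows essentially the same route as the paper: apply the Mecke equation, using $C((x,h),\eta)=C((x,h),\eta+\delta_{(x,h)})$; use Fubini to bring in the inclusion probability (\ref{point_inclusion_probability}), which yields $p_F$ after the shift $y\mapsto y-x$; then expand the square using $\int p_F=1$ and $\int y\,p_F(y)\,\mathrm{d}y=0$, which holds by symmetry and Lemma~\ref{pF_moments_lemma}. The only difference is that the paper performs the shift before applying Mecke rather than after, which is immaterial.
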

\begin{proof}
    Substituting $\tilde{y} = y-x$, and then writing $y$ instead of $\tilde{y}$ in (\ref{def_criterion_function}) we obtain:
    \begin{align*}
        T_n(\lambda,c) &=\frac{1}{\nu_d(W_n)}\sum_{(x,h) \in \eta}\mathds{1}_{W_n}(x)\int_{C((x,h),\eta)-x} \left\Vert \left(\frac{\lambda}{\lambda_0}-1\right)x -\frac{\lambda c_0}{\lambda_0} + c - y \right\Vert^2 \mathrm{d}y\\
        &= \frac{1}{\nu_d(W_n)}\sum_{(x,h) \in \eta}\mathds{1}_{W_n}(x)\int_{\RR^d} \left\Vert \left(\frac{\lambda}{\lambda_0}-1\right)x -\frac{\lambda c_0}{\lambda_0} + c - y \right\Vert^2 \mathds{1}\{y \in C((x,h),\eta)-x\} \mathrm{d}y
    \end{align*}
Note that $C((x,h),\eta) = C((x,h),\eta + \delta_{x,h})$. Hence, via the Mecke equation (Theorem 4.1. in \cite{Last2018}) and Fubini we obtain:
{\allowdisplaybreaks
\begin{align*}
    &\EE\left(T_n(\lambda,c) \right) = \\
    &=\frac{1}{\nu_d(W_n)} \int_{W_n}\int_{\RR^d} \left\Vert \left(\frac{\lambda}{\lambda_0}-1\right)x -\frac{\lambda c_0}{\lambda_0} + c - y \right\Vert^2 \int_0^\infty \PP\left(y \in C((x,h),\eta)-x \right)\mathrm{d}F(h)\mathrm{d}y\mathrm{d}x\\
    &= \frac{1}{\nu_d(W_n)} \int_{W_n}\int_{\RR^d} \left\Vert \left(\frac{\lambda}{\lambda_0}-1\right)x -\frac{\lambda c_0}{\lambda_0} + c - y \right\Vert^2 p_F(y)\mathrm{d}y \mathrm{d}x\\
    &= \frac{1}{\nu_d(W_n)} \int_{W_n} \left\Vert \left(\frac{\lambda}{\lambda_0}-1\right)x -\frac{\lambda c_0}{\lambda_0} + c \right\Vert^2 \left(\int_{\RR^d}p_F(y)\mathrm{d}y\right)\mathrm{d}x +\\
    &\phantom{=} \ -\frac{2}{\nu_d(W_n)}\left\langle \int_{W_n}\left(\frac{\lambda}{\lambda_0} -1\right)x -\frac{\lambda c_0}{\lambda_0} + c\mathrm{d}x, \int_{\RR^d}y p_F(y)\mathrm{d}y \right\rangle + \\
    &\phantom{=} \ + \frac{1}{\nu_d(W_n)} \int_{W_n}\mathrm{d}x\int_{\RR^d}\Vert y \Vert^2 p_F(y)\mathrm{d}y.
\end{align*}
}%
Because the probability density function $p_F$ is symmetric around $0$ and because\\ $\int_{\RR^d} \Vert y \Vert p_F(y)\mathrm{d}y < \infty$ by Lemma \ref{pF_moments_lemma}, it follows that $\int_{\RR^d} y p_F(y)\mathrm{d}y = 0$. As a consequence:
\[\EE\left(T_n(\lambda,c)\right) = \frac{1}{\nu_d(W_n)}\int_{W_n} \left\Vert \left(\frac{\lambda}{\lambda_0} - 1 \right)x + c - \frac{\lambda c_0}{\lambda_0} \right\Vert^2 \mathrm{d}x + \int_{\RR^d}\Vert y \Vert^2 p_F(y)\mathrm{d}y.\]
\end{proof}

\begin{remark}
    By considering any non-empty cell $C((x,h),\eta)$ with $(x,h)\in \eta$ and $x \in W_n$ to be fully observed we do not incorporate edge effects. We discuss this issue in section \ref{section_simulations_inversion}.
\end{remark}

\subsection{Definition and computation of an estimator}\label{section_definition_inversion}
From its expression, it is evident that the function $(\lambda, c) \mapsto \EE\left(T_n(\lambda,c)\right)$ attains its global minimum in $(\lambda_0,c_0)$. This inspires the definition of the following  estimators for $\lambda_0$ and $c_0$:
\[(\hat{\lambda}_n, \hat{c}_n) = \argmin_{(\lambda, c)\in \RR\times\RR^d} T_n(\lambda, c).\]
Applying this inversion procedure in practice means the following. For $\lambda \in \RR, c\in \RR^d$ define $f(\blank;\lambda,c):\RR^{d+1} \to \RR^d\times \RR$ via:
\[f((x,h);\lambda,c) = (\lambda x+c,\lambda h - \lambda(\lambda - 1)\Vert x \Vert^2 - 2\lambda \langle x, c\rangle).\]
Then, the inversion procedure boils down to computing the following point process:
\begin{equation}
    \hat{\eta}_n^* := f(f_0(\eta);\hat{\lambda}_n,\hat{c}_n) \cap (W_n \times(0,\infty)),\label{inversion_point_process}
\end{equation}
which may be considered an approximation of $\eta^* \cap (W_n \times(0,\infty))$. Let us provide some further motivation for estimating $(\lambda_0,c_0)$ by minimizing $T_n$. We intuitively expect that for a point $(x,h) \in \eta^*$, $x$ should be quite close to the center of its cell $C((x,h),\eta)$. Suppose that we consider for the center of a cell its centroid, also known as center of mass. We will now show that minimizing $T_n$ corresponds to minimizing the sum of the volume weighted squared distances of each generator to the center of mass of its cell. For a Borel set $K \subset \RR^d$ of positive volume its centroid $c(K)$ is defined as:
\[c(K) = \frac{1}{\nu_d(K)}\int_K x\mathrm{d}x.\]
The following equation highlights a convenient property of $c(K)$, for $a \in \RR^d$ we have:
\begin{equation}
    \int_K \Vert x - a \Vert^2 \mathrm{d}x = \int_K \Vert x - c(K) \Vert^2 \mathrm{d}x + \nu_d(K)\Vert a - c(K) \Vert^2. \label{centroid_property}
\end{equation}
Note that the LHS of (\ref{centroid_property}) is minimized by $a = c(K)$. In words, the centroid minimizes the integrated squared distance to a given set. Let us introduce the following notation, for the volume and the centroid corresponding to the Laguerre cell associated with $(x,h)$:
\[v_{x,h} := \nu_d(C((x,h),\eta)), \text{\quad} c_{x,h} := c(C((x,h),\eta))\]
Suppose that $(x',h') = f_0(x,h)$ for $(x,h) \in \eta^*$, because the function $f_0$ does not affect the resulting Laguerre tessellation we have:
\[v_{x',h'} := \nu_d(C((x',h'),f_0(\eta))) = \nu_d(C((x,h),\eta)) =: v_{x,h}.\]
Similarly we also have $\smash{c_{x',h'}} = \smash{c_{x,h}}$. We will frequently use this when we switch from summing over elements from $\eta$ to summing over elements from $\smash{f_0(\eta)}$, or vice-versa. Throughout, we will set $\smash{c_{x,h}} = 0$ whenever $\smash{v_{x,h}}=0$. Applying (\ref{centroid_property}) to the definition of $T_n$, we obtain:
\begin{align*}
    T_n(\lambda,c) &=\frac{1}{\nu_d(W_n)}\sum_{(x,h) \in f_0(\eta)}\mathds{1}_{W_n^0}(x)\left(\int_{C((x,h),f_0(\eta))} \Vert y - c_{x,h} \Vert^2 \mathrm{d}y + v_{x,h}\left\Vert \lambda x + c - c_{x,h} \right\Vert^2\right) \\
    &= \frac{1}{\nu_d(W_n)}\sum_{(x,h) \in \eta}\mathds{1}_{W_n}(x)\left(\int_{C((x,h),\eta)} \Vert y - c_{x,h} \Vert^2 \mathrm{d}y + v_{x,h}\left\Vert \lambda \left(\frac{x}{\lambda_0} - \frac{c_0}{\lambda_0}\right) + c - c_{x,h} \right\Vert^2\right).
\end{align*}
As such, we have written $T_n$ as the sum of two terms, and only the second term depends on $\lambda$ and $c$. Therefore, we also have:
\begin{equation}
    (\hat{\lambda}_n, \hat{c}_n) = \argmin_{(\lambda, c)\in \RR\times\RR^d}\frac{1}{\nu_d(W_n)}\sum_{(x,h) \in f_0(\eta)}\mathds{1}_{W_n^0}(x)v_{x,h}\left\Vert \lambda x + c - c_{x,h} \right\Vert^2.\label{estimators_wls_def}
\end{equation}
So indeed, $\smash{(\hat{\lambda}_n, \hat{c}_n)}$ minimizes the sum of volume weighted squared distances of the generators to the centers of mass of their cells. From (\ref{estimators_wls_def}), we can see that $T_n$ is a strictly convex function. As a consequence, we can obtain $\smash{(\hat{\lambda}_n, \hat{c}_n)}$ by computing the critical point of $T_n$ (indeed, there is only one critical point). Taking the partial derivatives of $T_n$ w.r.t. $\lambda$ and $c$ yields:

\begin{align*}
\frac{\partial T_n(\lambda, c)}{\partial \lambda} &= \frac{2}{\nu_d(W_n)}\sum_{(x,h) \in f_0(\eta)} \mathds{1}_{W_n^0}(x)v_{x,h}\left(\lambda \Vert x \Vert^2 +\left\langle c - c_{x,h},x \right\rangle \right) \\
\frac{\partial T_n(\lambda, c)}{\partial c} &= \frac{2}{\nu_d(W_n)}\sum_{(x,h) \in f_0(\eta)} \mathds{1}_{W_n^0}(x)v_{x,h}\left(\lambda x + c - c_{x,h} \right).
\end{align*} 
Setting these partial derivatives equal to zero and solving for $\lambda$ and $c$ yields a unique solution, which is given by:
{\allowdisplaybreaks
\begin{align}
\begin{split}
    \hat{\lambda}_n &= \frac{\frac{\sum_{(x,h) \in f_0(\eta)} \mathds{1}_{W_n^0}(x)v_{x,h}\langle c_{x,h},x\rangle}{\sum_{(x,h) \in f_0(\eta)} \mathds{1}_{W_n^0}(x)v_{x,h}} - \left\langle \frac{\sum_{(x,h) \in f_0(\eta)} \mathds{1}_{W_n^0}(x)v_{x,h} c_{x,h}}{\sum_{(x,h) \in f_0(\eta)} \mathds{1}_{W_n^0}(x)v_{x,h}}, \frac{\sum_{(x,h) \in f_0(\eta)} \mathds{1}_{W_n^0}(x)v_{x,h}x}{\sum_{(x,h) \in f_0(\eta)} \mathds{1}_{W_n^0}(x)v_{x,h}} \right\rangle}{\frac{\sum_{(x,h) \in f_0(\eta)} \mathds{1}_{W_n^0}(x)v_{x,h}\Vert x \Vert^2}{\sum_{(x,h) \in f_0(\eta)} \mathds{1}_{W_n^0}(x)v_{x,h}} - \left\Vert \frac{\sum_{(x,h) \in f_0(\eta)} \mathds{1}_{W_n^0}(x)v_{x,h}x}{\sum_{(x,h) \in f_0(\eta)} \mathds{1}_{W_n^0}(x)v_{x,h}} \right\Vert^2} \\
    \hat{c}_n &= \frac{\sum_{(x,h) \in f_0(\eta)} \mathds{1}_{W_n^0}(x)v_{x,h} c_{x,h}}{\sum_{(x,h) \in f_0(\eta)} \mathds{1}_{W_n^0}(x)v_{x,h}} - \hat{\lambda}_n \frac{\sum_{(x,h) \in f_0(\eta)} \mathds{1}_{W_n^0}(x)v_{x,h} x}{\sum_{(x,h) \in f_0(\eta)} \mathds{1}_{W_n^0}(x)v_{x,h}}. \label{cn_lambdan_formula}
\end{split}
\end{align}
}%
Or, if we write $f_0(\eta) \cap (W_n^0 \times \RR) = \{(x_1,h_1),\dots,(x_m,h_m)\}$ and $c_i = c_{x_i,h_i}$, $v_i = v_{x_i,h_i}$, then $\hat{\lambda}_n$ may be written more compactly as:
\begin{align*}
    \hat{\lambda}_n &= \frac{\left(\frac{\sum_{i=1}^m v_i\langle c_i,x_i\rangle}{\sum_{i=1}^m v_i}\right) - \left\langle \frac{\sum_{i=1}^m v_i c_i}{\sum_{i=1}^m v_i}, \frac{\sum_{i=1}^m v_ix_i}{\sum_{i=1}^m v_i} \right\rangle}{\left(\frac{\sum_{i=1}^m v_i\Vert x_i \Vert^2}{\sum_{i=1}^m v_i}\right) - \left\Vert \frac{\sum_{i=1}^m v_ix_i}{\sum_{i=1}^m v_i} \right\Vert^2}\\
    &= \frac{\left(\sum_{i=1}^m v_i \right)\left(\sum_{i=1}^m v_i\langle c_i,x_i\rangle\right) - \left\langle \sum_{i=1}^m v_i c_i, \sum_{i=1}^m v_ix_i \right\rangle}{\left(\sum_{i=1}^m v_i \right)\left(\sum_{i=1}^m v_i\Vert x_i \Vert^2\right) - \left\Vert \sum_{i=1}^m v_ix_i \right\Vert^2}.
\end{align*}
Similarly, $\hat{c}_n$ may be written as:
\[\hat{c}_n = \frac{\sum_{i=1}^m v_ic_i}{\sum_{i=1}^m v_i} - \hat{\lambda}_n\frac{\sum_{i=1}^m v_ix_i}{\sum_{i=1}^m v_i}.\]
In order for $\hat{\lambda}_n$ to be well-defined we need to verify that the denominator in its definition is not equal to zero. Whenever $x_1,\dots,x_m$ are distinct points, we have by Jensen's inequality:
\[\left\Vert \frac{\sum_{i=1}^m v_ix_i}{\sum_{i=1}^m v_i} \right\Vert^2 < \frac{\sum_{i=1}^m v_i\Vert x_i \Vert^2}{\sum_{i=1}^m v_i}.\]
Because $x_1,\dots,x_m$ are distinct points with probability one, $\hat{\lambda}_n$ is almost surely well-defined. As is frequently the case with least-squares estimators, these can often be computed via an explicit formula such as (\ref{cn_lambdan_formula}) or by solving a system of linear equations. In this case, note that we may write:
\[\frac{\sum_{i=1}^m v_i\left\Vert \lambda x_i + c - c_i \right\Vert^2}{\sum_{i=1}^m v_i} = \beta^TX\beta - 2u^T\beta + K.\]
Here, $X$ is a $(d+1)\times (d+1)$ matrix which is given by the following block matrix:
\[X = \begin{pmatrix}
        I_d & \left( \frac{\sum_{i=1}^m v_ix_i}{\sum_{i=1}^m v_i} \right)\\
        \left( \frac{\sum_{i=1}^m v_ix_i}{\sum_{i=1}^m v_i} \right)^T & \frac{\sum_{i=1}^m v_i\Vert x_i \Vert^2}{\sum_{i=1}^m v_i}
    \end{pmatrix}.\]
Each vector is to be interpreted as a column-vector and $\beta^T$ denotes the transpose of $\beta$. The matrix $I_d$ is the unit matrix in $\RR^{d\times d}$. Finally, $u$ and $\beta$ are vectors in $\RR^{d+1}$ and $K$ is a constant, these are given by:
\[u=\begin{pmatrix}
        \frac{\sum_{i=1}^m v_ic_i}{\sum_{i=1}^m v_i} \\
        \frac{\sum_{i=1}^m v_i\langle c_i,x_i\rangle}{\sum_{i=1}^m v_i}
    \end{pmatrix}, \ \beta = \begin{pmatrix}
        c\\
        \lambda
    \end{pmatrix}, \
K= \frac{\sum_{i=1}^m v_i\Vert c_i \Vert^2}{\sum_{i=1}^m v_i}.\]
Then, $(\hat{\lambda}_n, \hat{c}_n)$ may be computed by solving the linear system $X\beta = u$ for $\beta$. In simulations this was observed to be more numerically stable compared to using (\ref{cn_lambdan_formula}).

\subsection{Limiting behavior of the criterion function}

As a first important step towards understanding how $(\hat{\lambda}_n, \hat{c}_n)$ may behave as $n \to \infty$, we study how the criterion function $T_n$ behaves as $n \to \infty$. Before we can determine the limiting behavior of $T_n$, we need two lemmas which allow us to control the sizes of the Laguerre cells that are summed over in the definition of $T_n$.

\begin{lemma}\label{ball_in_cell_lemma}
    Let $r >0$ and $\varphi = \{(x_i,h_i)\}_{i \in \NN} \subset \RR^d \times \RR$. Let $(x,h) \in \varphi$. Define the following set:
    \begin{align*}
        D_{x,h} &= \left\{(x',h') \in \RR^d \times \RR: h'\leq r^2 + h, \ \Vert x - x'\Vert < r + \sqrt{r^2 + h - h'}\right\}.
    \end{align*}
    Then, if $(\varphi \setminus \{(x,h)\}) \cap D_{x,h} = \emptyset$ we have $\bar{B}(x,r) \subset C((x,h),\varphi)$.
\end{lemma}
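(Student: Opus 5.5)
We argue by contraposition: we show that if some $y \in \bar{B}(x,r)$ fails to lie in $C((x,h),\varphi)$, then some other point of $\varphi$ must lie in $D_{x,h}$. By the definition (\ref{eq_Laguerre_cell_def}), $y \notin C((x,h),\varphi)$ means there is $(x',h') \in \varphi$ with $(x',h') \neq (x,h)$ such that
\[\Vert y - x'\Vert^2 + h' < \Vert y - x\Vert^2 + h.\]
So the task is to verify that any such $(x',h')$ belongs to $D_{x,h}$. This yields the contradiction with the hypothesis $(\varphi\setminus\{(x,h)\})\cap D_{x,h} = \emptyset$.

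First we use $\Vert y - x\Vert \leq r$ to bound the right-hand side: $\Vert y - x'\Vert^2 + h' < r^2 + h$. Since $\Vert y-x'\Vert^2 \geq 0$, this gives $h' < r^2 + h$, which is the first defining condition of $D_{x,h}$. In particular $r^2 + h - h' > 0$, so the square root in the definition of $D_{x,h}$ is well defined.

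Next, the same inequality gives $\Vert y - x'\Vert < \sqrt{r^2 + h - h'}$. The triangle inequality then yields
\[\Vert x - x'\Vert \leq \Vert x - y\Vert + \Vert y - x'\Vert < r + \sqrt{r^2+h-h'}.\]
This is exactly the second defining condition of $D_{x,h}$, so $(x',h') \in D_{x,h}$, which is the desired contradiction. Hence every $y \in \bar{B}(x,r)$ lies in $C((x,h),\varphi)$.

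There is no serious obstacle in this argument. The only care needed is to keep the inequalities strict where the definition of $D_{x,h}$ requires strictness: the strict inequality comes from the violated cell condition, while the closed ball contributes only the non-strict $\Vert y-x\Vert\le r$. We also note that the argument does not use positivity of the weights, consistent with $h' \in \RR$ in the statement.
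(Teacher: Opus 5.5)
Your proof is correct and uses the same ingredients as the paper's: the bound $\Vert x-y\Vert\le r$ together with the triangle inequality through $y$. The paper argues directly instead, splitting into the cases $h'\le r^2+h$ and $h'>r^2+h$ and squaring $\Vert x-x'\Vert-r\ge\sqrt{r^2+h-h'}$; your contrapositive yields $h'<r^2+h$ automatically, which removes the case split and the need to check nonnegativity before squaring.
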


\begin{proof}
    Suppose $(\varphi \setminus \{(x,h)\}) \cap D_{x,h} = \emptyset$. Let $y \in \bar{B}(x,r)$, it remains to show that $\Vert x- y\Vert^2 + h \leq \Vert x' - y\Vert^2 + h'$ for all $(x',h') \in \varphi \setminus \{(x,h)\}$. Let $(x',h') \in \varphi \setminus \{(x,h)\}$. If $h' \leq r^2 + h$ then by assumption $\Vert x - x'\Vert \geq r + \sqrt{r^2 + h - h'}$. As a consequence of this fact and $\Vert x - y \Vert \leq r$ we obtain:
    \[\Vert x - y\Vert^2 +h -h' \leq r^2 + h - h' \leq \left(\Vert x-x' \Vert -r \right)^2 \leq \left(\Vert y-x' \Vert + \Vert x-y \Vert -r \right)^2 \leq \Vert x' - y\Vert^2.\]
    So indeed, $\Vert x- y\Vert^2 + h \leq \Vert x' - y\Vert^2 + h'$. It remains to consider the case $h' > r^2 + h$. In that case we also obtain the desired result:
    \[\Vert x- y\Vert^2 + h \leq r^2 + h < h' \leq \Vert x' - y\Vert^2 + h'.\]
\end{proof}

The lemma below is essentially an intermediate result obtained in the proof of Proposition 3.1 in \cite{Flimmel2020}. For the sake of completeness, the proof is presented in section \ref{section_proofs_inversion}.

\begin{lemma}\label{cell_in_ball_lemma}
    Let $z > 0$, $R > 2\sqrt{z}$ and $\varphi = \{(x_i,h_i)\}_{i \in \NN} \subset \RR^d \times (0,z]$. Let $(x,h) \in \varphi$. Suppose that $J \in \NN$, and $\mathcal{C}_1,\dots,\mathcal{C}_J$ are convex cones with non-empty interiors with $\smash{\cup_{j=1}^J} \mathcal{C}_j = \RR^d$. Additionally, assume these convex cones have disjoint interiors and satisfy $\langle u,v\rangle \geq \frac{3}{4}\Vert u \Vert \cdot \Vert v\Vert$ whenever $u,v \in \mathcal{C}_j$. Define the following sets for $j \in \{1,\dots,J\}$:
    \[B_{x,j} = \left(\left(B(x,R)\setminus\bar{B}(x,2\sqrt{z})\right) \cap (\mathcal{C}_j + x)\right) \times (0,z].\]
    If $\varphi \cap B_{x,j} \neq \emptyset$ for all $j \in \{1,\dots,J\}$ then $C((x,h),\varphi) \subset B(x,R)$.
\end{lemma}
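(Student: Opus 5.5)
We argue by contraposition: take $y \notin B(x,R)$, i.e. $\Vert y - x\Vert \geq R$, and show that some $(x',h') \in \varphi$ beats $(x,h)$ at $y$, so $y \notin C((x,h),\varphi)$. Since the cones cover $\RR^d$, we have $y - x \in \mathcal{C}_j$ for some $j$. By hypothesis there is $(x',h') \in \varphi \cap B_{x,j}$. Writing $u = y - x$ and $v = x' - x$, this means $u,v \in \mathcal{C}_j$, $2\sqrt{z} < \Vert v \Vert < R \leq \Vert u\Vert$ and $h' \in (0,z]$. The cone condition gives $\langle u,v\rangle \geq \tfrac34 \Vert u\Vert\Vert v\Vert$.

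The key step is to expand
\[
\Vert y - x'\Vert^2 = \Vert u - v\Vert^2 = \Vert u\Vert^2 - 2\langle u,v\rangle + \Vert v\Vert^2 \leq \Vert u \Vert^2 - \tfrac32\Vert u\Vert\Vert v\Vert + \Vert v\Vert^2 .
\]
Since $\Vert v\Vert < \Vert u\Vert$, we have $\Vert v\Vert^2 < \Vert u\Vert\Vert v\Vert$, and hence $\Vert y - x'\Vert^2 < \Vert u\Vert^2 - \tfrac12\Vert u\Vert\Vert v\Vert$. Moreover $\Vert u\Vert\Vert v\Vert > \Vert v \Vert^2 > 4z$, so $\tfrac12\Vert u\Vert\Vert v\Vert > 2z$. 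Therefore
\[
\Vert y - x'\Vert^2 + h' < \Vert u\Vert^2 - 2z + z < \Vert u \Vert^2 + h = \Vert y - x\Vert^2 + h,
\]
where the last inequality uses $h > 0$ and $h' \leq z$. The inequality is strict, so $y$ violates the defining inequality (\ref{eq_Laguerre_cell_def}) for $C((x,h),\varphi)$ with respect to $(x',h')$. Hence $y \notin C((x,h),\varphi)$, which proves $C((x,h),\varphi) \subset B(x,R)$.

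We must also check that $(x',h') \neq (x,h)$. This holds because $\Vert x' - x\Vert > 2\sqrt{z} > 0$. The only delicate point is the bookkeeping of strict versus non-strict inequalities, and of which norm bounds are used where. The condition $R > 2\sqrt z$ only guarantees that the annuli are nonempty, and the cone angle condition with constant $\tfrac34$ is what makes the cross term dominate. I expect no real obstacle beyond this elementary estimate.
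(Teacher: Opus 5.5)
Your proof is correct and is essentially the paper's argument run in contrapositive form. The paper takes $y$ in the cell and uses the same neighbour from the same cone to get $2\langle v,u\rangle < \tfrac54\Vert v\Vert^2$, then combines this with the cone inequality to conclude $\Vert u\Vert < \tfrac56\Vert v\Vert < R$. You instead assume $\Vert u\Vert \ge R > \Vert v\Vert$ and derive a strict violation of the Laguerre inequality from the same two estimates.
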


In the theorem below we show that $T_n$ almost surely converges pointwise to a function $T$. Note that the integral in the definition of $T$ is finite by Lemma \ref{pF_moments_lemma}.

\begin{theorem}\label{criterion_func_asymptotics}
Let $\lambda > 0$ and $c \in \RR^d$, then with probability one we have :
    \[\lim_{n \to \infty}T_n(\lambda,c) = T(\lambda,c) := \begin{cases} \Vert c - c_0 \Vert^2 + \int_{\RR^d} \Vert y \Vert^2 p_F(y)\mathrm{d}y & \text{if } \lambda=\lambda_0\\
\infty & \text{otherwise}.
\end{cases}\]
\end{theorem}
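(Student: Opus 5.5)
Write $a := \lambda/\lambda_0 - 1$ and $b := c - \lambda c_0/\lambda_0$. After the substitution $y \mapsto y - x$ used in the proof of Lemma \ref{criterion_function_expected_value_lemma}, we have
\[T_n(\lambda,c) = \frac{1}{\nu_d(W_n)}\sum_{(x,h)\in\eta}\mathds{1}_{W_n}(x)\int_{C((x,h),\eta)-x}\Vert ax+b-y\Vert^2\,\mathrm{d}y .\]
Expanding the square splits $T_n$ into three pieces:
\[T_n = S^{(1)}_n - 2 S^{(2)}_n + S^{(3)}_n,\]
with
\[S^{(1)}_n = \frac{1}{\nu_d(W_n)}\sum \mathds{1}_{W_n}(x) v_{x,h}\Vert ax+b\Vert^2,\quad S^{(2)}_n = \frac{1}{\nu_d(W_n)}\sum\mathds{1}_{W_n}(x)\Big\langle ax+b,\int_{C((x,h),\eta)-x} y\,\mathrm{d}y\Big\rangle,\]
\[S^{(3)}_n = \frac{1}{\nu_d(W_n)}\sum \mathds{1}_{W_n}(x)\int_{C((x,h),\eta)-x}\Vert y\Vert^2\,\mathrm{d}y.\]

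\emph{Case $\lambda = \lambda_0$.} Here $a = 0$, and the claim reduces to three ergodic averages converging almost surely:
\[\frac{1}{\nu_d(W_n)}\sum\mathds{1}_{W_n}(x)v_{x,h} \to 1,\qquad \frac{1}{\nu_d(W_n)}\sum \mathds{1}_{W_n}(x)\int_{C-x} y \to 0,\qquad S^{(3)}_n \to \int\Vert y\Vert^2p_F.\]
Each of these sums has the form $\nu_d(W_n)^{-1}\sum_{(x,h)\in\eta}\mathds{1}_{W_n}(x) g(S_x\eta,h)$ for a translation-covariant functional $g$. The underlying marked Poisson process is stationary and mixing, hence ergodic. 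I would therefore invoke the spatial ergodic theorem for marked point processes over convex averaging sequences (e.g.\ Daley--Vere-Jones, Prop.\ 12.2.VI; the same tool the cited works use for the estimators of $F$). The limits are the Palm/Mecke expectations, and these are exactly the quantities computed via \eqref{mean_cell_volume1} and in the proof of Lemma \ref{criterion_function_expected_value_lemma}: $\int p_F = 1$, $\int y\,p_F = 0$, and $\int \Vert y\Vert^2 p_F$. The integrability the ergodic theorem requires is supplied by Lemma \ref{pF_moments_lemma}. This gives $\Vert b\Vert^2 + \int\Vert y\Vert^2p_F$, and since $b = c - c_0$ when $\lambda = \lambda_0$, this is $T(\lambda_0,c)$.

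\emph{Case $\lambda\neq\lambda_0$.} Here $a \neq 0$. Since $T_n \geq 0$ and $S^{(3)}_n$ is nonnegative and almost surely bounded, it suffices to show that $S^{(1)}_n - 2S^{(2)}_n \to \infty$. I would bound
\[\Vert ax+b-y\Vert^2 \geq \tfrac12\Vert ax+b\Vert^2 - \Vert y\Vert^2,\]
which gives
\[T_n \geq \frac{1}{2\nu_d(W_n)}\sum\mathds{1}_{W_n}(x)v_{x,h}\Vert ax+b\Vert^2 - S^{(3)}_n .\]
Fix $M>0$. On the set $\{x\in W_n : \Vert ax+b\Vert\geq M\}$, which is the complement in $W_n$ of a ball of radius $M/|a|$, the first term is at least
\[\frac{M^2}{2\nu_d(W_n)}\sum\mathds{1}_{W_n\setminus B}(x)v_{x,h}.\]
By the same ergodic theorem this tends to $M^2/2$, because the fixed ball contributes a vanishing proportion, its volume being $o(\nu_d(W_n))$. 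Hence $\liminf_n T_n \geq M^2/2 - \int\Vert y\Vert^2p_F$ almost surely, and letting $M\to\infty$ along a countable sequence gives $T_n\to\infty$.

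\emph{Main obstacle.} The delicate step is justifying the ergodic theorem for these functionals. One route is to check that the marked Poisson process is ergodic and that $g$ is measurable and Palm-integrable. Lemmas \ref{ball_in_cell_lemma} and \ref{cell_in_ball_lemma} are presumably meant to help with measurability and locality: they show that cells are almost surely bounded, so the cell is determined by the points in a bounded region. If the off-the-shelf theorem does not apply directly, I would instead prove $L^2$ convergence via a variance bound using these localization lemmas, then upgrade to almost sure convergence along a subsequence and use monotonicity in $W_n$ to pass to the full sequence. This upgrade is where I expect most of the technical work to lie.
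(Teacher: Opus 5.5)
Your proposal is correct. For $\lambda=\lambda_0$ it matches the paper. The paper applies the spatial ergodic theorem directly to $T_n(\lambda_0,c)$ and identifies the limit via Lemma \ref{criterion_function_expected_value_lemma}. Your split into three averages is equivalent, and since the paper treats the ergodic theorem as off-the-shelf, your $L^2$/subsequence fallback is not needed.

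For $\lambda\neq\lambda_0$ you take a genuinely different and shorter route.

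\emph{The paper's route.} It discards the within-cell term via \eqref{centroid_property}, which forces it to bound $v_{x,h}$ from below and $\Vert c_{x,h}-x\Vert$ from above. That is the actual role of Lemmas \ref{ball_in_cell_lemma} and \ref{cell_in_ball_lemma}, not measurability. It restricts to generators with $h\le z$ on an event $O_{x,h}$ guaranteeing $\bar B(x,r)\subset C((x,h),\eta)\subset B(x,R)$. The ergodic theorem, applied to this bounded indicator functional, gives these generators positive density $\int_0^z\PP(O_{0,h})\,\mathrm{d}F(h)>0$. It then bounds the distance term from below for $\Vert x\Vert>N$.

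\emph{Your route.} You absorb the cell geometry via $\Vert u-y\Vert^2\ge\tfrac12\Vert u\Vert^2-\Vert y\Vert^2$ into $S^{(3)}_n=T_n(\lambda_0,c_0)$, which the first case already handles. You then need only $D_n\to1$ and the vanishing of the contribution from the fixed ball $B=\{x:\Vert ax+b\Vert<M\}$. This is exactly the argument the paper later uses for Lemma \ref{Mn0_limit_lemma}. The vanishing holds because $\sum_{(x,h)\in\eta}\mathds{1}_B(x)v_{x,h}$ is almost surely finite, its mean being $\nu_d(B)$ by the Mecke equation and \eqref{mean_cell_volume1}. It does not follow merely from $\nu_d(B)=o(\nu_d(W_n))$, so that is the reason to state.

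Your route buys economy: no cones, no localization lemmas, no events $O_{x,h}$. The paper's route makes the divergence case self-contained. It relies only on the ergodic theorem for bounded indicators, not on the first case, on $D_n\to1$, or on the moment bound from Lemma \ref{pF_moments_lemma}.
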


The proof of Theorem \ref{criterion_func_asymptotics} is given in section \ref{section_proofs_inversion}. In the remainder of this section we derive a few additional limit results, which we need later.

\begin{lemma}\label{lemma_common_limits}
    For $n \in \NN$ define:
    \begin{align}
        D_n &= \frac{1}{\nu_d(W_n)}\sum_{(x,h) \in \eta} \mathds{1}_{W_n}(x)v_{x,h} \label{def_dn_mean_volume}\\
        B_n &= \frac{1}{\nu_d(W_n)}\sum_{(x,h) \in \eta} \mathds{1}_{W_n}(x)v_{x,h} \left(c_{x,h}-x\right). \label{def_bn_centered_centroids}
    \end{align}
    Then, with probability one: $\lim_{n\to \infty}D_n = 1$, and $\lim_{n\to\infty}B_n=0$
\end{lemma}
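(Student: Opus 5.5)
Both limits are spatial ergodic averages of stationary marked point process functionals, so the plan is to apply an ergodic theorem for marked Poisson processes over the convex averaging sequence $(W_n)$. Write
\[
g(\eta) := \sum_{(x,h)\in\eta}\mathds{1}_{[0,1]^d}(x)\,v_{x,h},
\qquad
D_n = \frac{1}{\nu_d(W_n)}\sum_{(x,h)\in\eta}\mathds{1}_{W_n}(x)\,\xi((x,h),\eta),
\]
with $\xi((x,h),\eta)=v_{x,h}$. The mark $\xi$ is translation covariant: $\xi((x-v,h),S_v\eta)=\xi((x,h),\eta)$, because Laguerre cells shift with their generators. The same holds for the vector mark $\zeta((x,h),\eta)=v_{x,h}(c_{x,h}-x)$, since $c_{x,h}-x$ is shift invariant.

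The Poisson process $\eta$ with intensity $\nu_d\times\mathbb{F}$ is stationary and mixing, hence ergodic. The pointwise ergodic theorem for marked point processes (e.g.\ Theorem 12.2.IV in \cite{Daley2008}, or the version in \cite{Last2018}) then gives, almost surely,
\[
\frac{1}{\nu_d(W_n)}\sum_{(x,h)\in\eta}\mathds{1}_{W_n}(x)\,\xi((x,h),\eta)\;\longrightarrow\;\EE\sum_{(x,h)\in\eta}\mathds{1}_{[0,1]^d}(x)\,\xi((x,h),\eta),
\]
provided the right-hand side is finite with $|\xi|$ in place of $\xi$. The same applies componentwise to $\zeta$.

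For the limits themselves:
\begin{itemize}
\item \textbf{Limit of $D_n$.} By the Mecke equation and Robbins' theorem as in (\ref{mean_cell_volume1}), the expectation equals $\int_0^\infty\EE\,\nu_d(C((0,h),\eta))\,\mathrm{d}F(h)=\int p_F=1$.
\item \textbf{Limit of $B_n$.} Note $v_{x,h}(c_{x,h}-x)=\int_{C((x,h),\eta)-x}y\,\mathrm{d}y$. Mecke and Fubini give the limit $\int_{\RR^d}y\,p_F(y)\,\mathrm{d}y$, which is $0$ by symmetry of $p_F$, exactly as in Lemma \ref{criterion_function_expected_value_lemma}.
\item \textbf{Integrability.} The expectation of $\sum\mathds{1}_{[0,1]^d}(x)\,v_{x,h}\Vert c_{x,h}-x\Vert$ is at most $\int\Vert y\Vert p_F(y)\,\mathrm{d}y$, which is finite by Lemma \ref{pF_moments_lemma}.
\end{itemize}

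The main obstacle is justifying the ergodic theorem in the required form: almost sure convergence along a general convex averaging sequence, counting points with $x\in W_n$ (not cells contained in $W_n$), with unbounded marks $h$ and non-local cell functionals. I would first try to cite the spatial ergodic theorem for stationary ergodic marked point processes directly. The key requirement is that $\xi$ be a measurable, shift-covariant function of $\eta$, so that $\sum_{x\in\eta\cap W_n}\xi$ is an additive functional. The measure $A\mapsto\sum_{(x,h)\in\eta}\mathds{1}_A(x)\xi((x,h),\eta)$ is a stationary random measure with finite intensity, so the individual ergodic theorem for stationary random measures over convex averaging sequences applies.

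Ergodicity follows from mixing of the Poisson process. Measurability of $v_{x,h}$ and $c_{x,h}$ as functions of $\eta$ follows since the cells are random closed sets.

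If a direct citation proves delicate, the fallback is a different argument. Lemmas \ref{ball_in_cell_lemma} and \ref{cell_in_ball_lemma} localize the cells, which allows truncation: replace $\xi$ by $\xi\mathds{1}\{C\subset B(x,R)\}$ and $\mathbb{F}$ restricted to $(0,z]$. One would then show the truncation error vanishes in $L^1$ as $R,z\to\infty$, and finish with a variance bound plus a Borel--Cantelli argument along a subsequence, using monotonicity of $W_n$.
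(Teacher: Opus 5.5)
Your proof is correct. For $D_n$ the paper simply cites the proof of Lemma~5 in \cite{vdjagt2025b}, whereas you prove it directly from the ergodic theorem and (\ref{mean_cell_volume1}). For $B_n$ you take a genuinely different route.

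The paper never applies an ergodic theorem to $B_n$ itself. It expands
\[
T_n(\lambda_0,c)=D_n\Vert c-c_0\Vert^2-2\langle c-c_0,B_n\rangle+T_n(\lambda_0,c_0),
\]
solves for $\langle c-c_0,B_n\rangle$, and inserts the almost sure limits of $T_n(\lambda_0,c)$ and $T_n(\lambda_0,c_0)$ from Theorem~\ref{criterion_func_asymptotics} together with $D_n\to 1$. Taking $c=c_0+e_j$, $j=1,\dots,d$, then gives each component. You instead apply the ergodic theorem componentwise to the shift-covariant mark $v_{x,h}(c_{x,h}-x)=\int_{C((x,h),\eta)-x}y\,\mathrm{d}y$. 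You identify the limit as $\int_{\RR^d}y\,p_F(y)\,\mathrm{d}y=0$ via Mecke and the radial symmetry of $p_F$.

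The paper's polarization trick recycles a convergence it has already proved, and it only feeds nonnegative summands into the ergodic theorem. It does, however, rely on $\int\Vert y\Vert^2p_F(y)\,\mathrm{d}y<\infty$ and on Theorem~\ref{criterion_func_asymptotics}. Your route is more direct and needs only $\int\Vert y\Vert p_F(y)\,\mathrm{d}y<\infty$. The price is signed marks: you should apply the random-measure ergodic theorem separately to the positive and negative parts of each component. Each part has finite intensity by your integrability bound and is a factor of $\eta$, hence ergodic.

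Your random-measure formulation is also the cleaner way to invoke the theorem. The measure $A\mapsto\sum_{(x,h)\in\eta}\mathds{1}_A(x)v_{x,h}$ has finite intensity and is therefore locally finite even when $F(\infty)=\infty$. In that case the spatial projection of $\eta$ itself is not locally finite.

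Finally, the paper handles the obstacle you flag simply by citing the spatial ergodic theorem (Proposition~13.4.I in \cite{Daley2008}). It applies this to averages over $(x,h)\in\eta$ with $x\in W_n$ of functionals of $S_x\eta$, which covers your $D_n$ and $B_n$ as well. So the truncation and Borel--Cantelli fallback is not needed.
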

\begin{proof}
The fact that $\lim_{n\to \infty}D_n = 1$ almost surely, was shown in the proof of Lemma 5 in \cite{vdjagt2025b}, where the authors denoted $D_n$ by $\tilde{F}_n^V(\infty)$. Let $c \in \RR^d$, we may write:
    \begin{align*}
        T_n(\lambda_0,c) &= \frac{1}{\nu_d(W_n)}\sum_{(x,h)\in \eta} \mathds{1}_{W_n}(x)\int_{C((x,h),\eta)} \Vert x - c_0 + c - y \Vert^2\mathrm{d}y \\
        \begin{split}
            &= \frac{1}{\nu_d(W_n)}\sum_{(x,h)\in \eta} \mathds{1}_{W_n}(x)\bigg(v_{x,h}\Vert c - c_0 \Vert^2 + 2\left\langle c - c_0, \int_{C((x,h),\eta)} x-y\mathrm{d}y \right\rangle + \\
            &\phantom{=} \ +\int_{C((x,h),\eta)}\Vert x - y \Vert^2\mathrm{d}y \bigg)
        \end{split}\\
        &= D_n \Vert c - c_0 \Vert^2 -2\langle c - c_0,B_n \rangle + T_n(\lambda_0,c_0).
    \end{align*}
    Indeed, note that for any $(x,h) \in \eta^*$:
    \[ \int_{C((x,h),\eta)} y-x\mathrm{d}y = \int_{C((x,h),\eta)-x} y\mathrm{d}y = v_{x,h} c(C((x,h),\eta)-x) = v_{x,h}(c_{x,h}-x).\]
    As a consequence:
    \begin{equation}
        \langle c - c_0,B_n \rangle = \frac{1}{2}\left(D_n \Vert c - c_0 \Vert^2 + T_n(\lambda_0,c_0) -T_n(\lambda_0,c) \right).\label{Bn_limit_proof_eq}
    \end{equation}
    By Theorem \ref{criterion_func_asymptotics}, and since $\lim_{n\to \infty}D_n = 1$ almost surely, we obtain via the continuous mapping theorem the following almost sure limit:
    \begin{align*}
        \lim_{n \to \infty} \langle c - c_0,B_n \rangle &= \frac{1}{2}\left(\Vert c - c_0 \Vert^2 + \int_{\RR^d} \Vert y \Vert^2 p_F(y)\mathrm{d}y - \Vert c - c_0 \Vert^2 - \int_{\RR^d} \Vert y \Vert^2 p_F(y)\mathrm{d}y\right)\\ 
        &= 0.
    \end{align*}
    Keeping in mind that this holds for any $c \in \RR^d$, we may for instance take $c = c_0 + e_j$ where $e_j$ with $j \in \{1,\dots,d\}$ is the $j$-th standard unit basis vector of $\RR^d$. Taking this $c$ we observe that the $j$-th component of $B_n$ converges to 0 almost surely. Because this holds for any $j \in \{1,\dots,d\}$, $\lim_{n\to\infty}B_n=0$ almost surely.
\end{proof}

For any $c \in \RR^d$, let $\bar{\lambda}_n(c) = \argmin_{\lambda}T_n(\lambda, c)$. Note that $\lambda \mapsto T_n(\lambda,c)$ is convex. Recalling that the partial derivatives of $T_n$ were derived in section \ref{section_definition_inversion}, a straightforward computation yields:
\begin{equation}
    \bar{\lambda}_n(c) = \frac{\sum_{(x,h) \in f_0(\eta)} \mathds{1}_{W_n^0}(x) v_{x,h}\left\langle c_{x,h} - c, x \right\rangle}{\sum_{(x,h) \in f_0(\eta)} \mathds{1}_{W_n^0}(x) v_{x,h} \left\Vert x \right\Vert^2}.\label{lambda_c_def}
\end{equation}
Computing the partial second derivative of $T_n$ w.r.t. $\lambda$ yields:
\begin{equation}
    \frac{\partial^2 T_n(\lambda, c)}{\partial \lambda^2} = \frac{2}{\nu_d(W_n)}\sum_{(x,h) \in f_0(\eta)} \mathds{1}_{W_n^0}(x)v_{x,h}\Vert x \Vert^2 =:2M_n. \label{Mn0_def}
\end{equation}
From this, it immediately follows that for any $c \in \RR^d$, $T_n(\blank,c)$ is strongly convex with parameter $2M_n$. Recall that a function $f:\RR^d \to \RR$ is strongly convex with parameter $m > 0$ if for all $x,y \in \RR^d$:
\[f(y) \geq f(x) + \nabla \langle f(x),y-x\rangle + \frac{m}{2}\Vert y - x\Vert^2.\]
Observe that if $x^*$ is the global minimizer of $f$, then:
\begin{equation}
    \Vert y - x^*\Vert^2 \leq \frac{2}{m}\left(f(y) - f(x^*)\right).\label{stronglyconvex_minimizer}
\end{equation}
A sufficient condition for $f:\RR \to \RR$ to be strongly convex with parameter $m > 0$ is that $f$ is twice differentiable and $f''(x) \geq m$ for all $x \in \RR$. We now show that the strong convexity parameter of $T_n(\blank,c)$ diverges to infinity as $n \to \infty$:

\begin{lemma}\label{Mn0_limit_lemma}
    With probability one, $\lim_{n \to \infty} M_n = \infty$. Here, $M_n$ is as in (\ref{Mn0_def}).
\end{lemma}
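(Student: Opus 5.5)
We will show $M_n \to \infty$ almost surely by first rewriting $M_n$ in terms of the original process $\eta$ and then bounding it from below by $\nu_d(W_n)$ times a quantity with a positive limit. Since $f_0$ does not change the cells, $v_{x',h'}=v_{x,h}$ for $(x',h')=f_0(x,h)$, and the first coordinate of $f_0(x,h)$ is $(x-c_0)/\lambda_0$. Hence
\[M_n = \frac{1}{\lambda_0^2\,\nu_d(W_n)}\sum_{(x,h)\in\eta}\mathds{1}_{W_n}(x)\,v_{x,h}\Vert x-c_0\Vert^2 .\]

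\textbf{Step 1: restrict to the far part of the window.} Fix $R>0$. Since $(W_n)$ expands unboundedly, $\nu_d(W_n)\to\infty$. Because $\bar B(c_0,R)$ is a fixed bounded set, $\nu_d(W_n\setminus \bar B(c_0,R))/\nu_d(W_n)\to 1$. Discard all terms with $x\in \bar B(c_0,R)$ and bound $\Vert x-c_0\Vert^2\geq R^2$ on the rest. This gives
\[M_n \geq \frac{R^2}{\lambda_0^2}\left( D_n - \frac{1}{\nu_d(W_n)}\sum_{(x,h)\in\eta}\mathds{1}_{\bar B(c_0,R)}(x)\,v_{x,h}\right).\]

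\textbf{Step 2: pass to the limit.} Lemma \ref{lemma_common_limits} gives $D_n\to 1$ almost surely. The subtracted sum is a fixed random variable $S_R=\sum_{(x,h)\in\eta}\mathds{1}_{\bar B(c_0,R)}(x)v_{x,h}$, independent of $n$. It is almost surely finite, since by the Mecke equation and (\ref{mean_cell_volume1}),
\[\EE(S_R)=\nu_d(\bar B(c_0,R))<\infty .\]
So $S_R/\nu_d(W_n)\to 0$ almost surely. Therefore $\liminf_n M_n\geq R^2/\lambda_0^2$ almost surely for each fixed $R$.

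\textbf{Step 3: conclude.} Take $R$ along the integers $R=1,2,\dots$ and intersect the countably many probability-one events. This yields $\liminf_n M_n=\infty$ almost surely, which is the claim.

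The only delicate point is the finiteness of $S_R$, which is handled by the Mecke-equation identity for the mean cell volume. Everything else reduces to Lemma \ref{lemma_common_limits}, so no serious obstacle is expected.
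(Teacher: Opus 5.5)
Your proof is correct and follows essentially the same route as the paper: split off the generators near $c_0$, bound the remaining terms below by $R^2/\lambda_0^2$ times $D_n$ minus a near-volume term, and conclude using $D_n\to 1$ from Lemma~\ref{lemma_common_limits}. The differences are minor. You drop the nonnegative near term directly instead of showing it vanishes, and you get finiteness of $S_R$ from the Mecke equation, $\EE(S_R)=\nu_d(\bar{B}(c_0,R))$, rather than from the paper's appeal to local finiteness of the tessellation. Your version of that step is cleaner and fully rigorous, since a generator need not lie in its own cell.
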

\begin{proof}
Let $(\Omega,\mathcal{A},\PP)$ be a probability space supporting the Poisson process $\eta$. As in the proof of Theorem \ref{criterion_func_asymptotics} it is sufficient to show that for all $M > 0$ there exists a $\Omega_M \in \mathcal{A}$ with $\PP(\Omega_M)=1$ such that for all $\omega \in \Omega_M$: $\liminf_{n \to \infty}M_n(\omega) > M$. Let $M > 0$. Choose $\Omega_M \in \mathcal{A}$ with $\PP(\Omega_M)=1$ such that for all $\omega \in \Omega_M$: $L(\eta(\omega))$ is a tessellation and $\lim_{n \to \infty}D_n(\omega)=1$. Such a $\Omega_M$ exists by Lemma \ref{lemma_common_limits} and the fact that Poisson-Laguerre tessellations are well-defined. Let $\omega \in \Omega_M$, then we may write $M_n(\omega)$ as:
{\allowdisplaybreaks
\begin{align}
        M_n(\omega) &= \frac{1}{\nu_d(W_n)}\sum_{(x,h) \in \eta(\blank;\omega)} \mathds{1}_{W_n}(x)v_{x,h}(\omega)\left\Vert \frac{x}{\lambda_0} - \frac{c_0}{\lambda_0} \right\Vert^2 \nonumber \\
        &= \frac{1}{\nu_d(W_n)}\sum_{(x,h) \in \eta(\blank;\omega)} \mathds{1}_{W_n}(x)v_{x,h}(\omega)\left\Vert \frac{x}{\lambda_0} - \frac{c_0}{\lambda_0} \right\Vert^2 \mathds{1}\left\{\left\Vert \frac{x}{\lambda_0} - \frac{c_0}{\lambda_0} \right\Vert\leq \sqrt{M} \right\} + \label{Mn0_term1}\\
        &\phantom{=} \ + \frac{1}{\nu_d(W_n)}\sum_{(x,h) \in \eta(\blank;\omega)} \mathds{1}_{W_n}(x)v_{x,h}(\omega)\left\Vert \frac{x}{\lambda_0} - \frac{c_0}{\lambda_0} \right\Vert^2 \mathds{1}\left\{\left\Vert \frac{x}{\lambda_0} - \frac{c_0}{\lambda_0} \right\Vert > \sqrt{M} \right\} \label{Mn0_term2}
    \end{align}
}%
    We consider (\ref{Mn0_term1}) and (\ref{Mn0_term2}) separately. Note that the following expression is an upper bound for (\ref{Mn0_term1}):
    \begin{equation}
      \frac{M}{\nu_d(W_n)}\left(\sum_{(x,h) \in \eta(\blank;\omega)} \mathds{1}_{W_n \cap \bar{B}(c_0,\lambda_0 M)}(x)v_{x,h}(\omega)\right).\label{Mn0_term1_bound}
    \end{equation}
    For sufficiently large $n$, $W_n \cap \bar{B}(c_0,\lambda_0 M) = \bar{B}(c_0,\lambda_0 M)$. Hence, for such $n$, the term in brackets in (\ref{Mn0_term1_bound}) does not depend on $n$. Because the tessellation $L(\eta(\omega))$ is locally finite and has cells of finite volume, this term is finite. Because $\nu_d(W_n) \to \infty$ as $n \to \infty$ we see that (\ref{Mn0_term1}) vanishes. For (\ref{Mn0_term2}) we obtain the following lower bound:
    \begin{align}
        &\frac{1}{\nu_d(W_n)}\sum_{(x,h) \in \eta(\blank;\omega)} \mathds{1}_{W_n}(x)v_{x,h}(\omega)\left\Vert \frac{x}{\lambda_0} - \frac{c_0}{\lambda_0} \right\Vert^2 \mathds{1}\left\{\left\Vert \frac{x}{\lambda_0} - \frac{c_0}{\lambda_0} \right\Vert > \sqrt{M} \right\} \nonumber \\
        &\quad> D_n(\omega)M - \frac{1}{\nu_d(W_n)}\sum_{(x,h) \in \eta(\blank;\omega)} \mathds{1}_{W_n}(x)v_{x,h}(\omega)M\mathds{1}\left\{\left\Vert \frac{x}{\lambda_0} - \frac{c_0}{\lambda_0} \right\Vert \leq \sqrt{M} \right\} .\label{Mn0_term2_bound}
    \end{align}
    By the choice of $\Omega_M$, the first term of (\ref{Mn0_term2_bound}) converges to $M$ as $n \to \infty$, while the second term of (\ref{Mn0_term2_bound}) vanishes, as $n \to \infty$. The fact that the second term of (\ref{Mn0_term2_bound}) vanishes can be shown via the same argument as used for (\ref{Mn0_term1}). Hence: $\liminf_{n \to \infty} M_n(\omega) > M$. 
\end{proof}

As a consequence we find that a strongly consistent estimator for $\lambda_0$ can be obtained even if we do not optimize for $c$. 

\begin{corollary}\label{corr_consistentcy_lambda_c}
    Let $c \in \RR^d$, then with probability one: $\lim_{n \to \infty} \bar{\lambda}_n(c) = \lambda_0$.
\end{corollary}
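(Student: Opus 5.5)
The plan is to use the strong convexity of $T_n(\blank,c)$ with parameter $2M_n$, combined with the minimizer bound (\ref{stronglyconvex_minimizer}). Fix $c\in\RR^d$. Since $\bar{\lambda}_n(c)$ is the global minimizer of $\lambda\mapsto T_n(\lambda,c)$, applying (\ref{stronglyconvex_minimizer}) with $y=\lambda_0$ and $m=2M_n$ gives
\[
\left(\lambda_0-\bar{\lambda}_n(c)\right)^2 \leq \frac{1}{M_n}\left(T_n(\lambda_0,c)-T_n(\bar{\lambda}_n(c),c)\right)\leq \frac{T_n(\lambda_0,c)}{M_n}.
\]
The second inequality uses only that $T_n\geq 0$, which is immediate from its definition (\ref{def_criterion_function}) as a sum of integrals of squared norms.

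Next we control the right-hand side. By Theorem \ref{criterion_func_asymptotics}, on a probability-one event
\[
T_n(\lambda_0,c)\to \Vert c-c_0\Vert^2+\int_{\RR^d}\Vert y\Vert^2p_F(y)\mathrm{d}y,
\]
and this limit is finite by Lemma \ref{pF_moments_lemma}. By Lemma \ref{Mn0_limit_lemma}, on another probability-one event $M_n\to\infty$. On the intersection of these two events, which still has probability one, the numerator $T_n(\lambda_0,c)$ stays bounded while the denominator $M_n$ diverges. Hence $T_n(\lambda_0,c)/M_n\to0$, which gives $\bar{\lambda}_n(c)\to\lambda_0$.

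The only points needing care are small. First, $M_n>0$ almost surely, since the generator points in $W_n$ are distinct and at most one of them equals $c_0$; this makes $\bar\lambda_n(c)$ well-defined by (\ref{lambda_c_def}). Second, strong convexity of $T_n(\blank,c)$ holds because $T_n(\blank,c)$ is quadratic in $\lambda$ with second derivative exactly $2M_n$, as computed in (\ref{Mn0_def}). There is no real obstacle beyond these checks: the substantive work is contained in Theorem \ref{criterion_func_asymptotics} and Lemma \ref{Mn0_limit_lemma}, which we take as given.
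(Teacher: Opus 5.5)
Your argument is the same as the paper's proof. You apply (\ref{stronglyconvex_minimizer}) to $T_n(\blank,c)$ with $m=2M_n$, drop $T_n(\bar\lambda_n(c),c)\geq 0$, and combine Theorem~\ref{criterion_func_asymptotics} (bounded numerator) with Lemma~\ref{Mn0_limit_lemma} ($M_n\to\infty$) on a common probability-one event.

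One small correction concerns $M_n>0$: distinctness of the generators does not give this for every $n$, since a small window may contain no generator of a non-empty cell. However, Lemma~\ref{Mn0_limit_lemma} gives $M_n>0$ for all sufficiently large $n$ almost surely, and that is all the limit statement needs.
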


\begin{proof}
Applying (\ref{stronglyconvex_minimizer}) to the function $T_n(\blank,c)$, and taking into account that $T_n(\blank,c)$ is non-negative, we obtain:
\begin{equation}
    \left\Vert \lambda_0 - \bar{\lambda}_n(c) \right\Vert^2 \leq \frac{1}{M_n}\left(T_n(\lambda_0,c) - T_n(\bar{\lambda}_n(c),c)\right) \leq \frac{T_n(\lambda_0,c)}{M_n}.\label{lambda_c_minimizer}
\end{equation}
By Lemma (\ref{Mn0_limit_lemma}), Theorem \ref{criterion_func_asymptotics} and the continuous mapping theorem we obtain that the RHS of (\ref{lambda_c_minimizer}) vanishes almost surely, which yields the result.
\end{proof}

\section{Consistency of the inversion procedure}\label{inversion_sufficient_conditions_section}

In the previous section we obtained various limit results, which we will need in this section for proving consistency of the inversion procedure. That is, we will show that under reasonable conditions the proposed estimator $(\hat{\lambda}_n,\hat{c}_n)$ for $(\lambda_0,c_0)$ is consistent. For $n \in \NN$ define the following random vector:
\begin{equation}
    A_n = \frac{1}{\nu_d(W_n)}\sum_{(x,h) \in \eta}\mathds{1}_{W_n}(x)\nu_d(C(x,h),\eta)x. \label{def_an_mean_generator}
\end{equation}
That is, $A_n$ may be seen as a weighted average of the generators corresponding to non-empty cells, the weights being the volumes of the cells. One might argue that for a proper weighted average we should divide $A_n$ by $D_n$, with $D_n$ as in (\ref{def_dn_mean_volume}). However, by Lemma \ref{lemma_common_limits} we know that $D_n$ converges to $1$ almost surely as $n \to \infty$. Hence, when studying the behavior of $A_n$ as $n \to \infty$, it is not needed to divide $A_n$ by $D_n$ because this normalization will not lead to a different limit of $A_n$, if it exists in some sense of stochastic convergence. As a first observation, the expected value of $A_n$ is the centroid of $W_n$:
\begin{align*}
    \EE\left(A_n\right) = \frac{1}{\nu_d(W_n)}\int_{W_n}\int_0^\infty \EE(\nu_d(C(x,h),\eta))\mathrm{d}F(h)x\mathrm{d}x = \frac{1}{\nu_d(W_n)}\int_{W_n}x\mathrm{d}x = c(W_n).
\end{align*}
This follows from the Mecke equation (Theorem 4.1. in \cite{Last2018}) and (\ref{mean_cell_volume1}). As it turns out, the behavior of $A_n$ is essential for obtaining a consistency result for $(\hat{\lambda}_n,\hat{c}_n)$, as is highlighted by the theorem below.

\begin{theorem}[Consistent inversion]\label{thm_consistent_inversion}\phantom{text}
\begin{enumerate}
    \item If the sequence $(A_n)_{n \geq 1}$ is uniformly tight, then:
    \[\lim_{n \to \infty} \left(\hat{\lambda}_n,\hat{c}_n\right) \overset{\PP}{=} (\lambda_0,c_0).\]
    \item If the random variable $\sup_{n \geq 1}\Vert A_n \Vert$ is almost surely finite, then:
    \[\lim_{n \to \infty}\left(\hat{\lambda}_n,\hat{c}_n\right) \overset{\text{a.s.}}{=} (\lambda_0,c_0).\]
\end{enumerate}
\end{theorem}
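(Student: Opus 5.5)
We work directly with the closed-form expressions (\ref{cn_lambdan_formula}) and rewrite everything in terms of the original process $\eta$. Recall that a point $(x',h')\in f_0(\eta)$ with $x'\in W_n^0$ corresponds to $(x,h)\in\eta$ with $x\in W_n$ and $x'=(x-c_0)/\lambda_0$. Cell volumes and centroids are unchanged under this correspondence. Dividing numerators and denominators by $\nu_d(W_n)$, we introduce the normalized sums
\[S_n=\frac{1}{\nu_d(W_n)}\sum_{(x,h)\in\eta}\mathds{1}_{W_n}(x)v_{x,h}\Vert x\Vert^2,\qquad G_n=\frac{1}{\nu_d(W_n)}\sum_{(x,h)\in\eta}\mathds{1}_{W_n}(x)v_{x,h}\langle c_{x,h}-x,x\rangle .\]
Writing $c_{x,h}=x+(c_{x,h}-x)$, every quantity appearing in $\hat\lambda_n$ and $\hat c_n$ can be expressed through $D_n, A_n, B_n, S_n, G_n$ and $\lambda_0, c_0$.

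The key algebraic step is the following. Since $c_{x,h} = x+(c_{x,h}-x)$ and $x' = (x-c_0)/\lambda_0$, the weighted covariance between the centroids $c_{x,h}$ and the transformed generators $x'$ equals $\lambda_0$ times the weighted variance of $x'$, plus a remainder. Concretely,
\[\hat\lambda_n-\lambda_0=\frac{\lambda_0\left(D_nG_n-\langle B_n,A_n\rangle - \langle B_n, \text{stuff}\rangle\right)}{D_nS_n-\Vert A_n\Vert^2 + \text{lower order}},\]
where the exact form is obtained by expanding $\Vert x-c_0\Vert^2$ and $\langle c_{x,h}-c_0,x-c_0\rangle$. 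Similarly,
\[\hat c_n - c_0 = \frac{B_n}{D_n} - (\hat\lambda_n-\lambda_0)\,\frac{A_n/D_n - c_0}{\lambda_0}\quad\text{(up to the correct constants)}.\]
So consistency reduces to two facts: the ratio above tends to $0$, and $(\hat\lambda_n-\lambda_0)(A_n-c_0)\to 0$.

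For the denominator, note that $D_nS_n-\Vert A_n\Vert^2$ (with the appropriate $c_0$-centering) equals $D_n^2\lambda_0^2\,M_n^{\mathrm{var}}$, where $M_n^{\mathrm{var}}$ is the weighted variance. We have $D_nS_n - \Vert A_n\Vert^2 = D_n \lambda_0^2 M_n - \Vert A_n - D_nc_0\Vert^2$ up to constants. By Lemma \ref{Mn0_limit_lemma}, $M_n\to\infty$ a.s. In case 2, $\sup_n \Vert A_n\Vert<\infty$ a.s., so the denominator diverges a.s. In case 1, tightness of $A_n$ gives divergence in probability.

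For the numerator, we control $G_n$ using the identity from the proof of Lemma \ref{lemma_common_limits}. Applying that argument with $\lambda\neq\lambda_0$ seems problematic, since $T(\lambda, c)=\infty$ there. Instead we bound $G_n$ by Cauchy–Schwarz:
\[|G_n|\le \Big(\tfrac{1}{\nu_d(W_n)}\sum \mathds{1}_{W_n}(x)v_{x,h}\Vert c_{x,h}-x\Vert^2\Big)^{1/2} S_n^{1/2}\le T_n(\lambda_0,c_0)^{1/2}S_n^{1/2},\]
using (\ref{centroid_property}) to get $v_{x,h}\Vert c_{x,h}-x\Vert^2\le\int_{C}\Vert y-x\Vert^2\mathrm{d}y$. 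By Theorem \ref{criterion_func_asymptotics}, $T_n(\lambda_0,c_0)$ converges a.s. to a finite limit. Hence $G_n/S_n = O(S_n^{-1/2})\to0$, while $\langle B_n,A_n\rangle$ is $o(1)\cdot O(1)$ by Lemma \ref{lemma_common_limits} together with boundedness or tightness of $A_n$. Dividing by the denominator, which is of order $S_n$, gives $\hat\lambda_n-\lambda_0 = O(S_n^{-1/2})$ plus vanishing terms, and hence $\hat\lambda_n-\lambda_0 \to 0$.

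The main obstacle is $\hat c_n$. There the error $(\hat\lambda_n-\lambda_0)A_n$ appears, and we need $\hat\lambda_n-\lambda_0$ to vanish in a way that still kills $A_n$. Under the hypotheses $A_n$ is bounded (a.s.) or tight, so the product tends to $0$ a.s. or in probability, respectively. This is exactly why the condition on $A_n$ is needed: without it, $A_n$ could drift and the $c$-estimate could fail even though $\hat\lambda_n$ is consistent. For part 1 we conclude by the standard fact that a product of an $o_\PP(1)$ sequence and an $O_\PP(1)$ sequence is $o_\PP(1)$, together with the continuous mapping theorem. For part 2 all limits hold on a common probability-one event, obtained by intersecting the events from Lemmas \ref{lemma_common_limits} and \ref{Mn0_limit_lemma}, Theorem \ref{criterion_func_asymptotics}, and $\{\sup_n\Vert A_n\Vert<\infty\}$.
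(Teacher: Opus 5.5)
Your argument is correct. For $\hat c_n$ it is the paper's argument: your identity is (\ref{hat_cn_simplified_expression}) rearranged.

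For $\hat\lambda_n$ you take a different route. The paper factors $\hat\lambda_n$ through $\bar\lambda_n(0)$ and imports its consistency from Corollary \ref{corr_consistentcy_lambda_c}, i.e.\ from strong convexity of $T_n(\blank,c)$ in $\lambda$. You instead write an error identity in the original coordinates and bound $G_n$ directly via Cauchy--Schwarz and (\ref{centroid_property}). If you carry out your expansion, the hedged ``stuff'' and ``lower order'' terms vanish and the $c_0$-dependence cancels exactly:
\[
\hat\lambda_n-\lambda_0=\lambda_0\,\frac{D_nG_n-\langle B_n,A_n\rangle}{D_nS_n-\Vert A_n\Vert^2},\qquad D_nS_n-\Vert A_n\Vert^2=\lambda_0^2D_nM_n-\Vert A_n-D_nc_0\Vert^2 .
\]
The second identity, with Lemma \ref{Mn0_limit_lemma} and the hypothesis on $A_n$, gives divergence of the denominator. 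It also gives divergence of $S_n$, which you use tacitly when writing $G_n/S_n=O(S_n^{-1/2})$.

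Both routes rest on the same inputs: Theorem \ref{criterion_func_asymptotics} at $\lambda=\lambda_0$, and Lemmas \ref{lemma_common_limits} and \ref{Mn0_limit_lemma}. The strong-convexity bound behind the corollary is essentially your Cauchy--Schwarz inequality in another guise.

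What your version buys is twofold. First, it tracks the centering exactly. The paper's displayed formula for $\hat\lambda_n$ writes $A_n$ where the transformed average $(A_n-D_nc_0)/\lambda_0$ belongs; this is harmless only because tightness and boundedness transfer between the two. Second, your identity is translation invariant, so you can center at $A_n/D_n$ and obtain
\[
|\hat\lambda_n-\lambda_0|\le\lambda_0\bigl(D_nT_n(\lambda_0,c_0)/(D_nS_n-\Vert A_n\Vert^2)\bigr)^{1/2}.
\]
This shows that for $\hat\lambda_n$ the hypothesis on $A_n$ serves only to make the weighted dispersion $D_nS_n-\Vert A_n\Vert^2$ diverge. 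For $\hat c_n$ it enters through the product $(\hat\lambda_n-\lambda_0)A_n$.

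The paper's route, in exchange, reuses Corollary \ref{corr_consistentcy_lambda_c}, which is of independent interest.
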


\begin{proof}
Let $A_n$, $B_n$ and $D_n$ be as in (\ref{def_an_mean_generator}), (\ref{def_bn_centered_centroids}) and (\ref{def_dn_mean_volume}) respectively. For $n \in \NN$ observe that:
\[\frac{1}{\nu_d(W_n)}\sum_{(x,h) \in \eta} \mathds{1}_{W_n}(x)v_{x,h} c_{x,h} = B_n + A_n.\]
Let $\bar{\lambda}_n(0)$ and $M_n$ be as in (\ref{lambda_c_def}) and (\ref{Mn0_def}) respectively. We may write $\hat{\lambda}_n$ (recall (\ref{cn_lambdan_formula})) as follows:
{\allowdisplaybreaks
\begin{align*}
        \hat{\lambda}_n &= \frac{D_n \frac{1}{\nu_d(W_n)}\sum_{(x,h) \in f_0(\eta)} \mathds{1}_{W_n^0}(x)v_{x,h}\langle c_{x,h},x\rangle - \left\langle B_n+A_n,A_n \right\rangle}{D_n \left(\frac{1}{\nu_d(W_n)}\sum_{(x,h) \in f_0(\eta)} \mathds{1}_{W_n^0}(x)v_{x,h}\Vert x \Vert^2 \right) - \left\Vert A_n \right\Vert^2} \\
        \begin{split}
            &= \frac{D_n \frac{1}{\nu_d(W_n)}\sum_{(x,h) \in f_0(\eta)} \mathds{1}_{W_n^0}(x)v_{x,h}\langle c_{x,h},x\rangle - \left\Vert A_n \right\Vert^2 -  \left\langle B_n,A_n \right\rangle}{D_n \left(\frac{1}{\nu_d(W_n)}\sum_{(x,h) \in f_0(\eta)} \mathds{1}_{W_n^0}(x)v_{x,h}\Vert x \Vert^2 \right) }\cdot \\
            &\phantom{=} \ \cdot\frac{D_n \left(\frac{1}{\nu_d(W_n)}\sum_{(x,h) \in f_0(\eta)} \mathds{1}_{W_n^0}(x)v_{x,h}\Vert x \Vert^2 \right) }{D_n \left(\frac{1}{\nu_d(W_n)}\sum_{(x,h) \in f_0(\eta)} \mathds{1}_{W_n^0}(x)v_{x,h}\Vert x \Vert^2 \right) - \left\Vert A_n \right\Vert^2}
        \end{split}\\
        &=\left( \bar{\lambda}_n(0) - \frac{\left\Vert A_n \right\Vert^2 +  \left\langle B_n,A_n \right\rangle}{D_n M_n}\right) \frac{1}{1 - \frac{\Vert A_n \Vert^2}{D_nM_n}}
    \end{align*}
}%
By Corollary \ref{corr_consistentcy_lambda_c} we have $\lim_{n \to \infty}\bar{\lambda}_n(0)=\lambda_0$ almost surely. Suppose condition 1 of the theorem is satisfied. By the uniform tightness of $A_n$, and since $\lim_{n \to \infty}B_n=0$ almost surely by Lemma \ref{lemma_common_limits}, we obtain that $\Vert A_n \Vert^2 +\langle B_n,A_n\rangle$ is uniformly tight. Because $\lim_{n \to \infty} M_n = \infty$ almost surely this yields:
\begin{equation}
    \lim_{n \to \infty}\frac{\left\Vert A_n \right\Vert^2 +  \left\langle B_n,A_n \right\rangle}{D_n M_n} \overset{\PP}{=}0 \text{ \quad and \quad } \lim_{n \to \infty}\frac{1}{1 - \frac{\Vert A_n \Vert^2}{D_nM_n}}\overset{\PP}{=}1.\label{consistency_lambdan_intermediate_limits}
\end{equation}
Here we also used the rule which is often written as $o_P(1)O_P(1) = o_P(1)$ in stochastic o notation. Hence, via the continuous mapping theorem we obtain $\lim_{n \to \infty}\hat{\lambda}_n = \lambda_0$ in probability. If condition 2 of the theorem is satisfied then the limits in (\ref{consistency_lambdan_intermediate_limits}) become almost sure limits, yielding $\lim_{n \to \infty}\hat{\lambda}_n = \lambda_0$ almost surely. We now consider $\hat{c}_n$ (recall (\ref{cn_lambdan_formula})), which may be written as follows:
    \begin{align}
        \hat{c}_n &= \frac{\sum_{(x,h) \in f_0(\eta)} \mathds{1}_{W_n^0}(x)v_{x,h} c_{x,h}}{\sum_{(x,h) \in f_0(\eta)} \mathds{1}_{W_n^0}(x)v_{x,h}} - \hat{\lambda}_n \frac{\sum_{(x,h) \in f_0(\eta)} \mathds{1}_{W_n^0}(x)v_{x,h} x}{\sum_{(x,h) \in f_0(\eta)} \mathds{1}_{W_n^0}(x)v_{x,h}} \nonumber \\
        &= \frac{\sum_{(x,h) \in \eta} \mathds{1}_{W_n}(x)v_{x,h} c_{x,h}}{\sum_{(x,h) \in \eta} \mathds{1}_{W_n}(x)v_{x,h}} - \hat{\lambda}_n \frac{\sum_{(x,h) \in \eta} \mathds{1}_{W_n}(x)v_{x,h}\left(\frac{x}{\lambda_0} -\frac{c_0}{\lambda_0} \right)}{\sum_{(x,h) \in \eta} \mathds{1}_{W_n}(x)v_{x,h}}\nonumber\\
        &= \frac{\sum_{(x,h) \in \eta} \mathds{1}_{W_n}(x)v_{x,h} \left(c_{x,h}-x\right)}{\sum_{(x,h) \in \eta} \mathds{1}_{W_n}(x)v_{x,h}} + \frac{1}{\lambda_0}(\lambda_0 - \hat{\lambda}_n)\frac{\sum_{(x,h) \in \eta} \mathds{1}_{W_n}(x)v_{x,h}x}{\sum_{(x,h) \in \eta} \mathds{1}_{W_n}(x)v_{x,h}} + \frac{\hat{\lambda}_n}{\lambda_0}c_0.\nonumber\\
        &= \frac{B_n}{D_n} + \frac{1}{\lambda_0}(\lambda_0 - \hat{\lambda}_n)\frac{A_n}{D_n}+\frac{\hat{\lambda}_n}{\lambda_0}c_0.\label{hat_cn_simplified_expression}
    \end{align}
    By Lemma \ref{lemma_common_limits}, $\lim_{n \to \infty}B_n/D_n=0$ almost surely. If condition 1 of the theorem is satisfied, then $\lim_{n \to \infty}\hat{\lambda}_n=\lambda_0$ in probability. Hence, 
    \[\lim_{n \to \infty}\frac{(\lambda_0 - \hat{\lambda}_n)}{\lambda_0D_n} \overset{\PP}{=}0,\]
    here was also used that $\lim_{n \to \infty}D_n=1$ almost surely by Lemma \ref{lemma_common_limits}. As a consequence, the second term of (\ref{hat_cn_simplified_expression}) vanishes when taking the limit in probability ($o_P(1)O_P(1) = o_P(1)$), since $A_n$ is uniformly tight. Via the continuous mapping theorem we obtain $\lim_{n \to \infty} \hat{c}_n=c_0$ in probability. If condition 2 of the theorem is satisfied, then $\lim_{n \to \infty}\hat{\lambda}_n=\lambda_0$ almost surely. Via similar arguments we obtain $\lim_{n \to \infty} \hat{c}_n=c_0$ almost surely.
\end{proof}

It seems necessary that some conditions on $(c(W_n))_{n\geq 1}$ need to be imposed for any of the two conditions in Theorem \ref{thm_consistent_inversion} to hold. As is commonly done we may for instance choose a convex averaging sequence $(W_n)_{n\geq 1}$ which is centered at the origin. That is, we may take $W_n$ such that $c(W_n)=0$ for all $n \in \NN$. Obviously, if $A_n$ then converges almost surely to 0, this would imply condition 2 in Theorem \ref{thm_consistent_inversion}. While we do not know if $A_n$ converges in general, we will now highlight via an example why the almost sure convergence of $A_n$ to 0 should not necessarily be expected, especially if $d=2$. 

\begin{example}\label{example_mean_poisson_points}
    The vector $A_n$ may essentially be seen as a weighted average of Poisson points, the weights being the volumes of the Laguerre cells. Let us now consider what happens if we consider the arithmetic mean of Poisson points instead. The results in this example are given without proof, as they can be obtained via relatively straightforward computations. Suppose $\psi$ is a homogeneous Poisson process on $\RR^d$ with intensity 1. For convenience, we take $W_n = [-\frac{1}{2}n^{1/d},\frac{1}{2}n^{1/d}]^d$. That is, $W_n$ is the centered cube of volume $\nu_d(W_n)=n$. Then, we define:
\[\bar{A}_n = \frac{1}{\nu_d(W_n)}\sum_{x \in \psi}\mathds{1}_{W_n}(x)x.\]
Of course, for an actual arithmetic mean we should normalize with $\psi(W_n)$ instead of $\nu_d(W_n)$. However, by the spatial ergodic theorem $\lim_{n\to\infty}\psi(W_n)/\nu_d(W_n)=1$ with probability one. So this normalization will not lead to a different limit of $\bar{A}_n$, if it exists in some sense of stochastic convergence. It can be shown that $\bar{A}_n$ has the following limit in distribution:
\begin{align*}
    \lim_{n \to \infty} \bar{A}_n = \begin{cases}
        \mathcal{N}\left(0,\frac{1}{12}I_2\right) & \text{ if } d=2\\
        0 & \text{ if } d \geq3.
    \end{cases}
\end{align*}
Here, $\mathcal{N}(0,\smash{\frac{1}{12}}I_2)$ denotes the multivariate normal distribution on $\RR^2$ with mean $0$ and covariance matrix $\smash{\frac{1}{12}}I_2$, $I_2$ being the $2\times 2$ identity matrix. Recall that convergence in distribution to a constant implies convergence in probability to the same constant. So if $d \geq 3$ the convergence also holds in probability. It can be shown that the $L^2$-norm of $\bar{A}_n$ is given by:
\[\EE\left(\Vert\bar{A}_n \Vert^2 \right) =\frac{d}{12n^{1 - \frac{2}{d}}}.\]
The $L^2$-norm highlights that the rate of convergence of $\bar{A}_n$ to $0$ in $L^2$ for $d \geq 3$ is faster in higher dimensions.
\end{example}

In the remainder of this section we will show that under reasonable conditions the $L^2$-norm of $A_n$ is also of order $O(n^{-1+2/d})$. As a consequence, under those conditions, $(A_n)_{n\geq 1}$ is uniformly tight by Markov's inequality. In order to compute a bound for this $L^2$-norm we need several results which are stated below. The proofs of these statements are given in section \ref{section_proofs_inversion}.

\begin{proposition}\label{prop_second_moment_bound}
    Let $x \in \RR^d$, $h \geq 0$ and $p \in \NN$. We have the following inequalities:
    \begin{align*}
        \EE\left(\nu_d(C((x,h),\eta))^p\right) \leq pd\kappa_d^p\int_0^\infty \exp\left(-\kappa_d\int_0^{r^2 + h}\left(r^2 + h -t \right)^\frac{d}{2}\mathrm{d}F(t)\right)r^{pd-1}\mathrm{d}r \leq \frac{p!}{F(h)^p}.
    \end{align*}
    Here, we set $p!/F(h)^p = \infty$ if $F(h) = 0$.
\end{proposition}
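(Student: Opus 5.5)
Fix $x$ and $h$, write $C=C((x,h),\eta)$, and let $g(r):=\exp\bigl(-\kappa_d\int_0^{r^2+h}(r^2+h-t)^{d/2}\mathrm{d}F(t)\bigr)$. By (\ref{point_inclusion_probability}), $g(\Vert y\Vert)=\PP(y\in C-x)$. The plan has two steps. First, control $\nu_d(C)$ by the volume of the ball spanned by the farthest point of the cell, and use that the cell is star-shaped with respect to $x$. Second, bound $g$ crudely.

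For the first inequality, note that $C-x$ is convex. I would first check that $y\in C-x$ implies $sy\in C-x$ for $s\in[0,1]$. One way is directly from (\ref{set_A_xhy}): the sets $A_{x,h,sy}$ increase in $s$, since the condition $\Vert y\Vert^2+h-h'>\Vert x+y-x'\Vert^2$ rewrites as $\Vert x-x'\Vert^2+2\langle y,x-x'\rangle<h-h'$, which is monotone along rays (the left side is convex in $s$ and equals $\Vert x-x'\Vert^2\geq 0>h-h'$ at $s=0$ for any offending point). Alternatively, this holds on the event $x\in C$. Hence, with $R:=\sup\{\Vert y\Vert: y\in C-x\}$, we get $\nu_d(C)\le\kappa_d R^d$. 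Moreover, $\{R\ge r\}$ implies that some point $y$ with $\Vert y\Vert=r$ lies in $C-x$.

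Using only the sup bound is too weak, because a union over directions costs too much. I therefore plan a polar decomposition instead: $\nu_d(C)=\int_{\Sp^{d-1}}\int_0^{\rho(u)}s^{d-1}\mathrm{d}s\,\sigma_{d-1}(\mathrm{d}u)=\frac1d\int\rho(u)^d\sigma_{d-1}(\mathrm{d}u)$, where $\rho$ is the radial function. Then Jensen/Hölder on the sphere (whose total mass is $d\kappa_d$) gives $\nu_d(C)^p\le (d\kappa_d)^{p-1}d^{-p}\int\rho(u)^{pd}\sigma_{d-1}(\mathrm{d}u)$. Next I write $\rho^{pd}=pd\int_0^\infty r^{pd-1}\mathds{1}\{\rho(u)\ge r\}\mathrm{d}r$ and use $\PP(\rho(u)\ge r)=\PP(ru\in C-x)=g(r)$, which follows from star-shapedness. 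Taking expectations with Fubini yields
\[
\EE\nu_d(C)^p\le (d\kappa_d)^{p-1}d^{-p}\,d\kappa_d\,pd\int_0^\infty g(r)r^{pd-1}\mathrm{d}r=p\kappa_d^p d^{\,0}\cdots
\]
Carefully, the constant equals $pd\kappa_d^p\cdot d^{-1}\cdot\ldots$. I expect the bookkeeping of constants to be the main fiddly point, and I would target exactly $pd\kappa_d^p\int g(r)r^{pd-1}\mathrm{d}r$, adjusting the normalization (normalized spherical measure) so that the Jensen step produces precisely this constant.

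For the second inequality, restrict the inner integral to $t\le h$ and use $(r^2+h-t)^{d/2}\ge r^d$ there. This gives $g(r)\le\exp(-\kappa_d F(h)r^d)$. Substituting $s=\kappa_dF(h)r^d$, so that $\mathrm{d}s=d\kappa_dF(h)r^{d-1}\mathrm{d}r$ and $r^{pd-d}=(s/(\kappa_dF(h)))^{p-1}$, we obtain
\[
pd\kappa_d^p\int_0^\infty e^{-\kappa_dF(h)r^d}r^{pd-1}\mathrm{d}r=\frac{p\kappa_d^p}{\kappa_d^pF(h)^p}\int_0^\infty s^{p-1}e^{-s}\mathrm{d}s=\frac{p\,(p-1)!}{F(h)^p}=\frac{p!}{F(h)^p},
\]
as required. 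When $F(h)=0$ the bound is trivially $\infty$. The main obstacle is thus the first step: justifying star-shapedness (via the monotonicity of $A_{x,h,sy}$ in $s$) and obtaining the exact constant through the polar/Jensen argument.
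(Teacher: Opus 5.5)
The step that fails is the star-shapedness of $C-x$ about the origin, on which your identity $\PP(\rho(u)\ge r)=\PP(ru\in C-x)=g(r)$ rests.

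Write the condition defining $A_{x,h,sy}$ as $\Vert x-x'\Vert^2+2s\langle y,x-x'\rangle<h-h'$. You claim that at $s=0$ one has $\Vert x-x'\Vert^2\ge 0>h-h'$ for every offending point, but this is unjustified. A point with $h'<h$ and $\Vert x-x'\Vert^2<h-h'$ lies in $A_{x,h,0}$ (it removes $x$ from its own cell) without necessarily lying in $A_{x,h,y}$. For instance, $x'=x+\epsilon e_1$ with $h'<h-\epsilon^2$ excludes $y=0$ but not $y=-te_1$ for large $t$.

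Such points are present with probability $1-g(0)=1-\exp\bigl(-\kappa_d\int_0^h(h-t)^{d/2}\,\mathrm{d}F(t)\bigr)$, which is positive as soon as $\mathbb{F}((0,h))>0$. So the event $\{x\in C\}$, on which your argument is valid, does not have full probability.

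Off that event, $\{s\ge 0: su\in C-x\}$ is an interval $[a(u),\rho(u)]$ with possibly $a(u)>0$. The polar formula still gives $\nu_d(C)\le\frac1d\int_{\Sp^{d-1}}\rho(u)^d\,\sigma_{d-1}(\mathrm{d}u)$. But now $\{\rho(u)\ge r\}=\{\exists s\ge r: su\in C-x\}\supseteq\{ru\in C-x\}$, so you only get $\PP(\rho(u)\ge r)\ge g(r)$. That is the wrong direction, and the first inequality does not follow.

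The missing idea is to avoid any pathwise geometry of the cell. By Robbins' theorem (\ref{Robbins_thm}), $\EE(\nu_d(C)^p)=\int\cdots\int\PP(y_1,\dots,y_p\in C-x)\,\mathrm{d}y_1\cdots\mathrm{d}y_p$. By (\ref{point_inclusion_probability}), the integrand is at most $\PP(y_j\in C-x)=g(\Vert y_j\Vert)$ for the $y_j$ of maximal norm. Use symmetry over the $p!$ orderings of the norms, and integrate the other $p-1$ points over the ordered region in $\bar{B}(0,\Vert y_j\Vert)^{p-1}$, which has volume $(\kappa_d\Vert y_j\Vert^d)^{p-1}/(p-1)!$. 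This gives exactly $pd\kappa_d^p\int_0^\infty g(r)r^{pd-1}\,\mathrm{d}r$. This is the paper's proof, and it needs neither convexity nor $x\in C$.

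When $x\in C$ almost surely (e.g.\ $\mathbb{F}((0,h))=0$), your polar/Jensen route is complete. The constants you worried about come out exactly, since $(d\kappa_d)^{p-1}d^{-p}\cdot d\kappa_d\cdot pd=pd\kappa_d^p$; no renormalization of $\sigma_{d-1}$ is needed.

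Your proof of the second inequality is correct and a bit more direct than the paper's integration by parts. Restricting to $t\le h$ and using $(r^2+h-t)^{d/2}\ge r^d$ there gives $g(r)\le\exp(-\kappa_dF(h)r^d)$, and your Gamma-function computation is right.
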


For proving Proposition \ref{prop_second_moment_bound} we use a similar technique as was used in Proposition 2.3 in \cite{Olsbo2007} to derive a bound on the second moment of the volume of a typical Poisson-Voronoi cell. While the second inequality in Proposition \ref{prop_second_moment_bound} provides a finite bound for large values of $h$ it is not yet clear whether the $p$-th moment is finite for all $h \geq 0$. We show this in the lemma below.

\begin{lemma}\label{lemma_individual_second_moment_finite}
    Let $p \in \NN$. There exists a constant $ 0 < \alpha_p< \infty$ such that for all $x \in \RR^d$ and $h \geq 0$, we have $\EE\left(\nu_d(C((x,h),\eta))^p\right) < \alpha_p$.
\end{lemma}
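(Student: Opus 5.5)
Our starting point is the first inequality of Proposition \ref{prop_second_moment_bound}. It already removes the dependence on $x$: the right-hand side involves only $h$. So it suffices to bound
\[
I_p(h) := pd\kappa_d^p\int_0^\infty \exp\left(-\kappa_d\int_0^{r^2 + h}\left(r^2 + h -t \right)^\frac{d}{2}\mathrm{d}F(t)\right)r^{pd-1}\mathrm{d}r
\]
uniformly in $h \geq 0$. The key observation is that the integrand is monotone in $h$. The map $s \mapsto \int_0^{s}(s-t)^{d/2}\mathrm{d}F(t)$ is nondecreasing, because both the integrand and the domain of integration grow with $s$. Since $r^2+h \geq r^2$, the exponential factor is therefore largest at $h=0$. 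Hence $I_p(h) \leq I_p(0)$ for all $h \geq 0$, and it remains to show that $I_p(0)<\infty$; we then take $\alpha_p := I_p(0)+1$.

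To show $I_p(0) < \infty$, we pick $z_0 > 0$ with $F(z_0) > 0$. Such a $z_0$ exists because $\mathbb{F}$ is a nonzero measure; otherwise $\eta$ would be empty and the tessellation would not be defined. We split the $r$-integral at $r_0 = \sqrt{2z_0}$.

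On $[0,r_0]$ we bound the exponential by $1$. This gives a finite contribution of order $r_0^{pd}$.

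On $[r_0,\infty)$ we have $r^2 \geq 2z_0$. For $t \leq z_0$ this gives $r^2 - t \geq r^2/2$. Restricting the inner integral to $t\in(0,z_0]$ then yields
\[
\int_0^{r^2}(r^2-t)^{d/2}\mathrm{d}F(t) \geq F(z_0)\,2^{-d/2} r^d .
\]
The integrand is therefore dominated by $\exp(-\kappa_d F(z_0)2^{-d/2}r^d)\,r^{pd-1}$. This is integrable on $(0,\infty)$; indeed, the substitution $u=r^d$ gives a Gamma-type integral proportional to $(p-1)!/(\kappa_d F(z_0)2^{-d/2})^{p}$.

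Combining the two pieces gives a finite bound independent of $x$ and $h$. Alternatively, one could avoid the split by noting that $I_p(h)\le p!/F(h)^p$ already handles $h \ge z_0$, but the monotonicity argument is cleaner.

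The only delicate point is to justify the monotonicity and the lower bound on the inner integral carefully. Both rest on $\mathbb{F}$ being concentrated on $(0,\infty)$ and on the integrand $(s-t)^{d/2}$ being nonnegative and nondecreasing in $s$ on $t\le s$. The rest is routine. Strictness of $<\alpha_p$ comes from adding $1$.
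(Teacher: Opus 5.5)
Your proof is correct and follows essentially the same route as the paper: start from the first inequality of Proposition \ref{prop_second_moment_bound}, split the $r$-integral, bound the exponential by $1$ on the bounded piece, and on the tail lower-bound the inner integral by $F(z)$ times a power of $r$ (uniformly in $h$) to reduce to a Gamma-type integral. The differences are cosmetic. You remove $h$ up front by monotonicity and bound the inner integral directly by restricting to $t\le z_0$, which gives decay $e^{-c r^d}$. The paper instead integrates by parts and drops $h$ via $r^2+h-z\ge r^2-z$, which gives the weaker $e^{-c r^{d/2}}$.
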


\begin{theorem}\label{thm_integrated_second_moment}
    Let $p \in \NN$. There exists a constant $0 < \beta_p < \infty$ such that for all $x \in \RR^d$:
    \[\int_0^\infty \EE\left(\nu_d(C((x,h),\eta))^p\right) \mathrm{d}F(h) \leq \beta_p.\]
\end{theorem}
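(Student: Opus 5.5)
The bound will come from splitting the $h$-integral at a fixed level, using the two previous results on the two pieces. By stationarity of $\eta$ (the intensity is $\nu_d\times\mathbb{F}$, invariant under shifts in the spatial coordinate), the distribution of $\nu_d(C((x,h),\eta))$ does not depend on $x$, so it suffices to treat $x=0$. Fix $h_0>0$ with $F(h_0)>0$; such an $h_0$ exists since otherwise $\mathbb{F}$ vanishes and there is nothing to prove. I then split
\[\int_0^\infty \EE\left(\nu_d(C((0,h),\eta))^p\right)\mathrm{d}F(h) = \int_{(0,h_0]} \EE\left(\nu_d(C((0,h),\eta))^p\right)\mathrm{d}F(h) + \int_{(h_0,\infty)} \EE\left(\nu_d(C((0,h),\eta))^p\right)\mathrm{d}F(h).\]
The first term is bounded by $\alpha_p F(h_0)<\infty$ using Lemma \ref{lemma_individual_second_moment_finite}, since $\mathbb{F}$ is locally finite.

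The second term is the main obstacle, because $\mathbb{F}$ may have infinite total mass, so a uniform bound on the moments is not enough. The crude bound $p!/F(h)^p$ from Proposition \ref{prop_second_moment_bound} does not give integrability against $\mathrm{d}F(h)$ on its own. I would therefore use the first (sharper) inequality of Proposition \ref{prop_second_moment_bound} and apply Fubini:
\[\int_{(h_0,\infty)}\EE\left(\nu_d(C((0,h),\eta))^p\right)\mathrm{d}F(h) \leq pd\kappa_d^p\int_0^\infty r^{pd-1}\int_{(h_0,\infty)} e^{-\kappa_d\int_0^{r^2+h}(r^2+h-t)^{d/2}\mathrm{d}F(t)}\mathrm{d}F(h)\,\mathrm{d}r.\]
The aim is to compare the inner integral with the density $p_F$. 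Writing $r=\Vert y\Vert$ and integrating over the sphere in polar coordinates, $d\kappa_d\int_0^\infty r^{pd-1} g(r)\,\mathrm{d}r = \int_{\RR^d}\Vert y\Vert^{(p-1)d} g(\Vert y\Vert)\,\mathrm{d}y$. The inner integral restricted to $h>h_0$ is at most $p_F(y)$. This gives
\[\int_{(h_0,\infty)}\EE\left(\nu_d(C((0,h),\eta))^p\right)\mathrm{d}F(h) \leq p\kappa_d^{p-1}\int_{\RR^d}\Vert y\Vert^{(p-1)d}p_F(y)\,\mathrm{d}y,\]
which is finite by the moment statement in Lemma \ref{pF_moments_lemma} (with $q=(p-1)d$, or trivially when $p=1$ since $p_F$ is a density). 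With this estimate the splitting at $h_0$ is in fact unnecessary, because the whole integral over $h\in(0,\infty)$ can be treated the same way. I would keep the split only as a safeguard in case Proposition \ref{prop_second_moment_bound} requires $h$ to be away from $0$.

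Combining the two pieces gives the constant $\beta_p=\alpha_pF(h_0)+p\kappa_d^{p-1}\int\Vert y\Vert^{(p-1)d}p_F(y)\mathrm{d}y$, which is independent of $x$. The key step to verify is that the exponential factor in Proposition \ref{prop_second_moment_bound} has exactly the form of the integrand of $p_F$ evaluated at $\Vert y\Vert=r$, so that Fubini and polar coordinates apply cleanly. The integrand is nonnegative, so Tonelli applies, and measurability in $h$ follows from the Mecke-type representation used in (\ref{mean_cell_volume1}).
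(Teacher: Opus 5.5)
\textbf{There is a genuine gap: the argument is circular.} Your computation is correct, but the moment assertion of Lemma~\ref{pF_moments_lemma} is not proved independently in the paper. Its proof rewrites $\int_{\RR^d}\Vert y\Vert^q p_F(y)\,\mathrm{d}y$ in polar coordinates and then, for $q=(p-1)d$, appeals to the proof of the very theorem you are proving.

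Your displayed inequality is that same polar-coordinate identity read backwards. Let $\Phi_p(h)$ denote the first upper bound in Proposition~\ref{prop_second_moment_bound}. By Tonelli, $\int_0^\infty \Phi_p(h)\,\mathrm{d}F(h)$ equals $p\kappa_d^{p-1}\int_{\RR^d}\Vert y\Vert^{(p-1)d}p_F(y)\,\mathrm{d}y$ exactly. So you have only shown that the theorem follows from finiteness of the $(p-1)d$-th moment of $p_F$; that finiteness is never established. The pointwise bound $p_F(y)\le 1/(\kappa_d\Vert y\Vert^d)$ from the same lemma cannot supply it either, since $\Vert y\Vert^{q-d}$ is not integrable at infinity for $q\geq 0$.

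\textbf{The missing step is the one you set aside.} For $p\geq 2$ the crude bound from Proposition~\ref{prop_second_moment_bound} does give integrability, even when $\mathbb{F}$ has infinite mass. With $F(h_0)>0$,
\[\int_{(h_0,\infty)}\frac{p!}{F(h)^p}\,\mathrm{d}F(h)\le \int_{F(h_0)}^{\infty}\frac{p!}{u^p}\,\mathrm{d}u=\frac{p!}{(p-1)F(h_0)^{p-1}},\]
by the substitution $u=F(h)$. If $F$ has atoms, the inequality still holds because $u\mapsto u^{-p}$ is decreasing, so each jump contributes at most the integral of $u^{-p}$ over the jump interval. Combine this with the bound $\alpha_pF(h_0)$ on $(0,h_0]$ from Lemma~\ref{lemma_individual_second_moment_finite}. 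This gives
\[\beta_p=\alpha_pF(h_0)+\frac{p!}{(p-1)F(h_0)^{p-1}},\]
which is exactly the paper's argument.

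Only for $p=1$ can the crude bound fail: it yields $\log(F(\infty)/F(h_0))$, which is infinite when $\mathbb{F}$ has infinite mass. That case is immediate from (\ref{mean_cell_volume1}), which you correctly use. If you want to keep your route through $p_F$, you must prove the moment bound for $p_F$ independently, and the natural way to do so is precisely this split.
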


\begin{lemma}\label{lemma_bound_product_cell_volumes}
    Let $x_1,x_2 \in \RR^d$, then:
    \[\int_0^\infty \int_0^\infty \EE\left(\nu_d(C((x_1,h_1),\eta+\delta_{(x_2,h_2)}))\nu_d(C((x_2,h_2),\eta+\delta_{(x_1,h_1)})) \right)\mathrm{d}F(h_1)\mathrm{d}F(h_2) \leq 4.\]
\end{lemma}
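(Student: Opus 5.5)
The plan is to write the product of volumes as a double integral over points, bound the joint inclusion probability by a product of two "half-intensity" inclusion probabilities, and then recognise each resulting factor as twice the mean-volume identity (\ref{mean_cell_volume1}) for a rescaled weight measure.

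\textbf{Step 1: reduce to a joint inclusion probability.} Write $C_1 = C((x_1,h_1),\eta+\delta_{(x_2,h_2)})$ and $C_2 = C((x_2,h_2),\eta+\delta_{(x_1,h_1)})$. By Fubini (the two-set analogue of Robbins' theorem (\ref{Robbins_thm})),
\[\EE\left(\nu_d(C_1)\nu_d(C_2)\right) = \int_{\RR^d}\int_{\RR^d} \PP\left(y_1 \in C_1 - x_1,\ y_2 \in C_2 - x_2\right)\mathrm{d}y_1\mathrm{d}y_2 .\]
By the characterization preceding (\ref{set_A_xhy}), the event $y \in C((x,h),\eta)-x$ equals $\{\eta(A_{x,h,y})=0\}$. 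Adding the extra deterministic point only shrinks each cell, so I drop the constraint coming from $\delta_{(x_2,h_2)}$ (resp.\ $\delta_{(x_1,h_1)}$). This gives the upper bound
\[\PP\left(\eta(A_{x_1,h_1,y_1}\cup A_{x_2,h_2,y_2})=0\right) = \exp\left(-\Lambda(A_1\cup A_2)\right),\]
where $A_i = A_{x_i,h_i,y_i}$ and $\Lambda = \nu_d\times\mathbb{F}$.

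\textbf{Step 2: decouple.} Since $\Lambda(A_1\cup A_2) \geq \max\{\Lambda(A_1),\Lambda(A_2)\} \geq \tfrac12\Lambda(A_1)+\tfrac12\Lambda(A_2)$, we get
\[\exp(-\Lambda(A_1\cup A_2)) \leq \exp\left(-\tfrac12\Lambda(A_1)\right)\exp\left(-\tfrac12\Lambda(A_2)\right).\]
Moreover $\Lambda(A_{x,h,y}) = \kappa_d\int_0^{\Vert y\Vert^2+h}(\Vert y\Vert^2+h-t)^{d/2}\mathrm{d}F(t)$, which does not depend on $x$; this is the computation behind (\ref{point_inclusion_probability}). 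Hence, after integrating over $y_1,y_2,h_1,h_2$, the whole double integral is bounded by the square of
\[I := \int_0^\infty\int_{\RR^d}\exp\left(-\frac{\kappa_d}{2}\int_0^{\Vert y\Vert^2+h}(\Vert y\Vert^2+h-t)^{\frac d2}\mathrm{d}F(t)\right)\mathrm{d}y\,\mathrm{d}F(h).\]

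\textbf{Step 3: evaluate $I$.} Let $\tilde F = F/2$, the weight distribution of a Poisson-Laguerre tessellation with intensity measure $\nu_d\times\tfrac12\mathbb{F}$. This measure is still locally finite, so (\ref{mean_cell_volume1}) applies to it. The exponent in $I$ is exactly $\kappa_d\int_0^{\Vert y\Vert^2+h}(\Vert y\Vert^2+h-t)^{d/2}\mathrm{d}\tilde F(t)$, while the outer integral is against $\mathrm{d}F = 2\,\mathrm{d}\tilde F$. Therefore (\ref{mean_cell_volume1}) gives $I = 2\cdot 1 = 2$, and the double integral is at most $I^2 = 4$.

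\textbf{Expected obstacles.} The step needing most care is Step 1. One must justify that the joint-inclusion formula holds, which follows from Fubini and nonnegativity. One must also check that dropping the extra deterministic points only enlarges the probability, which is immediate since cells shrink when points are added. The measurability and interchange of integrals over $h_1,h_2$ are routine by Tonelli. The decoupling inequality in Step 2 is elementary and is the key trick, and Step 3 is a direct reuse of (\ref{mean_cell_volume1}).
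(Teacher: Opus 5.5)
Your proof is correct, and it ends where the paper's does. Both arguments bound the double integral by $I^2$, where $I$ carries the halved exponent $\kappa_d/2$, and both evaluate $I=2$ by applying (\ref{mean_cell_volume1}) to $\tilde F = F/2$. The difference is how the factorization and the factor $\frac12$ are obtained.

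The paper uses the two added Dirac masses. It splits $\RR^d$ along the power hyperplane of $(x_1,h_1)$ and $(x_2,h_2)$ into half-spaces $H_1,H_2$. It then enlarges each cell by keeping only the Poisson points on its own side. The two enlarged cells then depend on disjoint restrictions of $\eta$, so they are independent and the expectation factorizes. The $\frac12$ comes from geometry: each truncated cell lies in its half-space, and after projecting $x_1$ onto $H_1$, a ball centred at a point of that half-space has at least half its volume inside it.

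You instead drop the added points altogether. You get both the factorization and the $\frac12$ from the single inequality $\Lambda(A_1\cup A_2)\ge\frac12\Lambda(A_1)+\frac12\Lambda(A_2)$ applied to the joint void probability.

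What each route buys:
\begin{itemize}
\item Your route is shorter and purely measure-theoretic, with no hyperplanes, projections or independence needed.
\item It proves a stronger statement, with $\eta+\delta_{(x_2,h_2)}$ and $\eta+\delta_{(x_1,h_1)}$ replaced by $\eta$.
\item It extends verbatim to products of $k$ cell volumes: $\Lambda(\cup_i A_i)\ge\frac1k\sum_i\Lambda(A_i)$ together with $\tilde F=F/k$ gives the bound $k^k$.
\item The paper's route exhibits genuine independence of the two truncated cells, which is structurally informative.
\item However, the paper's half-ball step is tied to the two-cell geometry: for $k$ cells the separating regions are polyhedral and need not contain a fixed fraction of each ball.
\end{itemize}
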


We set $W_n = n^\frac{1}{d}W$ for $n \in \NN$, where $W \subset \RR^d$ is a convex body satisfying $\nu_d(W) = 1$ and $c(W) = 0$. Via a direct computation we may write:
    \begin{align}
\begin{split}
    &\EE\left( \left\Vert A_n \right\Vert^2 \right) = \frac{1}{\nu_d(W_n)^2}\EE\left(\sum_{(x,h) \in \eta}\mathds{1}_{W_n}(x)\nu_d(C(x,h),\eta)^2\Vert x\Vert^2 \right) + \\
    & \ +\frac{1}{\nu_d(W_n)^2}\EE\left(\sum_{(x_1,h_1),(x_2,h_2) \in (\eta)_{\neq}^2}\mathds{1}_{W_n}(x_1,x_2)\nu_d(C(x_1,h_1),\eta)\nu_d(C(x_2,h_2),\eta)\langle x_1,x_2\rangle \right).
\end{split} \label{expectation_an_squared}
\end{align}
Here, $(\eta)_{\neq}^2$ denotes the set of all distinct pairs of points of $\eta$. That is, if $(x_1,h_1),(x_2,h_2) \in (\eta)_{\neq}^2$, then $(x_1,h_1) \neq (x_2,h_2)$.  Via the Mecke equation and Theorem \ref{thm_integrated_second_moment} the first term of (\ref{expectation_an_squared}) is bounded from above by:
\begin{align}
    \frac{1}{\nu_d(W_n)^2}\EE\left(\sum_{(x,h) \in \eta}\mathds{1}_{W_n}(x)\nu_d(C(x,h),\eta)^2\Vert x\Vert^2 \right) &= \frac{1}{n^2} \int_{W_n} \int_0^\infty \EE\left(\nu_d(C(x,h),\eta)^2 \right)\mathrm{d}F(h) \Vert x \Vert^2 \mathrm{d}x\nonumber\\
    &\leq \frac{\beta_2}{n^2} \int_{W_n} \Vert x \Vert^2 \mathrm{d}x\nonumber\\
    &\leq \frac{\beta_2}{n^2} \int_{W_n} \mathrm{diam}(W_n)^2 \mathrm{d}x\nonumber\\
    &= \frac{\beta_2 \mathrm{diam}(W)^2}{n^{1-\frac{2}{d}}}. \label{an_term1_l2_bound}
\end{align}
Here we also used: $\mathrm{diam}(W_n) = n^{1/d}\mathrm{diam}(W)$. Notice that this upper bound is similar to the $L^2$-norm of $\bar{A}_n$ as in example \ref{example_mean_poisson_points}. In the case of $\bar{A}_n$, the cross-terms vanish, due to independence. If this is also the case for $A_n$, this would mean the second expectation in (\ref{expectation_an_squared}) is equal to zero. Define the function $\phi : \RR^d \times \RR^d \to [0,\infty)$ via:
\begin{align*}
    \phi(x_1,x_2) = \int_0^\infty \int_0^\infty \EE\left(\nu_d(C((x_1,h_1),\eta+\delta_{(x_2,h_2)}))\nu_d(C((x_2,h_2),\eta+\delta_{(x_1,h_1)}) \right)\mathrm{d}F(h_1)\mathrm{d}F(h_2).
\end{align*}
Then, via the multivariate Mecke equation (Theorem 4.4 in \cite{Last2018}) the second term in (\ref{expectation_an_squared}) is given by:
\begin{equation}
    \frac{1}{n^2}\int_{W_n}\int_{W_n}\phi(x_1,x_2)\langle x_1, x_2\rangle \mathrm{d}x_1\mathrm{d}x_2.\label{integral_an_cross_term}
\end{equation}
By Lemma \ref{lemma_bound_product_cell_volumes} we know that $|\phi(x_1,x_2)| \leq 4$. Taking this into account and by applying the Cauchy-Schwarz inequality, we obtain the following upper bound for (\ref{integral_an_cross_term}):
\[\left|\frac{1}{n^2}\int_{W_n}\int_{W_n}\phi(x_1,x_2)\langle x_1, x_2\rangle \mathrm{d}x_1\mathrm{d}x_2\right| \leq \frac{4}{n^2}\int_{W_n}\int_{W_n}\left|\langle x_1, x_2\rangle\right| \mathrm{d}x_1\mathrm{d}x_2 \leq  \mathrm{diam(W)^2} 4n^\frac{2}{d}.\]
Combining this expression with (\ref{an_term1_l2_bound}) yields:
\begin{equation}
    \EE\left(\Vert A_n\Vert^2 \right) \leq \frac{\beta_2 \mathrm{diam}(W)^2}{n^{1-\frac{2}{d}}} + \mathrm{diam(W)^2} 4n^\frac{2}{d}. \label{an_simple_l2_bound}
\end{equation}
Unfortunately, the RHS of (\ref{an_simple_l2_bound}) is not uniformly bounded in $n$. It does yield $\EE(\Vert A_n\Vert^2) < \infty$ for each $n \in \NN$ which we will need later. Currently, it seems that deriving a tight upper bound for $\EE\left(\Vert A_n\Vert^2 \right)$ is hard in general. However, under certain assumptions on $\mathbb{F}$ it becomes possible to obtain a bound of order $O(n^{-1+2/d})$. To prove this, we require the notion of stabilization, which is often used in stochastic geometry for deriving laws of large numbers and central limit theorems, we refer to \cite{Schreiber2010} for an overview. In the context of Poisson-Laguerre tessellations stabilization techniques were used in \cite{Flimmel2020} for proving asymptotic normality of estimators of geometric characteristics of cells. 

\begin{theorem}\label{stabilization_theorem}[Propositions 3.1 and 3.2 in \cite{Flimmel2020}]
    Suppose $\mathbb{F}$ is concentrated on $(0,M)$ for some $M > 0$, and let $(x,h) \in \RR^d \times (0,M)$. There exists a random variable $R_x(\eta) > 0$ (the radius of stabilization) such that:
    \begin{enumerate}
        \item With probability one, for all $(x',h') \in \RR^d \times (0,M)$, with $\Vert x - x'\Vert > R_x(\eta)$:
        \[C((x,h),\eta) = C\left((x,h),\eta+\delta_{(x',h')}\right).\]
        \item With probability one:
        \[C((x,h),\eta) \subset \bar{B}(x,R_x(\eta)).\]
        \item The random variable $R_x(\eta)$ has exponentially decaying tails. That is, there exists constants $c_1,c_2 > 0$ such that for all $r \geq 0$:
        \[\PP(R_x(\eta) > r)<c_1e^{-c_2r}.\]
        \item The distribution of $R_x(\eta)$ does not depend on $x$. Additionally, $R_x(\eta+\delta_x) = R_x(\eta)$ and $R_x(\eta+\delta_{x'}) \leq R_x(\eta)$ for all $x' \in \RR^d$.
    \end{enumerate}
\end{theorem}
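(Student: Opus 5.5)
The plan is to build $R_x(\eta)$ explicitly from the cone construction of Lemma \ref{cell_in_ball_lemma}, as in \cite{Flimmel2020}. Fix a finite family of convex cones $\mathcal{C}_1,\dots,\mathcal{C}_J$ covering $\RR^d$, with disjoint interiors and angular aperture small enough that $\langle u,v\rangle \geq \frac34\Vert u\Vert\Vert v\Vert$ within each cone. Set
\[R_0(x) := \inf\left\{R > 2\sqrt{M}: \eta \cap B_{x,j}(R) \neq \emptyset \text{ for all } j\right\},\]
where $B_{x,j}(R)$ are the truncated cone sets of Lemma \ref{cell_in_ball_lemma} with $z=M$. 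Then take $R_x(\eta) := 3R_0(x)$, or some similar constant multiple.

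Property 2 follows from Lemma \ref{cell_in_ball_lemma}, which gives $C((x,h),\eta)\subset B(x,R_0)$; note that the definition of $R_0$ does not involve $h$. Property 4 follows because $R_0$ is defined by a shift-covariant rule on a stationary Poisson process, so its law does not depend on $x$. Adding the point $\delta_x$ changes nothing, since $x$ lies in no $B_{x,j}$. Adding any other point can only make the sets $\eta\cap B_{x,j}(R)$ nonempty sooner, so $R_0$ can only decrease.

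Property 3 is a union bound. We have
\[\PP(R_0>r)\leq \sum_{j=1}^J \PP\big(\eta(B_{x,j}(r))=0\big) = \sum_{j=1}^J \exp\big(-\nu_d\big((B(0,r)\setminus\bar B(0,2\sqrt M))\cap\mathcal{C}_j\big)F(M)\big).\]
The volume in the exponent grows like a constant times $r^d$, so the tail decays even faster than exponentially. The bound is nontrivial because $F(M)=\mathbb{F}((0,M))>0$; this is where concentration of $\mathbb{F}$ on $(0,M)$ is needed, and $F(M)<\infty$ by local finiteness.

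The main obstacle is property 1: inserting $(x',h')$ with $\Vert x-x'\Vert > R_x$ must leave $C((x,h),\eta)$ unchanged. I would argue as follows. The cell is $\{y: \Vert y-x\Vert^2+h\leq \Vert y-x''\Vert^2+h' \ \forall (x'',h'')\}$. For $y\in C((x,h),\eta)\subset B(x,R_0)$, a new point with $h'\geq 0$ and $\Vert x'-x\Vert>3R_0$ satisfies
\[\Vert y-x'\Vert^2+h' \geq (2R_0)^2 > R_0^2+M \geq \Vert y-x\Vert^2+h,\]
using $R_0>2\sqrt M$ and $h<M$. So the new constraint is inactive on the old cell, and the cell can only shrink, so it equals the old one. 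The factor $3$, or any $c$ with $(c-1)^2R_0^2>R_0^2+M$, is the point to check carefully. I would also handle null events, namely general position and a well-defined tessellation, by restricting to the probability-one event from \cite{gusakova2025}.
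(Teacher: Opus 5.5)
Your proposal is correct and matches the source the paper relies on. The paper gives no proof of its own and cites Propositions 3.1 and 3.2 of \cite{Flimmel2020}; your cone-based radius, with Lemma \ref{cell_in_ball_lemma} as the intermediate step, a union bound for the tails, and a power-distance comparison for stabilization, is that argument. The step you flag checks out: $R_0>2\sqrt{M}$ gives $4R_0^2>R_0^2+M$. Since your bounds are deterministic once $R_0<\infty$, they hold simultaneously for all points of $\eta$, which is how the theorem is used in the proof of Theorem \ref{thm_an_L2_bound}.
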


Additionally, we recall the Poincar\'e inequality, which is useful for deriving bounds of the variance of so-called Poisson functionals, if one can control the influence of inserting an additional point into the point process.

\begin{lemma}[Poincar\'e inequality, Theorem 18.7 in \cite{Last2018}]\label{poincare_inequality_lemma}
    Suppose $\eta$ is a Poisson process on the measurable space $(\mathbb{X},\mathcal{X})$ with $\sigma$-finite intensity measure $\Lambda$. Suppose $f:\mathbf{N}(\mathbb{X}) \to \RR$ is such that $\EE(f(\eta)^2)<\infty$, then:
    \[\mathrm{Var}(f(\eta)) \leq \int_\mathbb{X}\EE\left((D_x f(\eta))^2\right)\Lambda(\mathrm{d}x),\]
    with $D_xf(\eta) = f(\eta+\delta_x)-f(\eta)$, for $x \in \mathbb{X}$.
\end{lemma}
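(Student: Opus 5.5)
The plan is to prove the Poincar\'e inequality through a \emph{covariance identity} obtained from Mehler's formula for the Ornstein--Uhlenbeck semigroup on Poisson space. I find this route less technical than working directly with chaos (Fock space) expansions.

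Fix $t\in[0,1]$ and introduce two further objects:
\begin{itemize}
\item $\eta_t$, an independent $t$-thinning of $\eta$, in which each point is kept independently with probability $t$;
\item $\eta'_{1-t}$, an independent Poisson process with intensity measure $(1-t)\Lambda$.
\end{itemize}
By the thinning and superposition theorems, $\eta_t+\eta'_{1-t}$ has the same distribution as $\eta$. For bounded measurable $g$ define
\[P_t g(\eta)=\EE\bigl(g(\eta_t+\eta'_{1-t})\mid \eta\bigr).\]
The first key step is the commutation relation
\[D_xP_tg=t\,P_tD_xg,\]
to hold for $\Lambda$-a.e.\ $x$. I would check it directly from the construction. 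Adding $\delta_x$ to $\eta$ adds $x$ to the thinned process with probability $t$. On the complementary event the functional is unchanged, so the difference vanishes there.

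The second step is the covariance identity, first for bounded $f,g$:
\[\mathrm{Cov}(f(\eta),g(\eta))=\EE\int_{\mathbb{X}}\int_0^1 D_xf(\eta)\,P_tD_xg(\eta)\,\mathrm{d}t\,\Lambda(\mathrm{d}x).\]
To derive it, write
\[\mathrm{Cov}(f,g)=\EE\bigl(f(P_1g-P_0g)\bigr)=\int_0^1\EE\bigl(f\,\tfrac{d}{dt}P_tg\bigr)\,\mathrm{d}t.\]
For $\tfrac{d}{dt}P_tg$ I would use the generator identity $\tfrac{d}{dt}P_tg=t^{-1}\,LP_tg$, together with the integration by parts
\[\EE(f\,Lh)=-\EE\int_{\mathbb{X}} D_xf\,D_xh\,\Lambda(\mathrm{d}x).\]
That integration by parts is itself a consequence of the Mecke equation, which the paper already uses. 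The factor $t$ coming from the commutation relation then cancels the $t^{-1}$, giving the stated identity.

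I expect this derivative computation to be the main obstacle. To make it rigorous I would first prove the identity for functionals depending only on $\eta$ restricted to a set $B$ with $\Lambda(B)<\infty$. There $\eta$ is a mixed binomial process, and everything reduces to finite sums that can be differentiated explicitly in $t$. I would then pass to general bounded $f$ by a martingale (monotone class) approximation as $B\uparrow\mathbb{X}$.

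With the identity in hand, put $g=f$ and apply Cauchy--Schwarz in $(\omega,x,t)$:
\[\mathrm{Var}(f(\eta))\le\Bigl(\EE\!\int (D_xf)^2\,\mathrm{d}\Lambda\Bigr)^{1/2}\Bigl(\int_0^1\EE\!\int (P_tD_xf)^2\,\mathrm{d}\Lambda\,\mathrm{d}t\Bigr)^{1/2}.\]
Jensen's inequality gives $(P_th)^2\le P_t(h^2)$. Since $\eta_t+\eta'_{1-t}\overset{d}{=}\eta$, also $\EE\,P_t(h^2)(\eta)=\EE\,h(\eta)^2$. Hence the second factor is at most the first, and the inequality follows for bounded $f$.

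Finally, to cover general $f$ with $\EE f(\eta)^2<\infty$, truncate: $f_k=(f\wedge k)\vee(-k)$. The map $u\mapsto(u\wedge k)\vee(-k)$ is $1$-Lipschitz, so $|D_xf_k|\le|D_xf|$. Moreover $\mathrm{Var}(f_k)\to\mathrm{Var}(f)$ by dominated convergence. Letting $k\to\infty$ completes the proof; if the right-hand side is infinite there is nothing to show.
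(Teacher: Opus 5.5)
Your argument is correct in substance, but it takes a different route from the one behind the cited result. The paper gives no proof of its own; it quotes Theorem 18.7 of \cite{Last2018}. There the inequality is read off from the Fock space representation $\mathrm{Var}(f(\eta))=\sum_{n\geq 1}\frac{1}{n!}\Vert T_nf\Vert_n^2$, where $T_nf(x_1,\dots,x_n)=\EE(D^n_{x_1,\dots,x_n}f(\eta))$. Applying the same representation to $D_xf$ and using $T_n(D_xf)=T_{n+1}f(x,\blank)$ gives $\int_{\mathbb{X}}\EE((D_xf(\eta))^2)\,\Lambda(\mathrm{d}x)=\sum_{n\geq1}\frac{1}{(n-1)!}\Vert T_nf\Vert_n^2$, which dominates the variance term by term.

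\textbf{Comparison.} Once the chaos machinery is available, the cited proof is one line, and it identifies the gap exactly: equality holds iff $T_nf=0$ for $n\geq 2$, i.e.\ for first-chaos functionals such as linear statistics. Your semigroup-interpolation argument avoids Fock space entirely and needs only thinning, superposition and the Mecke equation. The price is the commutation/generator calculus, a finite-volume reduction and two approximation steps. In exchange you obtain the covariance identity itself, which is stronger than the Poincar\'e inequality (it also yields, e.g., the Harris--FKG inequality).

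\textbf{A sign slip.} Take $Lh(\mu)=\int(h(\mu+\delta_x)-h(\mu))\,\Lambda(\mathrm{d}x)+\int(h(\mu-\delta_x)-h(\mu))\,\mu(\mathrm{d}x)$; for this generator your integration by parts $\EE(f\,Lh)=-\EE\int D_xf\,D_xh\,\mathrm{d}\Lambda$ is correct. Your $P_t$ is the Ornstein--Uhlenbeck semigroup $e^{sL}$ at time $s=-\log t$, so $\frac{\mathrm{d}}{\mathrm{d}t}P_tg=-t^{-1}LP_tg$, not $+t^{-1}LP_tg$. With your sign, the covariance identity comes out with a minus in front. The final bound survives, because Cauchy--Schwarz only sees absolute values, but the sign should be fixed.

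\textbf{The limit $B\uparrow\mathbb{X}$.} It is simplest to pass the inequality rather than the identity to the limit. Set $f_B(\mu)=\EE(f(\mu_B+\eta_{\mathbb{X}\setminus B}))$. Then $D_xf_B(\eta)=\mathds{1}_B(x)\,\EE(D_xf(\eta)\mid\eta_B)$, so conditional Jensen gives $\int\EE((D_xf_B)^2)\,\mathrm{d}\Lambda\leq\int\EE((D_xf)^2)\,\mathrm{d}\Lambda$. Meanwhile $\mathrm{Var}(f_B(\eta))\to\mathrm{Var}(f(\eta))$ by martingale convergence. This removes any need to control $P_tD_xf_B\to P_tD_xf$.
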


Proposition \ref{poincare_inequality_lemma} and Theorem \ref{stabilization_theorem} may be used to prove the theorem below. The proof of Theorem \ref{thm_an_L2_bound} is given in section \ref{section_proofs_inversion}.

\begin{theorem}\label{thm_an_L2_bound}
    Let $A_n$ be as in (\ref{def_an_mean_generator}) with $W_n = n^\frac{1}{d}W$ for $n \in \NN$, where $W \subset \RR^d$ is a convex body satisfying $\nu_d(W) = 1$ and $c(W) = 0$. If $\mathbb{F}$ is concentrated on $(0,M)$ for some $M > 0$, then there exists a constant $0 < \gamma < \infty$ such that for all $n \in \NN$:
    \[\EE\left( \left\Vert A_n \right\Vert^2 \right) \leq \frac{\gamma}{n^{1-\frac{2}{d}}}.\]
    As a consequence, $\lim_{n \to \infty} A_n = 0$ in $L^2$ for $d \geq 3$. If $d\geq 2$, the sequence $(A_n)_{n\geq 1}$ is uniformly tight.
\end{theorem}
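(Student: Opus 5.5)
We split $\EE(\Vert A_n\Vert^2)$ into variance and squared mean. Since $\EE(A_n) = c(W_n) = n^{1/d}c(W) = 0$, we have
\[\EE\left(\Vert A_n\Vert^2\right) = \sum_{k=1}^d \mathrm{Var}\left(A_n^{(k)}\right),\]
where $A_n^{(k)}$ is the $k$-th coordinate of $A_n$. It therefore suffices to bound each variance by $\gamma/(d\, n^{1-2/d})$.

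\textbf{Applying the Poincar\'e inequality.} We apply Lemma \ref{poincare_inequality_lemma} to the Poisson functional
\[f(\eta) = \frac{1}{n}\sum_{(x,h)\in\eta}\mathds{1}_{W_n}(x)\,\nu_d(C((x,h),\eta))\,x^{(k)}.\]
Its second moment is finite by (\ref{an_simple_l2_bound}). For an added point $(x',h')$ with $h' \in (0,M)$, the difference operator is
\[D_{(x',h')}f = \frac{1}{n}\Bigl(\mathds{1}_{W_n}(x')\,\nu_d(C((x',h'),\eta+\delta_{(x',h')}))\,x'^{(k)} + \sum_{(x,h)\in\eta}\mathds{1}_{W_n}(x)\bigl(\nu_d(C((x,h),\eta+\delta_{(x',h')}))-\nu_d(C((x,h),\eta))\bigr)x^{(k)}\Bigr).\]

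\textbf{Rewriting the sum.} The key observation is that inserting $(x',h')$ only transfers volume from the existing cells to the new cell. So the bracketed differences are all nonpositive, and their total equals $-\nu_d(C((x',h'),\eta+\delta_{(x',h')}))$. Writing $V' := \nu_d(C((x',h'),\eta+\delta_{(x',h')}))$, this gives
\[D f = \frac{1}{n}\sum_{(x,h)}\bigl(\text{volume lost by } C(x,h)\bigr)\bigl(x'^{(k)}-x^{(k)}\bigr) + \text{boundary terms},\]
where the boundary terms come from cells whose generator $x$ lies in $W_n$ while $x' \notin W_n$, or vice versa. Every cell that loses volume intersects $C((x',h'),\cdot)\subset \bar B(x',R_{x'})$. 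Its generator must then be within distance roughly $R_{x'}+R_x$ of $x'$, by Theorem \ref{stabilization_theorem}.

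\textbf{Bulk contribution.} For $x'$ in the bulk, $|Df| \le \frac{1}{n}V'\cdot\max\Vert x-x'\Vert$, and the maximum is over the finitely many affected generators. Controlling $\max\Vert x - x'\Vert$ is the main obstacle, because the affected generators need not lie inside the new cell (Laguerre generators may sit outside their cells). I would handle it via stabilization: a generator $x$ whose cell meets $\bar B(x',R)$ satisfies $\Vert x-x'\Vert \le R + R_x$ by part 2, and I would union bound over nearby points using the Mecke equation and the exponential tails in part 3. This yields $\EE((Df)^2)\le C/n^2$ uniformly in $x'$ and $h'$, with moments of $V'$ controlled by Lemma \ref{lemma_individual_second_moment_finite}. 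Integrating over $x'$ in a region of volume $O(n)$ gives a contribution of $O(1/n)$.

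\textbf{Boundary contribution.} The boundary terms carry a factor $\Vert x'\Vert \lesssim n^{1/d}$. They are nonzero only when $x'$ lies within a stabilization distance of $\partial W_n$, or when $x' \notin W_n$ but some cell with generator in $W_n$ is affected. Since the tails are exponential, the measure of such $x'$ is effectively $O(n^{(d-1)/d})$. Each such point contributes $\EE((Df)^2)\lesssim n^{2/d}/n^2$, and points far outside $W_n$ contribute an exponentially small amount. The total is $O(n^{(d-1)/d + 2/d - 2})$, which is within $O(n^{-1+2/d})$.

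\textbf{Conclusion.} Summing the bulk and boundary contributions over the $d$ coordinates gives $\EE(\Vert A_n\Vert^2)\le \gamma n^{-1+2/d}$. For $d\ge3$ the right-hand side tends to zero, so $A_n\to0$ in $L^2$. For $d\ge 2$ the bound is uniform in $n$, so $\sup_n \EE\Vert A_n\Vert^2<\infty$, and Markov's inequality gives uniform tightness.
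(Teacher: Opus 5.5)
Your argument is correct, and it takes a genuinely different and sharper route than the paper. The paper uses the same variance decomposition and Lemma \ref{poincare_inequality_lemma}, but it exploits no cancellation. It bounds $|D_{(x',h')}f|$ by $\mathrm{diam}(W)\,n^{-1+1/d}$ times $\nu_d(C((x',h'),\eta))+\sum_{(x,h)\in\eta}\mathds{1}_{W_n}(x)\mathds{1}\{\Vert x-x'\Vert\le R_x(\eta)\}\nu_d(C((x,h),\eta))$. For this it uses only $\Vert x\Vert\le\mathrm{diam}(W_n)$, part 1 of Theorem \ref{stabilization_theorem} to localize the affected generators, and part 2 to bound their cell volumes. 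Expanding the square with the (multivariate) Mecke equation then gives integrals of order $n$, hence the rate $n\cdot n^{2/d}/n^2$.

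You instead use the exact volume balance $\sum_x\Delta_x=V'$ to write $n\,D_{(x',h')}f=\sum_x\Delta_x\bigl(\mathds{1}_{W_n}(x')x'^{(k)}-\mathds{1}_{W_n}(x)x^{(k)}\bigr)$. This gives the pointwise bound $n|D_{(x',h')}f|\le V'\Delta+n^{1/d}\mathrm{diam}(W)\,V'\,\mathds{1}\{\mathrm{dist}(x',\partial W_n)\le\Delta\}$, where $\Delta$ is the largest distance from $x'$ to an affected generator. Writing it this way is how you should make your random ``bulk/boundary'' split rigorous.

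This route needs two ingredients the paper avoids. The first is a tail bound for $\Delta$. Your union bound works: if $\Delta>2t$, then either $R_{x'}(\eta)>t$, or some $(x,h)\in\eta$ has $R_x(\eta)>t$ and $\Vert x-x'\Vert<2R_x(\eta)$. The Mecke equation with parts 3 and 4 makes this exponentially small, uniformly in $(x',h')$. The second, after Fubini and stationarity of the law of $(V',\Delta)$, is a Steiner-type estimate $\nu_d(\{x':\mathrm{dist}(x',\partial W_n)\le\Delta\})\lesssim\sum_{j<d}\Delta^{d-j}n^{j/d}$, which uses convexity of $W$.

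In return you get more than the theorem claims. The bulk contributes $O(n^{-1})$ and the boundary layer $O(n^{(d-1)/d+2/d-2})=O(n^{-1+1/d})$. Hence $\EE\Vert A_n\Vert^2=O(n^{-1+1/d})$, and $A_n\to0$ in $L^2$ already for $d=2$. The paper's cruder bound cannot give this; the unweighted Poisson mean in Example \ref{example_mean_poisson_points} has no such volume cancellation.
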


Theorem \ref{thm_an_L2_bound} guarantees consistency of the inversion procedure for a large class of choices for $\mathbb{F}$. Whether the inversion procedure is consistent for all locally finite measures $\mathbb{F}$ on $(0,\infty)$ is an open problem.

\begin{remark}
    In general, if one does not wish to assume a uniform upper bound for the weights, it is not clear whether there is still a way to apply stabilization techniques. In the recent paper \cite{Bhattacharjee2025} a different technique, called region-stabilization, is used in the context of Poisson-Laguerre tessellations for deriving central limit theorems. Here, various parametric models are considered with unbounded weights. Perhaps similar techniques can be used to derive a bound for $\EE(\Vert A_n\Vert^2)$ without assuming an upper bound for the weights.
\end{remark}


\section{Simulations}\label{section_simulations_inversion}

In this section we perform various simulations to empirically study the behavior of the estimator $(\hat{\lambda}_n,\hat{c}_n)$. We also consider a variant of $(\hat{\lambda}_n,\hat{c}_n)$, which takes into account edge effects. By computing these estimators, we effectively invert the observed Poisson-Laguerre tessellations. Keeping in mind the main motivation for inverting Poisson-Laguerre tessellations, we then study how well the function $F$ can be estimated if the original weighted generators are unknown. That is, first we observe a Poisson-Laguerre tessellation through a bounded window and apply the inversion procedure. Then, we compute a variant of the estimator $\hat{F}_n^0$ for $F$, as defined in Definition 3 in \cite{vdjagt2025b}, which is based on the weighted generator point obtained via the inversion procedure ($\hat{\eta}_n^*$) instead of the actual weighted generators ($\eta^*$). For all simulations we simulate planar ($d=2$) Poisson-Laguerre tessellations.

\subsection{Estimation of \texorpdfstring{$\lambda_0$}{λ₀} and \texorpdfstring{$c_0$}{c₀}}
Recall the definition of $(\hat{\lambda}_n,\hat{c}_n)$ as a solution to the weighted least squares problem in (\ref{estimators_wls_def}). Then, $(\hat{\lambda}_n,\hat{c}_n)$ minimizes a function which cannot actually be computed in practice. After all, when a Poisson-Laguerre tessellation is observed through a window $W_n$, the cells at the boundary of the window are only partially observed and therefore their volumes and centroids cannot be computed. Therefore, we also consider the following estimator for $(\lambda_0,c_0)$:
\[(\bar{\lambda}_n, \bar{c}_n) = \argmin_{(\lambda, c)\in \RR\times\RR^d}\frac{1}{\nu_d(W_n)}\sum_{(x,h) \in f_0(\eta)}\mathds{1}{\{C((x,h),f_0(\eta)) \subset W_n\}}v_{x,h}\left\Vert \lambda x + c - c_{x,h} \right\Vert^2.\]
Because $(\bar{\lambda}_n, \bar{c}_n)$ depends on less information compared to $(\hat{\lambda}_n,\hat{c}_n)$, it is to be expected that its performance will be slightly worse. Loosely speaking, $(\bar{\lambda}_n, \bar{c}_n)$ is based on information contained in $W_n$ and we expect $(\bar{\lambda}_n, \bar{c}_n)$ to behave like $(\hat{\lambda}_n,\hat{c}_n)$ if $(\hat{\lambda}_n,\hat{c}_n)$ is computed based on a smaller observation window $W_n' \subset W_n$ instead of $W_n$. While $(\hat{\lambda}_n,\hat{c}_n)$ cannot be computed in practice, this is not an issue in a simulation setting, where we can effectively also observe cells outside of $W_n$.

\begin{table}[b!]
    \centering
    \begin{tabular}
    {
  r
  S[table-format=1.7,round-precision=3]
>{{{\lp}}} 
S[round-precision=2, table-format = 1.7,table-space-text-pre=\lp]
@{,\,} 
S[round-precision=2, table-format = 1.5,table-space-text-post=\rp]
<{{{\rp}}} 
    S[table-format=1.5, round-precision=3]
>{{{\lp}}} 
S[round-precision=2,table-format = 1.4,table-space-text-pre=\lp]
@{,\,} 
S[round-precision=2,table-format = 1.4,table-space-text-post=\rp]
<{{{\rp}}} 
@{}l@{}
}
\toprule
\multicolumn{1}{c}{} & \multicolumn{3}{c}{$|\bar{\lambda}_n - \lambda_0|$} & \multicolumn{3}{c}{$\Vert \bar{c}_n -c_0\Vert$} & \\
\cmidrule(r){1-1}\cmidrule(lr){2-4}\cmidrule(l){5-8}
  \multicolumn{1}{c}{$P_n$} &        {mean} &    \multicolumn{2}{c}{(2.5\%, 97.5\%)} & {mean} &    \multicolumn{2}{c}{(2.5\%, 97.5\%)} & \\
\cmidrule(r){1-1}\cmidrule(lr){2-4}\cmidrule(l){5-8}
500 & 0.00076635 & 0.00006581 & 0.00152634 & 0.02095217 & 0.00355202 & 0.04220091 &   \\
1000 & 0.00033413 & 0.00001718 & 0.00073036 & 0.01195117 & 0.00201534 & 0.02407493 &   \\
2000 & 0.00014298 & 0.00000959 & 0.00029023 & 0.00674849 & 0.00100166 & 0.01302620 &   \\
5000 & 0.00006172 & 0.00002466 & 0.00010711 & 0.00432652 & 0.00134680 & 0.00768895 &   \\
\bottomrule
\end{tabular}
\caption{Estimates of $\lambda_0$ and $c_0$ obtained with edge correction, the underlying $F$ is given by (\ref{true_F_option1_inversion}).}\label{table_edgecorrected_uniform}
\vspace{5pt}
\begin{tabular}
    {
  r
  S[table-format=1.7,round-precision=3]
>{{{\lp}}} 
S[round-precision=2, table-format = 1.7,table-space-text-pre=\lp]
@{,\,} 
S[round-precision=2, table-format = 1.6,table-space-text-post=\rp]
<{{{\rp}}} 
    S[table-format=1.5, round-precision=3]
>{{{\lp}}} 
S[round-precision=2,table-format = 1.5,table-space-text-pre=\lp]
@{,\,} 
S[round-precision=2,table-format = 1.4,table-space-text-post=\rp]
<{{{\rp}}} 
@{}l@{}
}
\toprule
\multicolumn{1}{c}{} & \multicolumn{3}{c}{$|\hat{\lambda}_n - \lambda_0|$} & \multicolumn{3}{c}{$\Vert \hat{c}_n -c_0\Vert$} & \\
\cmidrule(r){1-1}\cmidrule(lr){2-4}\cmidrule(l){5-8}
  \multicolumn{1}{c}{$P_n$} &        {mean} &    \multicolumn{2}{c}{(2.5\%, 97.5\%)} & {mean} &    \multicolumn{2}{c}{(2.5\%, 97.5\%)} & \\
\cmidrule(r){1-1}\cmidrule(lr){2-4}\cmidrule(l){5-8}
500 & 0.00038558 & 0.00002616 & 0.00119095 & 0.01126269 & 0.00091320 & 0.03533000 &   \\
1000 & 0.00013811 & 0.00001171 & 0.00037305 & 0.00552957 & 0.00079710 & 0.01267597 &   \\
2000 & 0.00007177 & 0.00000208 & 0.00018429 & 0.00347506 & 0.00065212 & 0.00823961 &   \\
5000 & 0.00001786 & 0.00000159 & 0.00004655 & 0.00145635 & 0.00022798 & 0.00369547 &   \\
\bottomrule
\end{tabular}
\caption{Estimates of $\lambda_0$ and $c_0$ obtained without edge correction, the underlying $F$ is given by (\ref{true_F_option1_inversion}).}\label{table_uncorrected_uniform}
\end{table}

\begin{table}[t!]
\centering
\begin{tabular}
    {
  r
  S[table-format=1.7,round-precision=3]
>{{{\lp}}} 
S[round-precision=2, table-format = 1.7,table-space-text-pre=\lp]
@{,\,} 
S[round-precision=2, table-format = 1.5,table-space-text-post=\rp]
<{{{\rp}}} 
    S[table-format=1.5, round-precision=3]
>{{{\lp}}} 
S[round-precision=2,table-format = 1.5,table-space-text-pre=\lp]
@{,\,} 
S[round-precision=2,table-format = 1.4,table-space-text-post=\rp]
<{{{\rp}}} 
@{}l@{}
}
\toprule
\multicolumn{1}{c}{} & \multicolumn{3}{c}{$|\bar{\lambda}_n - \lambda_0|$} & \multicolumn{3}{c}{$\Vert \bar{c}_n -c_0\Vert$} & \\
\cmidrule(r){1-1}\cmidrule(lr){2-4}\cmidrule(l){5-8}
  \multicolumn{1}{c}{$P_n$} &        {mean} &    \multicolumn{2}{c}{(2.5\%, 97.5\%)} & {mean} &    \multicolumn{2}{c}{(2.5\%, 97.5\%)} & \\
\cmidrule(r){1-1}\cmidrule(lr){2-4}\cmidrule(l){5-8}
500 & 0.00126079 & 0.00007698 & 0.00391072 & 0.04495845 & 0.00575331 & 0.13544627 &   \\
1000 & 0.00048939 & 0.00003040 & 0.00138623 & 0.02219952 & 0.00371886 & 0.06179314 &   \\
2000 & 0.00023915 & 0.00000748 & 0.00073627 & 0.01373968 & 0.00228794 & 0.04304565 &   \\
5000 & 0.00008694 & 0.00000292 & 0.00027106 & 0.00734665 & 0.00090572 & 0.02215045 &   \\
\bottomrule
\end{tabular}
\caption{Estimates of $\lambda_0$ and $c_0$ obtained with edge correction, the underlying $F$ is given by (\ref{true_F_option2_inversion}).}\label{table_edgecorrected_discrete}
\vspace{5pt}
\begin{tabular}
    {
  r
  S[table-format=1.7,round-precision=3]
>{{{\lp}}} 
S[round-precision=2, table-format = 1.7,table-space-text-pre=\lp]
@{,\,} 
S[round-precision=2, table-format = 1.6,table-space-text-post=\rp]
<{{{\rp}}} 
    S[table-format=1.5, round-precision=3]
>{{{\lp}}} 
S[round-precision=2,table-format = 1.5,table-space-text-pre=\lp]
@{,\,} 
S[round-precision=2,table-format = 1.4,table-space-text-post=\rp]
<{{{\rp}}} 
@{}l@{}
}
\toprule
\multicolumn{1}{c}{} & \multicolumn{3}{c}{$|\hat{\lambda}_n - \lambda_0|$} & \multicolumn{3}{c}{$\Vert \hat{c}_n -c_0\Vert$} & \\
\cmidrule(r){1-1}\cmidrule(lr){2-4}\cmidrule(l){5-8}
  \multicolumn{1}{c}{$P_n$} &        {mean} &    \multicolumn{2}{c}{(2.5\%, 97.5\%)} & {mean} &    \multicolumn{2}{c}{(2.5\%, 97.5\%)} & \\
\cmidrule(r){1-1}\cmidrule(lr){2-4}\cmidrule(l){5-8}
500 & 0.00093743 & 0.00004657 & 0.00289883 & 0.03386005 & 0.00421877 & 0.10824566 &   \\
1000 & 0.00041400 & 0.00001250 & 0.00100824 & 0.01940134 & 0.00191347 & 0.04515031 &   \\
2000 & 0.00015830 & 0.00001302 & 0.00046083 & 0.01036915 & 0.00111555 & 0.02791533 &   \\
5000 & 0.00004868 & 0.00000241 & 0.00015076 & 0.00439266 & 0.00054577 & 0.01104591 &   \\
\bottomrule
\end{tabular}
\caption{Estimates of $\lambda_0$ and $c_0$ obtained without edge correction, the underlying $F$ is given by (\ref{true_F_option2_inversion}).}\label{table_uncorrected_discrete}
\end{table}

For the simulations in this section we simulate Poisson-Laguerre tessellations in $\RR^2$ with the following choices for the underlying function $F$, with $z \geq 0$:
\begin{align}
    F_1(z) &= z\cdot\mathds{1}\{z < 1\} + \mathds{1}\{z \geq 1\} \label{true_F_option1_inversion}\\
    F_2(z) &= 0.01\cdot\mathds{1}\{z \geq 1\} + 0.04\cdot\mathds{1}\{z \geq 8\} + 0.95\cdot\mathds{1}\{z \geq 10\}.\label{true_F_option2_inversion}
\end{align}
For both choices of $F$ it is simple to simulate a corresponding Poisson process, because these Poisson processes can be recognized as independently marked homogeneous Poisson processes. Following \cite{vdjagt2025b} we define $\eta^0 = \{(x,h) \in \eta: x\in C((x,h),\eta)\}$. We write $P_n:=\EE(\eta^{0}(W_n \times (0,\infty))$, and we choose a square observation window $W_n$ such that $P_n = 1000$. In words, we choose a square $W_n$ with an area such that the expected number of observed points of $\eta^0$ in $W_n$ is equal to 1000. The reason for introducing this point process $\eta^0$ is because of its importance for the definition of $\hat{F}_n^0$. For each choice of $F$ and $P_n$, 100 Laguerre tessellations are simulated, yielding 100 realizations of $(\bar{\lambda}_n, \bar{c}_n)$ and $(\hat{\lambda}_n,\hat{c}_n)$. For the sake of convenience we take $\lambda_0 = 1$ and $c_0 = 0$. The results of these simulations are summarized in Tables \ref{table_edgecorrected_uniform}-\ref{table_uncorrected_discrete}. In these tables the average absolute errors are shown, as well as the $2.5\%$ and $97.5\%$ quantiles of these absolute errors.

Let us now discuss the contents of these tables. For one, it can be seen that the estimates of $\lambda_0$ and $c_0$ corresponding to $F_1$ are more accurate compared to estimates corresponding to $F_2$. A Poisson-Laguerre tessellation with $F_1$ as the underlying distribution function will have mostly cells of similar sizes. If $F_2$ is the underlying distribution function then a corresponding Poisson-Laguerre tessellation will have a few large cells, a larger number of medium sized cells and a lot of small cells. Perhaps estimation of $(\lambda_0,c_0)$ is more difficult when there is more variation in the cell sizes. As anticipated we can also see that the estimator $\smash{(\bar{\lambda}_n, \bar{c}_n)}$ yields larger mean absolute errors compared to $\smash{(\hat{\lambda}_n, \hat{c}_n)}$. Finally, it should be noted that the errors corresponding to estimates of $\lambda_0$ are far smaller compared to the errors corresponding to estimates of $c_0$. This is perhaps caused by the rather extreme behavior of the criterion function $T_n$ as $n \to \infty$ if $\lambda\neq\lambda_0$ (Theorem \ref{criterion_func_asymptotics}). Overall, both estimation procedures appear to yield rather close estimates of $\lambda_0$ and $c_0$. Indeed, the errors are sufficiently small that visualizing both the original configuration of generators as well as the generators obtained via any of the two inversion procedures ($\smash{(\bar{\lambda}_n, \bar{c}_n)}$ and $\smash{(\hat{\lambda}_n, \hat{c}_n)}$) we see that these points appear to overlap. This is shown in Figure \ref{figure_laguerre_inversion_example}. Here, the circle radii represent the weights corresponding to the generators. For visualization purposes the circle radii for the realization corresponding to $F_2$ were normalized (divided by 10). When looking very closely, one can see that the weights of the original generators are slightly different than the weights of the weighted generators obtained via the inversion procedure.

\begin{figure}[t!]
    \centering
    \makebox[\textwidth]{\makebox[\textwidth]{
    \begin{subfigure}[t]{0.5\textwidth}
        \centering
        \includegraphics[width=0.92\linewidth]{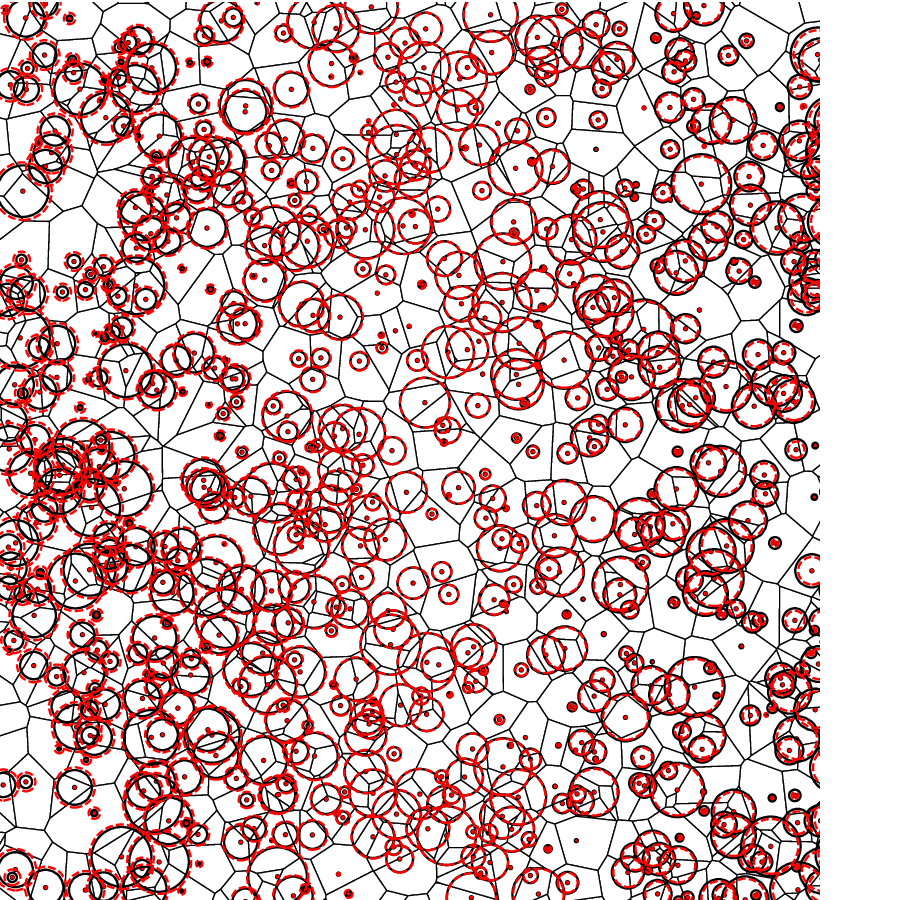}
    \end{subfigure}
    \begin{subfigure}[t]{0.5\textwidth}
        \centering
        \includegraphics[width=0.92\linewidth]{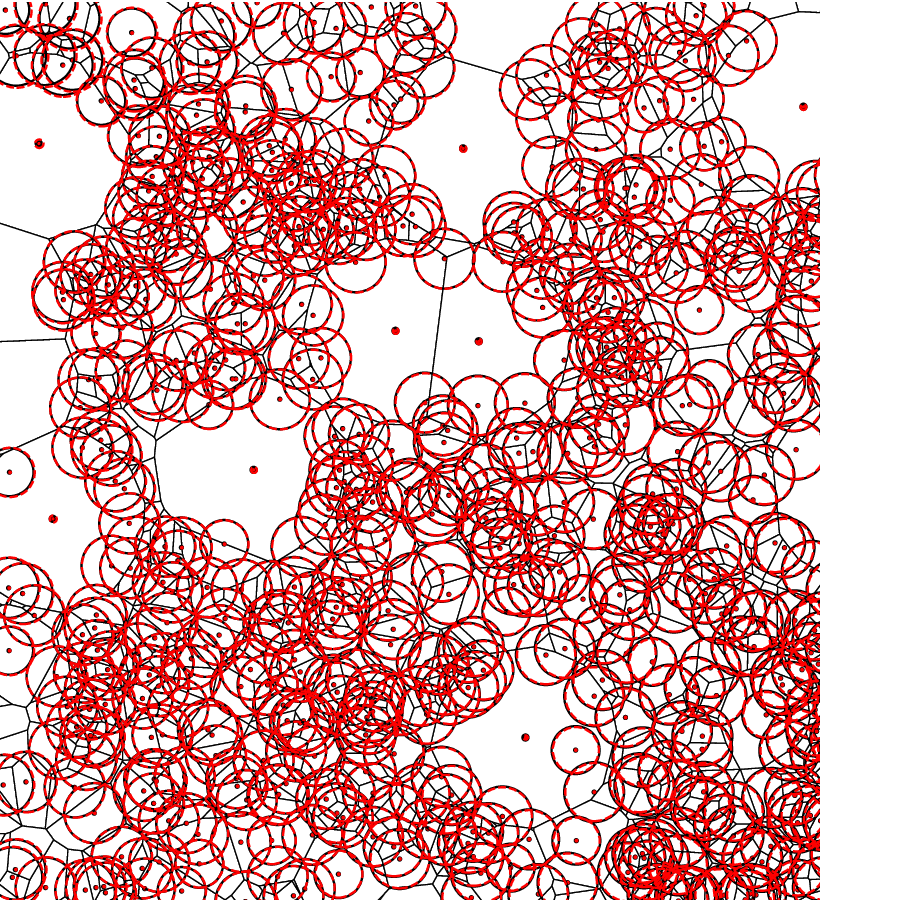}
    \end{subfigure}\hfill}}
    \caption{Realizations of Poisson-Laguerre tessellations with $F_1$ (Left) and $F_2$ (Right) as the underlying distribution function. Black points and black circles represent the original weighted generators ($\eta^*$), red points and red dashed circles represent weighted generators ($\smash{\hat{\eta}_n^*}$) obtained via the estimator $\smash{(\hat{\lambda}_n,\hat{c}_n)}$, see (\ref{inversion_point_process}).}
    \label{figure_laguerre_inversion_example}
\end{figure}

\subsection{Estimation of \texorpdfstring{$F$}{F}}
In this section we essentially investigate whether it is possible to estimate $F$, if the only available information is a region of a Poisson-Laguerre tessellation observed through a bounded observation window $W_n$. That is, the underlying weighted generators are considered to be unknown. This also means that cells at the boundary of the window are only partially observed. The simulations in this section may be seen as a continuation of the previous section, as the same choices for $F$ are considered, and the estimates of $F$ are computed based on weighted generators obtained via the inversion procedure defined via $\smash{(\bar{\lambda}_n, \bar{c}_n)}$.

Let $\hat{\eta}_n^*$ denote the configuration of weighted generators obtained via the inversion procedure defined via $(\bar{\lambda}_n, \bar{c}_n)$, recall (\ref{inversion_point_process}). Then, $\hat{\eta}_n^*$ may be seen as an approximation of $\eta^*\cap (W_n \times \RR)$. Write $\hat{\eta}_n^* = \{(\hat{x}_1,\hat{h}_1),\dots,(\hat{x}_m,\hat{h}_m)\}$, with $\hat{h}_1 \leq \hat{h}_2 \leq \dots \leq \hat{h}_m$. Then, analogously to the estimator $\hat{F}_n^0$ for $F$ as defined in \cite{vdjagt2025b}, we now define the estimator $\bar{F}_n^0$ for $F$ as follows. For $z \geq 0$ define:
\begin{equation}
    \bar{G}_n(z) := \frac{1}{\nu_d(W_n)}\sum_{(\hat{x},\hat{h}) \in \hat{\eta}_n^*} \mathds{1}_{W_n}(\hat{x})\mathds{1}_{(0,z]}(\hat{h}) \mathds{1}\{\hat{x} \in C((\hat{x},\hat{h}),\hat{\eta}_n^*) \}.
\end{equation}
This is essentially an adaptation of the estimator $\hat{G}_n$ for $G_F$ as defined in \cite{vdjagt2025b}. We define the estimator $\bar{F}_n^0$ as a piece-wise constant distribution function with jump locations at $\hat{h}_1,\dots,\hat{h}_m$. Hence, we only need to specify the values $\bar{F}_n^0(\hat{h}_i)$ for $i \in \{1,\dots,m\}$. Let $\hat{h}_0 < \hat{h}_1$ and set $\bar{F}_n^0(\hat{h}_0)=0$. For $i \in \{1,\dots,m\}$, the function $\bar{F}_n^0$ is recursively defined via:
\begin{align}%
\begin{split}\label{computational_formula_F_estimator_inversion}
    \bar{F}_n^0(\hat{h}_i) &= \bar{F}_n^0(\hat{h}_{i-1}) + \left(\bar{G}_n(\hat{h}_i) - \bar{G}_n(\hat{h}_{i-1}) \right)  \exp\left(\kappa_d \sum_{j=1}^{i-1} \left(\hat{h}_i -\hat{h}_j \right)^{\frac{d}{2}}\left(\bar{F}_n^0(\hat{h}_j) - \bar{F}_n^0(\hat{h}_{j-1}) \right) \right). 
\end{split}
\end{align}%
\begin{figure}[t!]
    \centering
    \includegraphics[width=0.5\linewidth]{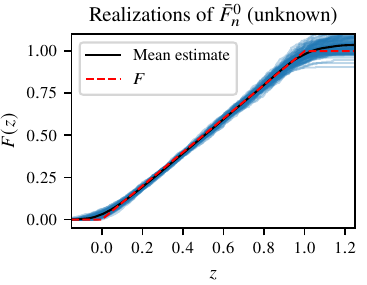}\includegraphics[width=0.5\linewidth]{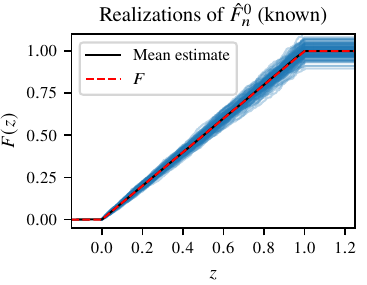}\\
     \includegraphics[width=0.5\linewidth]{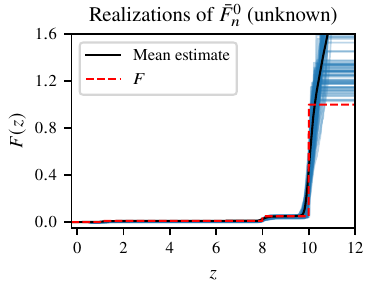}\includegraphics[width=0.5\linewidth]{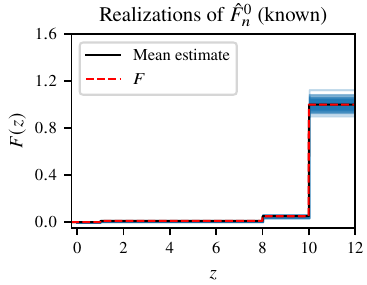}
    \caption{Realizations of the estimators $\bar{F}_n^0$ (left panel, weighted generators are unknown) and $\hat{F}_n^0$ (right panel, weighted generators are considered known). In the upper panel $F$ is given by (\ref{true_F_option1_inversion}), in the lower panel $F$ is given by (\ref{true_F_option2_inversion}). In these simulations, $P_n =1000$.}
    \label{figure_inversion_estimates}
\end{figure}%
As such, we now have an estimator $\bar{F}_n^0$ for $F$ which can be computed if the weighted generators of the (partially) observed cells are a priori unknown. We compare the obtained estimates to realizations of $\hat{F}_n^0$, which can be computed if the weighted generators of the (partially) observed cells are considered known. Then, we can see whether not knowing the generators has a noticeable effect on the resulting estimates. We should mention that for the obtained realizations of $\bar{F}_n^0$ it was needed to apply a shift. Recall that the distribution of a Poisson-Laguerre tessellation is not affected by shifts of the underlying distribution function. As such, if two estimates of $F$ are equal op to a shift, these estimates are considered equally accurate. Hence, to each realization of $\bar{F}_n^0$ a shift was applied which essentially minimizes the average distance to the true function $F$. The results of these simulations are shown in Figure \ref{figure_inversion_estimates}. 

First, consider the estimates corresponding to $F_1$. Visually, it appears as if $\bar{F}_n^0$ yields estimates which are smoothed versions of $\hat{F}_n^0$. Overall, both estimators appear to yield accurate estimates. The estimator $\bar{F}_n^0$ appears to be less accurate in the left and right tail of $F$, and it is very accurate in the middle of its support ($z$ close to $0.5$). Meanwhile, realizations of $\hat{F}_n^0$ are very accurate for $z$ close to 0 and slowly become more inaccurate as $z$ becomes larger. The estimates corresponding to $F_2$ paint a rather different picture. While both estimators appear to provide accurate estimates of $F$ for say $z < 9.5$, the estimates obtained via $\bar{F}_n^0$ become very inaccurate for larger values of $z$. Some of these realizations of $\bar{F}_n^0$ have an upper bound which is much too large to fit in the frame of the plot. We should mention here that it is not unexpected that estimation of $F(z)$ is more difficult for large values of $z$, as also discussed in \cite{vdjagt2025b}. Essentially, points with large weights are sampled much more rarely, since points with large weights often do not generate a cell. We also expect that for large values of $z$ changes in $F(z)$ only have a small effect on the distribution of the corresponding Poisson-Laguerre tessellation. However, at present we do not have a formal result to quantify this statement, it is not yet clear when an estimate of $F$ may be considered accurate.

\section{Concluding remarks}\label{section_discussion_inversion}
In this paper we derived an inversion procedure for Poisson-Laguerre tessellations. The main motivation for wanting to retrieve the weighted generators of the observed cells is for their use in statistical inference, as in \cite{vdjagt2025b}. We obtained various theoretical results and provided conditions for the inversion procedure to be consistent. In simulations we observed that the inversion procedure yields a very close approximation of the original weighted generators. Additionally, we have shown how these weighted generators may be used to estimate the distribution function $F$. It is however evident that if there is no prior knowledge of the original weighted generators, the resulting estimates of $F$ are less accurate compared to when estimates of $F$ can be computed based on the original weighted generators. Whether the estimators for $F$ based on weighted generators obtained via the inversion procedure are in general consistent is an important open problem. Finally, we should mention that our inversion procedure does not take into account edge effects. A simple way to deal with edge effects in practice is described in section \ref{section_simulations_inversion}, but perhaps more sophisticated approaches for dealing with edge effects can be incorporated into the inversion procedure in future research.

\section{Acknowledgements}
This paper is based on chapter 6 in the first author's PhD thesis \cite{vdjagt2026}.

\section{Additional proofs}\label{section_proofs_inversion}

\begin{proof}[Proof of Lemma \ref{pF_moments_lemma}]
Let $y \in \RR^d \setminus \{0\}$, by (\ref{lower_bound_fractional_integral}) we have:
\[ \int_0^{\Vert y \Vert^2 + h}\left(\Vert y \Vert^2 + h -t \right)^\frac{d}{2}\mathrm{d}F(t) \geq F(h) \Vert y \Vert^d.\]
Via this inequality we obtain:
\begin{align*}
    p_F(y) &= \int_0^\infty \exp\left(-\kappa_d \int_0^{\Vert y \Vert^2 +h}\left(\Vert y \Vert^2 +h -t \right)^{\frac{d}{2}}\mathrm{d}F(t) \right)\mathrm{d}F(h) \\
    &\leq \int_0^\infty \exp\left(-\kappa_d F(h) \Vert y \Vert^d \right)\mathrm{d}F(h)\\
   &= \int_{F(0)}^{F(\infty)} \exp\left(-\kappa_d u \Vert y \Vert^d \right)\mathrm{d}u \\
   &= \frac{1}{\kappa_d \Vert y \Vert^d}\left(\exp\left(-\kappa_d \cdot 0 \cdot \Vert y \Vert^d \right) - \exp\left(-\kappa_d \cdot F(\infty) \cdot \Vert y \Vert^d \right) \right) \\
   &\leq \frac{1}{\kappa_d \Vert y \Vert^d}.
\end{align*}
Here, we substituted $u = F(h)$. Let $q \in \NN$, then substituting $y = r\theta$, with $r \geq 0$ and $\theta \in \Sp^{d-1}$ we obtain:
{\allowdisplaybreaks
\begin{align*}
        \int_{\RR^d}\Vert y \Vert^qp_F(y)\mathrm{d}y &= \int_{\RR^d}\Vert y \Vert^q\int_0^\infty \exp\left(-\kappa_d \int_0^{\Vert y \Vert^2 +h}\left(\Vert y \Vert^2 +h -t \right)^{\frac{d}{2}}\mathrm{d}F(t) \right)\mathrm{d}F(h)\mathrm{d}y\\
        &= d \kappa_d  \int_0^\infty  \int_0^\infty \exp\left(-\kappa_d \int_0^{r^2 +h}\left(r^2 + h - t \right)^{\frac{d}{2}}\mathrm{d}F(t) \right)r^{q+d-1} \mathrm{d}r\mathrm{d}F(h).
    \end{align*}
}%
    This expression is very similar to an expression which appears in the proof of Theorem \ref{thm_integrated_second_moment}. Indeed, if we were to set $q = (p-1)d$, $p \in \NN$, then the proof of Theorem \ref{thm_integrated_second_moment} yields $\int_{\RR^d}\Vert y \Vert^qp_F(y)\mathrm{d}y < \infty$. Because Theorem \ref{thm_integrated_second_moment} holds for any $d, p \in \NN$ we may simply choose $p$ sufficiently large such that $q \leq (p-1)d$ and then we obtain the result via H\"older's inequality:
    \[\int_{\RR^d}\Vert y \Vert^qp_F(y)\mathrm{d}y \leq \left(\int_{\RR^d}\Vert y \Vert^{(p-1)d}p_F(y)\mathrm{d}y\right)^\frac{(p-1)d}{q} < \infty.\]
\end{proof}

\begin{proof}[Proof of Lemma  \ref{cell_in_ball_lemma}]
    Assume $\varphi \cap B_{x,k} \neq \emptyset$ for all $k \in \{1,\dots,J\}$. Let $y \in C((x,h),\varphi) - x$, then also $y \in \mathcal{C}_j$ for some $j \in \{1,\dots,J\}$. Choose $x_j \in \varphi \cap B_{x,j}$. Then, $2\sqrt{z} < \Vert x_j - x \Vert < R$ and $x_j - x \in \mathcal{C}_j$. Because $y \in C((x,h),\varphi) - x$ and $x_j \in \varphi$ we have:
    \begin{align*}
        \Vert y \Vert^2 +h \leq \Vert x_j - x - y \Vert^2 + h_j = \Vert x_j -x\Vert^2 + \Vert y \Vert^2 -2 \langle x_j -x,y\rangle + h_j
    \end{align*}
    Using $2\sqrt{z} < \Vert x_j - x \Vert$ we now obtain:
    \begin{equation}
        2\langle x_j - x,y\rangle \leq  \Vert x_j -x\Vert^2 + h_j - h \leq \Vert x_j -x\Vert^2 + z < \frac{5}{4}\Vert x_j - x \Vert^2.\label{circumscribed_ball_proof_eq}
    \end{equation}
    Because $y \in \mathcal{C}_j$, $x_j -x \in \mathcal{C}_j$ we have by the definition of $\mathcal{C}_j$:
    \[\langle x_j - x,y\rangle \leq \frac{3}{4}\Vert x_j - x\Vert \cdot \Vert y \Vert \iff \Vert y \Vert \leq \frac{4}{3}\frac{\langle x_j -x,y\rangle}{\Vert x_j - x \Vert}\]
    Combining this with (\ref{circumscribed_ball_proof_eq}) we obtain the desired result:
    \[\Vert y \Vert \leq \frac{4}{3}\frac{\langle x_j -x,y\rangle}{\Vert x_j - x \Vert} < \frac{5}{6} \frac{\Vert x_j - x \Vert^2}{\Vert x_j - x \Vert} < \Vert x_j - x \Vert < R.\]
    Indeed, $C((x,h),\varphi) - x \subset B(0,R) \iff C((x,h),\varphi) \subset B(x,R)$. 
\end{proof}

\begin{proof}[Proof of Theorem \ref{criterion_func_asymptotics}]
    First, consider the case $\lambda = \lambda_0$. Equation (\ref{def_criterion_function}) may be written as:
    \begin{align*}
        T_n(\lambda_0,c) &= \frac{1}{\nu_d(W_n)}\sum_{(x,h) \in \eta} \int_{C((x,h),\eta)} \Vert x - c_0 +c - y \Vert^2 \mathrm{d}y \\
        &= \frac{1}{\nu_d(W_n)}\sum_{(x,h) \in \eta} \int_{C((x,h),\eta)-x} \Vert c - c_0 - y \Vert^2 \mathrm{d}y \\
        &= \frac{1}{\nu_d(W_n)}\sum_{(x,h) \in \eta} \int_{C((0,h),S_x\eta)} \Vert c - c_0 - y \Vert^2 \mathrm{d}y
    \end{align*}
    Applying the spatial ergodic theorem (see Proposition 13.4.I. in \cite{Daley2008}), $T_n(\lambda_0,c)$ converges almost surely to its expected value (see Lemma \ref{criterion_function_expected_value_lemma}). Now, consider the case $\lambda \neq \lambda_0$. Let $(\Omega,\mathcal{A},\PP)$ be a probability space supporting the Poisson process $\eta$. In order to show that $\lim_{n \to \infty} T_n(\lambda,c)=\infty$ almost surely, it is sufficient to show that for all $M > 0$ there exists a $\Omega_M \in \mathcal{A}$ with $\PP(\Omega_M)=1$ such that for all $\omega \in \Omega_M$: $\liminf_{n \to \infty}T_n(\lambda,c;\omega) > M$. Indeed, setting $\Omega_0 = \cap_{M \in \NN}\Omega_M$, we have $\PP(\Omega_0) = 1$ and $\lim_{n \to \infty}T_n(\lambda,c;\omega) = \infty$ for all $\omega \in \Omega_0$. Let $M > 0$. We need some additional notation before we choose $\Omega_M$. Pick $z > 0$ such that $F(z) > 0$. Take $R >0$ large enough such that $R > 2\sqrt{z}$. Take $r > 0$ small enough such that $r + \sqrt{r^2 + z} < 2\sqrt{z}$. One may for instance take $r = \smash{\sqrt{z/3}}$. For any $(x,h) \in \RR^d \times (0,\infty)$ let the set $D_{x,h}$ be as in Lemma \ref{ball_in_cell_lemma} and the sets $B_{x,j}$, $j \in \{1,\dots,J\}$ as in Lemma \ref{cell_in_ball_lemma}. For $(x,h) \in \RR^d \times(0,\infty)$ define the following event:
    \[O_{x,h} = \{\eta(D_{x,h}\setminus \{(x,h)\})=0, \eta(B_{x,j}) > 0 \ \ \forall j \in \{1,\dots,J\}\}.\]
    Note that $D_{x,h} \cap B_{x,j} = \emptyset$ for all $j \in \{1,\dots,J\}$. Additionally, the $B_{x,j}$'s have disjoint interiors and therefore we have:
    \[\PP(O_{x,h}) = \PP(\eta(D_{x,h}\setminus \{(x,h)\})=0)\prod_{j=1}^J \PP\left(\eta(B_{x,j}) > 0  \right).\]
    We will now argue that $\PP(O_{x,h}) = \PP(O_{0,h}) \geq \alpha > 0$, where $\alpha$ is a constant which does not depend on $h$. Note that proving this implies the following:
    \begin{equation}
        \int_0^z \PP(O_{0,h})\mathrm{d}F(h) \geq \alpha F(z) > 0.
    \end{equation}
    Since $(\nu_d \times \mathbb{F})(\{(x,h)\}) = 0$ we have $\PP(\eta(D_{x,h}\setminus \{(x,h)\})=0) = \PP(\eta(D_{x,h}) =0)$, which may be computed as:
    {\allowdisplaybreaks
    \begin{align*}
        \PP(\eta(D_{x,h}) =0) &= \exp\left(- \int_{\RR^d}\int_0^{r^2+h}\mathds{1}\{\Vert x -x'\Vert < r + \sqrt{r^2 + h -h'}\}\mathrm{d}F(h')\mathrm{d}x' \right)\\
        &= \exp\left(-\kappa_d \int_0^{r^2+h}\left(r + \sqrt{r^2 + h -h'} \right)^d\mathrm{d}F(h') \right)\\
        &\geq  \exp\left(-\kappa_d \int_0^{r^2+z}\left(z + \sqrt{r^2 + z} \right)^d\mathrm{d}F(h') \right) \\
        &= \exp\left(-\kappa_d F(r^2 + z)\left(z + \sqrt{r^2 + z} \right)^d \right) > 0.
    \end{align*}
    }
    Let $j \in \{1,\dots, J\}$, then:
    \begin{align*}
        \PP\left(\eta(B_{x,j}) > 0  \right) &= 1- \PP\left(\eta(B_{x,j}) = 0  \right)\\
        &= 1 - \exp\left(- F(z)\nu_d\left(\left(B(x,R)\setminus\bar{B}(x,2\sqrt{z})\right)  \cap (\mathcal{C}_j + x)\right)\right) \\
        &= 1 - \exp\left(- F(z)\nu_d\left(\left(B(0,R)\setminus\bar{B}(0,2\sqrt{z})\right)  \cap \mathcal{C}_j\right)\right) > 0.
    \end{align*}
    So indeed, there exists some $\alpha > 0$ such that for all $h \in [0,z]$ we have $\PP(O_{x,h}) = \PP(O_{0,h}) \geq \alpha > 0$. We now use the event $O_{x,h}$ to obtain a lower bound for $T_n(\lambda,c)$:
    \begin{align*}
        T_n(\lambda,c) &\geq\frac{1}{\nu_d(W_n)}\sum_{(x,h) \in \eta} \mathds{1}_{W_n}(x)\mathds{1}_{(0,z]}(h)v_{x,h}\left\Vert \left(\frac{\lambda}{\lambda_0}-1\right)x -\frac{\lambda c_0}{\lambda_0} + c - (c_{x,h}-x) \right\Vert^2 \\
        &\geq \frac{1}{\nu_d(W_n)}\sum_{(x,h) \in \eta} \mathds{1}_{W_n}(x)\mathds{1}_{(0,z]}(h)v_{x,h}\left\Vert \left(\frac{\lambda}{\lambda_0}-1\right)x -\frac{\lambda c_0}{\lambda_0} + c - (c_{x,h}-x) \right\Vert^2 \mathds{1}\{O_{x,h}\}\\
        &\geq \frac{\kappa_d r^d}{\nu_d(W_n)}\sum_{(x,h) \in \eta} \mathds{1}_{W_n}(x)\mathds{1}_{(0,z]}(h)\left\Vert \left(\frac{\lambda}{\lambda_0}-1\right)x -\frac{\lambda c_0}{\lambda_0} + c - (c_{x,h}-x) \right\Vert^2 \mathds{1}\{O_{x,h}\}.
    \end{align*}
    The final inequality follows from the fact that if the event $O_{x,h}$ occurs then $B(x,r) \subset C((x,h),\eta)$, hence $v_{x,h} \geq \kappa_d r^d$. Let $N > 0$ be large enough such that:
    \[\kappa_d r^d\left(\frac{N - \Vert c_0 \Vert}{\lambda_0}\left|\lambda-\lambda_0 \right| - \Vert c - c_0 \Vert - R \right)^2 \int_0^z \PP(O_{0,h})\mathrm{d}F(h) > M.\]
    Choose $\Omega_M \in \mathcal{A}$ with $\PP(\Omega_M) = 1$ such that for all $\omega \in \Omega_M$ the following two properties hold:
 \[\lim_{n \to \infty}\frac{1}{\nu_d(W_n)}\sum_{(x,h) \in \eta(\blank;\omega)} \mathds{1}_{W_n}(x)\mathds{1}_{(0,z]}(h)\mathds{1}\{O_{x,h}(\omega)\} = \int_0^z \PP(O_{0,h})\mathrm{d}F(h)\]
\[\eta(\bar{B}(0,N) \times (0,z];\omega) < \infty\]
Such a set exists by the spatial ergodic theorem (Proposition 13.4.I. in \cite{Daley2008}) and by the fact that a Poisson random variable with a finite rate parameter is almost surely finite. Suppose $(x,h) \in \eta$, $\Vert x \Vert > N$ and the event $O_{x,h}$ occurs such that $C((x,h),\eta) \subset B(x,R)$ and therefore $\Vert c_{x,h}-x\Vert < R$. Then, we claim that the following holds:
    \begin{equation}
        \left\Vert \left(\frac{\lambda}{\lambda_0}-1\right)x -\frac{\lambda c_0}{\lambda_0} + c - (c_{x,h}-x) \right\Vert \geq \frac{N-\Vert c_0 \Vert}{\lambda_0}|\lambda - \lambda_0| - \Vert c - c_0\Vert - R.\label{distance_term_lower_bound}
    \end{equation}
    We will now verify this claim.
    \begin{align}
        \left\Vert \left(\frac{\lambda}{\lambda_0}-1\right)x -\frac{\lambda c_0}{\lambda_0} + c - (c_{x,h}-x) \right\Vert &= \frac{1}{\lambda_0} \left|\lambda-\lambda_0 \right|\cdot \left\Vert x - \left(\frac{\lambda c_0 - \lambda_0 c + \lambda_0(c_{x,h}-x)}{\lambda - \lambda_0} \right) \right\Vert. \label{distance_to_generator_term}
    \end{align}
    Consider the term in brackets in the RHS of (\ref{distance_to_generator_term}), we obtain the following upper bound on its norm:
    \begin{align*}
        \left\Vert\frac{\lambda c_0 - \lambda_0 c + \lambda_0(c_{x,h}-x)}{\lambda - \lambda_0} \right\Vert &\leq \frac{1}{|\lambda - \lambda_0|}\left(\left\Vert \lambda c_0 - \lambda_0 c\right\Vert + \lambda_0 \left\Vert c_{x,h}-x \right\Vert \right) \\
        &\leq  \frac{1}{|\lambda - \lambda_0|}\left(\left\Vert \lambda c_0 - \lambda_0 c_0\right\Vert + \left\Vert \lambda_0c_0 - \lambda_0c \right\Vert + \lambda_0 R \right) \\
        &\leq \frac{1}{|\lambda - \lambda_0|}\left(|\lambda - \lambda_0|\Vert c_0 \Vert + \lambda_0 \Vert c - c_0 \Vert + \lambda_0 R\right) \\
        &\leq \Vert c_0 \Vert + \frac{\lambda_0}{|\lambda - \lambda_0|}\left(\Vert c - c_0 \Vert + R\right).
    \end{align*}
    Note that for $x,y \in \RR^d$ with $\Vert x \Vert > N > K > \Vert y \Vert$ we have by the reverse triangle inequality: $\Vert x - y\Vert \geq | \Vert x\Vert - \Vert y \Vert | = \Vert x \Vert - \Vert y \Vert > N - K$. Applying this to the RHS of (\ref{distance_to_generator_term}) we obtain the inequality in (\ref{distance_term_lower_bound}):
    \begin{align*}
        \left\Vert \left(\frac{\lambda}{\lambda_0}-1\right)x -\frac{\lambda c_0}{\lambda_0} + c - (c_{x,h}-x) \right\Vert 
        &\geq\frac{1}{\lambda_0}\left|\lambda-\lambda_0 \right|\left(N - \left(\Vert c_0 \Vert + \frac{\lambda_0}{|\lambda - \lambda_0|}\left(\Vert c - c_0 \Vert + R\right)\right) \right) \\
        &= \frac{N - \Vert c_0 \Vert}{\lambda_0}\left|\lambda-\lambda_0 \right| - \Vert c - c_0 \Vert - R.
    \end{align*}
    As a consequence, we may write (where $\bar{B}(0,N)^c = \RR^d \setminus \bar{B}(0,N)$):
    {\allowdisplaybreaks
    \begin{align}
        T_n(\lambda,c)  &\geq \frac{\kappa_d r^d}{\nu_d(W_n)}\sum_{(x,h) \in \eta} \mathds{1}_{W_n \cap \bar{B}(0,N)^c}(x)\mathds{1}_{(0,z]}(h)\left\Vert \left(\frac{\lambda}{\lambda_0}-1\right)x -\frac{\lambda c_0}{\lambda_0} + c - (c_{x,h}-x) \right\Vert^2 \mathds{1}\{O_{x,h}\}\nonumber \\
        &\phantom{\geq} + \frac{\kappa_d r^d}{\nu_d(W_n)}\sum_{(x,h) \in \eta} \mathds{1}_{W_n \cap \bar{B}(0,N)}(x)\mathds{1}_{(0,z]}(h)\left\Vert \left(\frac{\lambda}{\lambda_0}-1\right)x -\frac{\lambda c_0}{\lambda_0} + c - (c_{x,h}-x) \right\Vert^2 \mathds{1}\{O_{x,h}\}\nonumber \\
        \begin{split}\label{Tn_limit_two_terms}
            &\geq \frac{\kappa_d r^d}{\nu_d(W_n)}\sum_{(x,h) \in \eta} \mathds{1}_{W_n \cap \bar{B}(0,N)^c}(x)\mathds{1}_{(0,z]}(h)\left(\frac{N - \Vert c_0 \Vert}{\lambda_0}\left|\lambda-\lambda_0 \right| - \Vert c - c_0 \Vert - R \right)^2 \mathds{1}\{O_{x,h}\} \\
        &\phantom{\geq} + \frac{\kappa_d r^d}{\nu_d(W_n)}\sum_{(x,h) \in \eta} \mathds{1}_{W_n \cap \bar{B}(0,N)}(x)\mathds{1}_{(0,z]}(h)\left\Vert \left(\frac{\lambda}{\lambda_0}-1\right)x -\frac{\lambda c_0}{\lambda_0} + c - (c_{x,h}-x) \right\Vert^2 \mathds{1}\{O_{x,h}\}
        \end{split}
    \end{align}
    }
    We now consider the two terms in (\ref{Tn_limit_two_terms}) separately. The first term may be written as:
    {\allowdisplaybreaks
    \begin{align}
        &\frac{\kappa_d r^d}{\nu_d(W_n)}\left(\frac{N - \Vert c_0 \Vert}{\lambda_0}\left|\lambda-\lambda_0 \right| - \Vert c - c_0 \Vert - R \right)^2\sum_{(x,h) \in \eta} \mathds{1}_{W_n}(x)\mathds{1}_{(0,z]}(h)\mathds{1}\{O_{x,h}\} + \label{Tn_term1_part1} \\
        &- \frac{\kappa_d r^d}{\nu_d(W_n)}\left(\frac{N - \Vert c_0 \Vert}{\lambda_0}\left|\lambda-\lambda_0 \right| - \Vert c - c_0 \Vert - R \right)^2\sum_{(x,h) \in \eta} \mathds{1}_{W_n\cap \bar{B}(0,N)}(x)\mathds{1}_{(0,z]}(h) \mathds{1}\{O_{x,h}\}\label{Tn_term1_part2}.
    \end{align}
    }%
    By the choice of $\Omega_M$, the expression in (\ref{Tn_term1_part1}) converges for all $\omega \in \Omega_M$ to:
    \[\kappa_d r^d\left(\frac{N - \Vert c_0 \Vert}{\lambda_0}\left|\lambda-\lambda_0 \right| - \Vert c - c_0 \Vert - R \right)^2 \int_0^z \PP(O_{0,h})\mathrm{d}F(h).\]
    The expression in (\ref{Tn_term1_part2}) converges for all $\omega \in \Omega_M$ to $0$, since we have:
    \begin{align*}
        0 &\leq \frac{\kappa_d r^d}{\nu_d(W_n)}\left(\frac{N - \Vert c_0 \Vert}{\lambda_0}\left|\lambda-\lambda_0 \right| - \Vert c - c_0 \Vert - R \right)^2 \sum_{(x,h) \in \eta} \mathds{1}_{W_n \cap \bar{B}(0,N)}(x)\mathds{1}_{(0,z]}(h) \mathds{1}\{O_{x,h}\} \\
        &\leq \kappa_d r^d \left(\frac{N - \Vert c_0 \Vert}{\lambda_0}\left|\lambda-\lambda_0 \right| - \Vert c - c_0 \Vert - R \right)^2 \frac{\eta((W_n \cap \bar{B}(0,N))\times(0,z])}{\nu_d(W_n)}.
    \end{align*}
    Indeed, for sufficiently large $n$ we have $W_n \cap \bar{B}(0,N) = \bar{B}(0,N)$ and since $\eta(\bar{B}(0,N) \times (0,z];\omega) < \infty$ for all $\omega \in \Omega_M$, the fact that $\nu_d(W_n) \to \infty$ as $n \to \infty$ yields the convergence to zero. A similar argument may be used to show that the second term of (\ref{Tn_limit_two_terms}) converges to zero, this follows from the following inequalities:
    \begin{align*}
        0 &\leq \frac{\kappa_d r^d}{\nu_d(W_n)}\sum_{(x,h) \in \eta}\mathds{1}_{W_n \cap \bar{B}(0,N)}(x)\mathds{1}_{(0,z]}(h)\left\Vert \left(\frac{\lambda}{\lambda_0}-1\right)x -\frac{\lambda c_0}{\lambda_0} + c - (c_{x,h}-x) \right\Vert^2 \mathds{1}\{O_{x,h}\} \\
        &\leq \kappa_d r^d \left(\left|\frac{\lambda}{\lambda_0} - 1 \right|N + \left\Vert c - \frac{\lambda c_0}{\lambda_0} \right\Vert + R \right)^2 \frac{\eta((W_n \cap \bar{B}(0,N))\times(0,z])}{\nu_d(W_n)}.
    \end{align*}
    Hence, combining all results, we have for all $\omega \in \Omega_M$:
    \[\liminf_{n \to \infty} T_n(\lambda,c;\omega) \geq \kappa_d r^d\left(\frac{N - \Vert c_0 \Vert}{\lambda_0}\left|\lambda-\lambda_0 \right| - \Vert c - c_0 \Vert - R \right)^2 \int_0^z \PP(O_{0,h})\mathrm{d}F(h) > M.\]
\end{proof}

\begin{proof}[Proof of Proposition \ref{prop_second_moment_bound}]
    We start with proving the first inequality in the proposition for $p \geq 2$. Note that: $\nu_d(C((x,h),\eta)) = \nu_d(C((x,h),\eta)-x)$ by the translation invariance of Lebesgue measure. For $y \in \RR^d$ recall that $y \in C((x,h),\eta)-x \iff \eta(A_{x,h,y}) = 0$ with $A_{x,h,y}$ as in (\ref{set_A_xhy}). By (\ref{Robbins_thm}) we now obtain:
    \begin{equation}
        \EE\left(\nu_d(C((x,h),\eta))^p\right) = \int_{\RR^d}\dots\int_{\RR^d}\PP\left(\eta(A_{x,h,y_1}) = 0,\dots, \eta(A_{x,h,y_p}) = 0\right)\mathrm{d}y_1 \dots\mathrm{d}y_p.\label{robbin_thm_application}
    \end{equation}
    Instead of integrating over $(\RR^d)^p$ we may also integrate over the union of all sets of the form:
    \begin{equation*}
        \mathcal{C}_I =\left\{(y_1,\dots,y_p) \in (\RR^d)^p: \Vert y_{i_1} \Vert \leq \Vert y_{i_2}\Vert \leq \dots \leq \Vert y_{i_p} \Vert  \right\},
    \end{equation*}
    where $I =(i_1,i_2,\dots,i_p)$ is a permutation of $(1,2,\dots,p)$. Note that the integrand in (\ref{robbin_thm_application}) is symmetric in $y_1,\dots,y_p$. Hence, when integrating over any set $\mathcal{C}_I$ the result is the same for every choice of $I$. Because there are $p!$ permutations of $(1,2,\dots,p)$, we may write:
    \begin{align}
        \EE\left(\nu_d(C((x,h),\eta))^p\right)  &=p!\idotsint_\mathcal{C}\PP\left(\eta(A_{x,h,y_1}) = 0,\dots, \eta(A_{x,h,y_p}) = 0\right)\mathrm{d}y_1 \dots\mathrm{d}y_p \nonumber \\
        &\leq p!\idotsint_\mathcal{C}\PP\left(\eta(A_{x,h,y_p}) = 0\right)\mathrm{d}y_1 \dots\mathrm{d}y_p \nonumber \\ 
        &= p!\idotsint_\mathcal{C}\exp\left(-\kappa_d\int_0^{\Vert y_p\Vert^2 + h}\left(\Vert y_p\Vert^2 + h -t \right)^\frac{d}{2}\mathrm{d}F(t)\right)\mathrm{d}y_1 \dots\mathrm{d}y_p. \label{integral_bound}
    \end{align}
    The final equality follows from (\ref{point_inclusion_probability}) and $\mathcal{C}$ is given by:
     \[\mathcal{C} = \left\{(y_1,\dots,y_p) \in (\RR^d)^p: \Vert y_{1} \Vert \leq \Vert y_{2}\Vert \leq \dots \leq \Vert y_{p} \Vert  \right\}.\]
    Next, we substitute $y_i = r_i\theta_i$ with $r_i \geq 0$ and $\theta_i \in \Sp^{d-1}$ for $i \in \{1,\dots,p\}$. Then (\ref{integral_bound}) is given by:
    \begin{align}
         & p!(d\kappa_d)^p \int_0^\infty \int_0^{r_{p}} \dots \int_0^{r_3} \int_0^{r_2} \varphi(r_p)r_1^{d-1}\cdots r_p^{d-1}\mathrm{d}r_1\dots \mathrm{d}r_p \nonumber \\
        &= p!(d\kappa_d)^p \int_0^\infty \varphi(r_p) \left( \int_0^{r_{p}} \dots \int_0^{r_3} \int_0^{r_2}r_1^{d-1}\cdots r_p^{d-1}\mathrm{d}r_1\dots \mathrm{d}r_{p-1}\right)\mathrm{d}r_p, \label{intermediate_integral}
    \end{align}
    with: 
    \[\varphi(r_p) = \exp\left(-\kappa_d\int_0^{r_p^2 + h}\left(r_p^2 + h -t \right)^\frac{d}{2}\mathrm{d}F(t)\right).\]
    The integral in brackets in (\ref{intermediate_integral}) may be computed as:    \[\int_0^{r_{p}} \dots \int_0^{r_3} \int_0^{r_2}r_1^{d-1}\cdots r_p^{d-1}\mathrm{d}r_1\dots \mathrm{d}r_{p-1} = \frac{1}{(p-1)! d^{p-1}}r_p^{dp-1}.\]
    Plugging this back into (\ref{intermediate_integral}) we obtain the first inequality of the proposition:
    \begin{align}
        \EE\left(\nu_d(C((x,h),\eta))^p\right) &\leq pd\kappa_d^p\int_0^\infty \exp\left(-\kappa_d\int_0^{r_p^2 + h}\left(r_p^2 + h -t \right)^\frac{d}{2}\mathrm{d}F(t)\right)r_p^{pd-1}\mathrm{d}r_p. \label{prop_p_moment_result}
    \end{align}
    In the case $p=1$ (\ref{Robbins_thm}) may be used to show that (\ref{prop_p_moment_result}) becomes an equality as none of the techniques used to obtain an upper bound for the case $p \geq 2$ need to be used. Consider the second inequality for general $p \in \NN$. Via integration by parts, we obtain:
    \begin{align}
        \int_0^{r_p^2 + h}\left(r_p^2 + h -t \right)^\frac{d}{2}\mathrm{d}F(t) &= \frac{d}{2}\int_0^{r_p^2 + h}\left(r_p^2 + h -t \right)^{\frac{d}{2}-1}F(t)\mathrm{d}t \nonumber \\
        &\geq \frac{d}{2}\int_h^{r_p^2 + h}\left(r_p^2 + h -t \right)^{\frac{d}{2}-1}F(t)\mathrm{d}t \nonumber\\
        &\geq F(h)\frac{d}{2}\int_h^{r_p^2 + h}\left(r_p^2 + h -t \right)^{\frac{d}{2}-1}\mathrm{d}t\nonumber \\
        &= F(h)r_p^d. \label{lower_bound_fractional_integral}
    \end{align}
    We now plug this lower bound into (\ref{prop_p_moment_result}) and assume $F(h) > 0$:
    \begin{align*}
        \EE\left(\nu_d(C((x,h),\eta))^p\right) &\leq pd\kappa_d^p\int_0^\infty \exp\left(-\kappa_dF(h)r_p^d\right)r_p^{pd-1}\mathrm{d}r_p\\
        &= \frac{p}{F(h)^p}\int_0^\infty e^{-u}u^{p-1}\mathrm{d}u\\ &= \frac{p!}{F(h)^p}.
    \end{align*}
    Here, we substituted $u = \kappa_d F(h)r_p^d$. The resulting integral may be recognized as the Gamma function evaluated in $p$.
\end{proof}

\begin{proof}[Proof of Lemma \ref{lemma_individual_second_moment_finite}]
    We derive the result by showing that the first upper bound in the statement of Proposition \ref{prop_second_moment_bound} is finite. Choose $z > 1$ sufficiently large such that $z - \sqrt{z} > 1$ and $F(z) > 0$.
    \begin{align}
        \EE&\left(\nu_d(C((x,h),\eta))^p\right) \nonumber  \\
        &\leq pd\kappa_d^p\int_0^zr^{pd-1}\mathrm{d}r + pd\kappa_d^p\int_z^\infty \exp\left(-\kappa_d\int_0^{r^2 + h}\left(r^2 + h -t \right)^\frac{d}{2}\mathrm{d}F(t)\right)r^{pd-1}\mathrm{d}r \nonumber \\
        &= \kappa_d^pz^{pd} + pd\kappa_d^p\int_z^\infty \exp\left(-\kappa_d\int_0^{r^2 + h}\left(r^2 + h -t \right)^\frac{d}{2}\mathrm{d}F(t)\right)r^{pd-1}\mathrm{d}r. \label{second_moment_finite_first_bound}
    \end{align}
     Via integration by parts, we obtain:
     {\allowdisplaybreaks
     \begin{align*}
        \int_0^{r^2 + h}\left(r^2 + h -t \right)^\frac{d}{2}\mathrm{d}F(t) &= \frac{d}{2}\int_0^{r^2 + h}\left(r^2 + h -t \right)^{\frac{d}{2}-1}F(t)\mathrm{d}t \\
        &\geq \frac{d}{2}\int_z^{r^2 + h}\left(r^2 + h -t \right)^{\frac{d}{2}-1}F(z)\mathrm{d}t \\
        &\geq F(z)\frac{d}{2}\int_z^{r^2 + h}\left(r^2 + h -t \right)^{\frac{d}{2}-1}\mathrm{d}t \\
        &= F(z)(r^2 +h - z)^\frac{d}{2}\\
        &\geq F(z)(r^2 - z)^\frac{d}{2} \\
        &= F(z)(r-\sqrt{z})^\frac{d}{2}(r+ \sqrt{z})^\frac{d}{2}\\
        &\geq F(z)r^\frac{d}{2}.
    \end{align*}
     }%
    Plugging this back into (\ref{second_moment_finite_first_bound}) and substituting $u = \kappa_d r^\frac{d}{2}F(z)$ as in the proof of Proposition \ref{prop_second_moment_bound} yields:
    \begin{align*}
        \EE\left(\nu_d(C((x,h),\eta))^p\right) &\leq \kappa_d^p z^{pd} + pd\kappa_d^p\int_z^\infty \exp\left(-\kappa_d F(z)r^{\frac{d}{2}}\right)r^{pd-1}\mathrm{d}r \\
        &\leq \kappa_d^pz^{pd} + \kappa_d^p z^{pd} + pd\kappa_d^p\int_0^\infty \exp\left(-\kappa_d F(z)r^{\frac{d}{2}}\right)r^{pd-1}\mathrm{d}r \\
        &= \kappa_d^pz^{pd} + \frac{p!}{F(z)^p} =:\alpha_p.
    \end{align*}
\end{proof}

\begin{proof}[Proof of Theorem \ref{thm_integrated_second_moment}]
    For $p=1$ the result follows from (\ref{mean_cell_volume1}). For the remainder of the proof consider $p \geq 2$. Choose $z > 0$ large enough such that $F(z) > 0$. Let $0 < \alpha_p < \infty$ be as Lemma \ref{lemma_individual_second_moment_finite}. Using the second statement of Proposition \ref{prop_second_moment_bound} we may write:
    {\allowdisplaybreaks
    \begin{align}
        \int_0^\infty \EE\left(\nu_d(C((x,h),\eta))^p\right) \mathrm{d}F(h) &\leq \int_0^z\alpha_p\mathrm{d}F(h) + \int_z^\infty\EE\left(\nu_d(C((x,h),\eta))^p\right)\mathrm{d}F(h) \nonumber \\
        &\leq \alpha_p F(z) + \int_z^\infty \frac{p!}{F(h)^p}\mathrm{d}F(h)\nonumber\\
        &= \alpha_pF(z) + \int_{F(z)}^{F(\infty)}\frac{p!}{u^p}\mathrm{d}u \label{integrated_second_moment_sub}\\
        &= \alpha_pF(z) + \frac{p!}{p-1}\left(\frac{1}{F(z)^{p-1}} - \frac{1}{F(\infty)^{p-1}} \right) \nonumber\\
        &\leq \alpha_pF(z) + \frac{p!}{(p-1)F(z)^{p-1}} =:\beta_p. \nonumber
    \end{align}}
    In (\ref{integrated_second_moment_sub}) we substituted $u = F(h)$.
\end{proof}

\begin{proof}[Proof of Lemma \ref{lemma_bound_product_cell_volumes}]
    Let $h_1,h_2 > 0$ and define the following two half spaces:
    \begin{align*}
        H_1 &= \left\{y \in \RR^d: \Vert x_1 - y \Vert^2 + h_1 \leq \Vert x_2 - y\Vert^2 + h_2 \right\}\\
        H_2 &= \left\{y \in \RR^d: \Vert x_2 - y \Vert^2 + h_2 \leq \Vert x_1 - y\Vert^2 + h_1 \right\}.
    \end{align*}
    Additionally, let $\hat{H}_1 = H_1 \times (0,\infty)$ and $\hat{H}_2 = H_2 \times (0,\infty)$. The sets $\hat{H}_1$ and $\hat{H}_2$ have disjoint interiors, they only intersect at their boundaries. Because $\EE(\eta(\hat{H}_1 \cap \hat{H}_2)) = 0$, we have $\eta(\hat{H}_1 \cap \hat{H}_2) = 0$ almost surely. As a consequence, the Poisson processes $\eta_{\hat{H}_1}$ and $\eta_{\hat{H}_2}$ are independent. Here, $\eta_{\hat{H}_1}$ and $\eta_{\hat{H}_2}$ denote the restrictions of $\eta$ to $\hat{H}_1$ and $\hat{H}_2$ respectively. Note that $C((x_1,h_1),\eta+\delta_{(x_2,h_2)}) \subset C((x_1,h_1),\eta_{\hat{H}_1}+\delta_{(x_2,h_2)})$. Because $\eta_{\hat{H}_1}$ and $\eta_{\hat{H}_2}$ are independent, the random variables $\nu_d(C((x_1,h_1),\eta_{\hat{H}_1}+\delta_{(x_2,h_2)}))$ and $\nu_d(C((x_2,h_2),\eta_{\hat{H}_2}+\delta_{(x_1,h_1)}))$ are also independent. As a consequence:
    \begin{align}
        \EE&\left(\nu_d(C((x_1,h_1),\eta+\delta_{(x_2,h_2)}))\nu_d(C((x_2,h_2),\eta+\delta_{(x_1,h_1)})) \right) \leq \nonumber \\
        &\leq \EE\left(\nu_d(C((x_1,h_1),\eta_{\hat{H}_1}+\delta_{(x_2,h_2)}))\nu_d(C((x_2,h_2),\eta_{\hat{H}_2}+\delta_{(x_1,h_1)})) \right) \nonumber \\
        &= \EE\left(\nu_d(C((x_1,h_1),\eta_{\hat{H}_1}+\delta_{(x_2,h_2)}))\right)\EE\left(\nu_d(C((x_2,h_2),\eta_{\hat{H}_2}+\delta_{(x_1,h_1)})) \right). \label{expectation_product_volumes}
    \end{align}
    We will now derive a bound for the first expectation in (\ref{expectation_product_volumes}). By definition of the Laguerre cell, note that $C((x_1,h_1),\eta_{\hat{H}_1}+\delta_{(x_2,h_2)}) = C((x_1,h_1),\eta_{\hat{H}_1}) \cap H_1$. Let $\bar{x}_1$ be the orthogonal projection of $x_1$ onto $H_1$. Then, for all $y \in H_1$: $\Vert \bar{x}_1 - y\Vert \leq \Vert x_1 - y\Vert$. Therefore, we obtain the following inclusion:
    \begin{align}
        C\left((x_1,h_1),\eta_{\hat{H}_1}\right) \cap H_1 &= \left\{y \in H_1: \Vert x_1 - y\Vert^2 + h_1 \leq  \Vert x' - y\Vert^2 + h' \ \forall (x',h') \in \eta_{\hat{H}_1} \right\} \nonumber\\
        &\subseteq \left\{y \in H_1: \Vert \bar{x}_1 - y\Vert^2 + h_1 \leq  \Vert x' - y\Vert^2 + h' \ \forall (x',h') \in \eta_{\hat{H}_1} \right\}\nonumber \\
        &= C\left((\bar{x}_1,h_1),\eta_{\hat{H}_1}\right) \cap H_1. \nonumber
    \end{align}
    Hence,
    \begin{align}
        \EE\left(\nu_d\left(C\left((x_1,h_1),\eta_{\hat{H}_1}+\delta_{(x_2,h_2)}\right)\right)\right) &=  \EE\left(\nu_d\left(C\left((x_1,h_1),\eta_{\hat{H}_1}\right)\cap H_1\right)\right)\nonumber \\
        &\leq  \EE\left(\nu_d\left(C\left((\bar{x}_1,h_1),\eta_{\hat{H}_1}\right)\cap H_1\right)\right) \label{cell_volume_projection_bound}.
    \end{align}
    Write:
    \[B_{x,h,y} = \left\{(x',h')\in \RR^d \times (0,\infty): \Vert x - y \Vert^2 +h > \Vert x- x'\Vert^2 + h' \right\}\]
    Then, $y \in C((x,h),\eta) \iff \eta(B_{x,h,y}) = 0$. Via (\ref{Robbins_thm}) we now compute the expectation in (\ref{cell_volume_projection_bound}) as follows:
    \begin{align}
        \EE\left(\nu_d\left(C\left((\bar{x}_1,h_1),\eta_{\hat{H}_1}\right)\cap H_1\right)\right) &= \int_{\RR^d} \PP\left(y \in C\left((\bar{x}_1,h_1),\eta_{\hat{H}_1}\right)\cap H_1\right)\mathrm{d}y \nonumber \\
        &= \int_{H_1} \PP\left(y \in C\left((\bar{x}_1,h_1),\eta_{\hat{H}_1}\right)\right)\mathrm{d}y \nonumber \\
        &=\int_{H_1} \PP\left(\eta_{\hat{H}_1}\left(B_{\bar{x}_1,h_1,y} \right) = 0 \right)\mathrm{d}y \label{capped_cell_expected_volume}
    \end{align}
    Because $\eta_{\hat{H}_1}\left(B_{\bar{x}_1,h_1,y} \right)$ is Poisson distributed we can compute the probability of it being zero as:
    {\allowdisplaybreaks
    \begin{align}
        \PP&\left(\eta_{\hat{H}_1}\left(B_{\bar{x}_1,h_1,y} \right) = 0 \right) = \nonumber\\
        &=\exp\left(-\int_{H_1}\int_0^\infty \mathds{1}\left\{\Vert \bar{x}_1 - x'\Vert^2 + t < \Vert\bar{x}_1 - y \Vert^2 + h_1\right\}\mathrm{d}F(t)\mathrm{d}x'\right) \nonumber \\
        &= \exp\left(-\int_{H_1}\int_0^{\Vert\bar{x}_1 - y \Vert^2 + h_1} \mathds{1}\left\{\Vert \bar{x}_1 - x'\Vert < \sqrt{\Vert\bar{x}_1 - y \Vert^2 + h_1 - t}\right\}\mathrm{d}F(t)\mathrm{d}x'\right) \nonumber\\
        &= \exp\left(-\int_0^{\Vert\bar{x}_1 - y \Vert^2 + h_1} \nu_d\left(\bar{B}\left(\bar{x}_1,\sqrt{\Vert\bar{x}_1 - y \Vert^2 + h_1 - t} \right)\cap H_1 \right) \mathrm{d}F(t)\right) \label{capped_cell_inclusion_probability}
    \end{align}
    }%
    Because $\bar{x}_1 \in H_1$, if we consider the volume of the intersection of the half space $H_1$ and the ball centered at $\bar{x}_1$ as in (\ref{capped_cell_inclusion_probability}), it follows that the volume of that intersection is at least half the volume of said ball. Hence:
    \[\PP\left(\eta_{\hat{H}_1}\left(B_{\bar{x}_1,h_1,y} \right) = 0 \right) \leq \exp\left(-\frac{\kappa_d}{2}\int_0^{\Vert\bar{x}_1 - y \Vert^2 + h_1}\left(\Vert \bar{x}_1  - y\Vert^2 + h_1 - t \right)^\frac{d}{2}  \mathrm{d}F(t)\right).\]
    Plugging this bound back into (\ref{capped_cell_expected_volume}) we obtain:
    \begin{align}
        \EE\left(\nu_d\left(C\left((\bar{x}_1,h_1),\eta_{\hat{H}_1}\right)\cap H_1\right)\right) &\leq \int_{\RR^d} \exp\left(-\frac{\kappa_d}{2}\int_0^{\Vert\bar{x}_1 - y \Vert^2 + h_1}\left(\Vert \bar{x}_1  - y\Vert^2 + h_1 - t \right)^\frac{d}{2}  \mathrm{d}F(t)\right)\mathrm{d}y \nonumber \\
        &= \int_{\RR^d} \exp\left(-\frac{\kappa_d}{2}\int_0^{\Vert u \Vert^2 + h_1}\left(\Vert u\Vert^2 + h_1 - t \right)^\frac{d}{2}  \mathrm{d}F(t)\right)\mathrm{d}u \label{capped_cell_final_bound}
    \end{align}
    The final equality follows from substituting $u = y - \bar{x}_1$. Note that the bound in (\ref{capped_cell_final_bound}) does not depend on $x_1$, $x_2$ and $h_2$. By symmetry, replacing $h_1$ with $h_2$ in (\ref{capped_cell_final_bound}) serves as an upper bound for the second expectation in (\ref{expectation_product_volumes}). We now conclude:
    \begin{align*}
        \int_0^\infty &\int_0^\infty \EE\left(\nu_d(C((x_1,h_1),\eta+\delta_{(x_2,h_2)}))\nu_d(C((x_2,h_2),\eta+\delta_{(x_1,h_1)})) \right)\mathrm{d}F(h_1)\mathrm{d}F(h_2)  \\
        &\leq \left(\int_0^\infty \int_{\RR^d} \exp\left(-\frac{\kappa_d}{2}\int_0^{\Vert u \Vert^2 + h}\left(\Vert u\Vert^2 + h - t \right)^\frac{d}{2}  \mathrm{d}F(t)\right)\mathrm{d}u\mathrm{d}F(h)\right)^2 \\
        &= 2^2.
    \end{align*}
    The fact that the final integral equals 2 follows from (\ref{mean_cell_volume1}) by substituting $\tilde{F} = F/2$.
\end{proof}

\begin{proof}[Proof of Theorem \ref{thm_an_L2_bound}]
    For $j \in \{1,\dots,d\}$, let $A_{n,j}$ denote the $j$-th component of the random vector $A_n$. Because $\EE(A_n) = 0$ we may write:
    \begin{equation}
        \EE\left( \left\Vert A_n \right\Vert^2 \right) = \EE\left(A_{n,1}^2 + A_{n,2}^2 + \dots + A_{n,d}^2\right) = \sum_{j=1}^d \mathrm{Var}(A_{n,j}).\label{sum_squared_an}
    \end{equation}
    We will apply Lemma \ref{poincare_inequality_lemma} to $\mathrm{Var}(A_{n,j})$ for each $j$. The fact that $\EE(A_{n,j}^2) < \infty$ follows from (\ref{an_simple_l2_bound}). Fix $j \in \{1,\dots,d\}$ and set $f(\eta)=A_{n,j}$. Let $(x',h') \in \RR^d \times(0,M)$. For $(x,h)\in \eta$ let $R_x(\eta)$ be as in Theorem \ref{stabilization_theorem}, via property 1 in Theorem \ref{stabilization_theorem} we may write:
    \begin{align*}
        f(\eta+\delta_{(x',h')}) &=\frac{1}{\nu_d(W_n)}\sum_{(x,h) \in \eta+\delta_{(x',h')}}\mathds{1}_{W_n}(x)\nu_d(C((x,h),\eta+\delta_{(x',h')}))x \\
        &= \frac{1}{\nu_d(W_n)}\mathds{1}_{W_n}(x')\nu_d(C((x',h'),\eta))x' + \\
        &\phantom{=} +\frac{1}{\nu_d(W_n)}\sum_{(x,h) \in \eta}\mathds{1}_{W_n}(x)\nu_d(C((x,h),\eta+\delta_{(x',h')}))x \\
        &= \frac{1}{\nu_d(W_n)}\mathds{1}_{W_n}(x')\nu_d(C((x',h'),\eta))x' + \\
        &\phantom{=} +\frac{1}{\nu_d(W_n)}\sum_{(x,h) \in \eta}\mathds{1}_{W_n}(x)\mathds{1}\{\Vert x - x'\Vert \leq R_x(\eta)\}\nu_d((C(x,h),\eta+\delta_{(x',h')}))x \\
        &\phantom{=} +\frac{1}{\nu_d(W_n)}\sum_{(x,h) \in \eta}\mathds{1}_{W_n}(x)\mathds{1}\{\Vert x - x'\Vert > R_x(\eta)\}\nu_d(C((x,h),\eta))x.
    \end{align*}
    Similarly, we may write:
    \begin{align*}
        f(\eta) &= \frac{1}{\nu_d(W_n)}\sum_{(x,h) \in \eta}\mathds{1}_{W_n}(x)\mathds{1}\{\Vert x - x'\Vert > R_x(\eta)\}\nu_d(C((x,h),\eta))x + \\
        &\phantom{=} + \frac{1}{\nu_d(W_n)}\sum_{(x,h) \in \eta}\mathds{1}_{W_n}(x)\mathds{1}\{\Vert x - x'\Vert \leq R_x(\eta)\}\nu_d(C((x,h),\eta))x.
    \end{align*}
    Using these expressions for $f(\eta+\delta_{(x',h')})$ and $f(\eta)$ we derive an upper bound for\\ $|D_{(x',h')}f(\eta)|$. Taking into account that $\Vert x\Vert \leq \mathrm{diam}(W)n^{1/d}$, we obtain:
    {\allowdisplaybreaks
    \begin{align*}
        |D_{(x',h')}f(\eta)|  &\leq \frac{1}{n}\mathds{1}_{W_n}(x')\nu_d(C((x',h'),\eta))\Vert x'\Vert + \\
        &\phantom{\leq} + \frac{1}{n}\sum_{(x,h) \in \eta}\mathds{1}_{W_n\cap \bar{B}(x',R_x(\eta))}(x)\left|\nu_d(C((x,h),\eta)) -\nu_d(C((x,h),\eta+\delta_{(x',h')})) \right|\Vert x \Vert \\
        &\leq \frac{\mathrm{diam}(W)}{n^{1-\frac{1}{d}}}\mathds{1}_{W_n}(x')\nu_d(C(x',h'),\eta) +\\
        &\phantom{\leq} + \frac{\mathrm{diam}(W)}{n^{1-\frac{1}{d}}}\sum_{(x,h) \in \eta}\mathds{1}_{W_n}(x)\mathds{1}\{\Vert x - x'\Vert \leq R_x(\eta)\}\nu_d(C((x,h),\eta)).
    \end{align*}}%
    Here, we also used the fact: $C((x,h),\eta+\delta_{(x',h')}) \subset C((x,h),\eta)$. Note that this upper bound does not depend on $j$. Via Lemma \ref{poincare_inequality_lemma} and (\ref{sum_squared_an}) we obtain:
    \begin{align}
    \begin{split}
        \EE\left( \left\Vert A_n \right\Vert^2 \right) &\leq \frac{\mathrm{diam}(W)^2d}{n^{2-\frac{2}{d}}}\Bigg(\int_{\RR^d}\int_0^M \mathds{1}_{W_n}(x')\EE\left(\nu_d(C((x',h'),\eta))^2 \right)\mathrm{d}F(h')\mathrm{d}x' \\
        &\phantom{\leq} \ + \int_{\RR^d}\int_0^M \EE\Bigg(\bigg(\sum_{(x,h) \in \eta}\mathds{1}_{W_n}(x)\mathds{1}\{\Vert x - x'\Vert \leq R_x(\eta)\}\nu_d(C((x,h),\eta)) \bigg)^2 \Bigg)\mathrm{d}F(h')\mathrm{d}x'\Bigg). \label{an_L2_bound}
    \end{split}
    \end{align}
    We now separately determine upper bounds for both of the integrals appearing in (\ref{an_L2_bound}). For the first integral we obtain the following:
    \begin{equation}
        \int_{\RR^d}\int_0^M \mathds{1}_{W_n}(x')\EE\left(\nu_d(C((x',h'),\eta))^2 \right)\mathrm{d}F(h')\mathrm{d}x' \leq \int_{\RR^d} \mathds{1}_{W_n}(x')\beta_2\mathrm{d}x' = \beta_2n.\label{an_proof_bound1}
    \end{equation}
    Here, $\beta_2$ is as in Theorem \ref{thm_integrated_second_moment}. Let $(\eta)_{\neq}^2$ denotes the set of all distinct pairs of points of $\eta$. That is, if $(x_1,h_1),(x_2,h_2) \in (\eta)_{\neq}^2$, then $(x_1,h_1) \neq (x_2,h_2)$. Now, consider the second integral in (\ref{an_L2_bound}). Expanding the square and applying Fubini, we obtain:
    {\allowdisplaybreaks
    \begin{align}
    &\int_{\RR^d}\int_0^M \EE\Bigg(\bigg(\sum_{(x,h) \in \eta}\mathds{1}_{W_n}(x)\mathds{1}\{\Vert x - x'\Vert \leq R_x(\eta)\}\nu_d(C((x,h),\eta)) \bigg)^2 \Bigg)\mathrm{d}F(h')\mathrm{d}x' \nonumber\\
        &=\int_{\RR^d}\int_0^M \EE\left(\sum_{(x,h) \in \eta}\mathds{1}_{W_n}(x)\mathds{1}\{\Vert x - x'\Vert \leq R_x(\eta)\}\nu_d(C((x,h),\eta))^2 \right)\mathrm{d}F(h')\mathrm{d}x' +\nonumber\\
        &\phantom{=} + \int_{\RR^d}\int_0^M \EE\Bigg(\sum_{(x_1,h_1),(x_2,h_2) \in (\eta)_{\neq}^2}\mathds{1}_{W_n}(x_1)\mathds{1}_{W_n}(x_2)\mathds{1}\{\Vert x_1 - x'\Vert \leq R_{x_1}(\eta)\}\cdot \nonumber\\
        &\phantom{=} \cdot\mathds{1}\{\Vert x_2 - x'\Vert \leq R_{x_2}(\eta)\}\nu_d(C((x_1,h_1),\eta))\nu_d(C((x_2,h_2),\eta)) \Bigg)\mathrm{d}F(h')\mathrm{d}x'\nonumber\\
        &= F(M) \EE\left(\sum_{(x,h) \in \eta}\mathds{1}_{W_n}(x)\int_{\RR^d}\mathds{1}\{\Vert x - x'\Vert \leq R_x(\eta)\}\mathrm{d}x'\nu_d(C((x,h),\eta))^2 \right) \nonumber\\
        &\phantom{=} + F(M) \EE\Bigg(\sum_{(x_1,h_1),(x_2,h_2) \in (\eta)_{\neq}^2}\mathds{1}_{W_n}(x_1)\mathds{1}_{W_n}(x_2)\int_{\RR^d}\mathds{1}\{\Vert x_1 - x'\Vert \leq R_{x_1}(\eta)\}\cdot\nonumber\\
        &\phantom{=} \cdot\mathds{1}\{\Vert x_2 - x'\Vert \leq R_{x_2}(\eta)\}\mathrm{d}x'\nu_d(C((x_1,h_1),\eta))\nu_d(C((x_2,h_2),\eta)) \Bigg)\nonumber\\
        &= F(M)\kappa_d \EE\left(\sum_{(x,h) \in \eta}\mathds{1}_{W_n}(x)R_x(\eta)^d\nu_d(C((x,h),\eta))^2 \right) \label{an_proof_term_squares}\\
        \begin{split}
            &\phantom{=} + F(M) \EE\Bigg(\sum_{(x_1,h_1),(x_2,h_2) \in (\eta)_{\neq}^2}\mathds{1}_{W_n}(x_1)\mathds{1}_{W_n}(x_2)\nu_d(\bar{B}(x_1,R_{x_1}(\eta))\cap\bar{B}(x_2,R_{x_2}(\eta)))\cdot \label{an_proof_term_cross} \\
        &\phantom{=}\cdot\nu_d(C((x_1,h_1),\eta))\nu_d(C((x_2,h_2),\eta)) \Bigg).
        \end{split}
    \end{align}
    }%
    An upper bound for the term in (\ref{an_proof_term_squares}) may be computed via the Mecke equation as follows:
    \begin{align}
        F(M)\kappa_d &\EE\left(\sum_{(x,h) \in \eta}\mathds{1}_{W_n}(x)R_x(\eta)^d\nu_d(\bar{B}(x,R_x(\eta)))^2 \right) \leq \nonumber \\
        &\leq F(M)\kappa_d^3 \EE\left(\sum_{(x,h) \in \eta}\mathds{1}_{W_n}(x)R_x(\eta)^{3d}\right) \nonumber\\
        &= F(M)\kappa_d^3\int_{\RR^d}\int_0^M\mathds{1}_{W_n}(x)\EE(R_x(\eta+\delta_x)^{3d}) \mathrm{d}F(h)\mathrm{d}x \nonumber\\
        &\leq F(M)^2\kappa_d^3\EE\left(R_0(\eta)^{3d}\right)n. \label{an_proof_bound2}
    \end{align}
    Here, we also used point 4 of Theorem \ref{stabilization_theorem}. Because $R_0(\eta)$ has exponentially decaying tails by point 3 of Theorem \ref{stabilization_theorem}, all of its moments are finite, hence $\EE(R_0(\eta)^{3d}) < \infty$. Next, we consider the term in (\ref{an_proof_term_cross}). This term may be computed via the multivariate Mecke equation (Theorem 4.4 in \cite{Last2018}). By also using points 2 and 4 of Theorem \ref{stabilization_theorem} we obtain:
{\allowdisplaybreaks
    \begin{align}
        &\phantom{=}F(M) \EE\Bigg(\sum_{(x_1,h_1),(x_2,h_2) \in (\eta)_{\neq}^2}\mathds{1}_{W_n}(x_1)\mathds{1}_{W_n}(x_2)\nu_d(\bar{B}(x_1,R_{x_1}(\eta))\cap\bar{B}(x_2,R_{x_2}(\eta)))\cdot \nonumber\\
        &\phantom{=}\cdot\nu_d(C((x_1,h_1),\eta))\nu_d(C((x_2,h_2),\eta)) \Bigg)\nonumber\\
        &\leq F(M)\kappa_d^2\EE\Bigg(\sum_{(x_1,h_1),(x_2,h_2) \in (\eta)_{\neq}^2}\mathds{1}_{W_n}(x_1)\mathds{1}_{W_n}(x_2)\nu_d(\bar{B}(x_1,R_{x_1}(\eta))\cap\bar{B}(x_2,R_{x_2}(\eta)))\cdot \nonumber\\
        &\phantom{=}\cdot R_{x_1}(\eta)^d  R_{x_2}(\eta)^d \Bigg)\nonumber\\
        &= F(M)\kappa_d^2 \int_{W_n}\int_0^M \int_{W_n}\int_0^M \EE\Bigg(\nu_d(\bar{B}(x_1,R_{x_1}(\eta + \delta_{x_2}))\cap\bar{B}(x_2,R_{x_2}(\eta + \delta_{x_1})))\cdot\nonumber \\
        &\phantom{=}\cdot R_{x_1}(\eta + \delta_{x_2})^d  R_{x_2}(\eta+ \delta_{x_1})^d \Bigg)\mathrm{d}F(h_1)\mathrm{d}x_1\mathrm{d}F(h_2)\mathrm{d}x_2\nonumber\\
        &\leq F(M)^3\kappa_d^2 \int_{W_n}\int_{W_n} \EE\Bigg(\nu_d(\bar{B}(x_1,R_{x_1}(\eta))\cap\bar{B}(x_2,R_{x_2}(\eta))) R_{x_1}(\eta)^d  R_{x_2}(\eta)^d \Bigg)\mathrm{d}x_1\mathrm{d}x_2 \label{an_proof_cross_part2} 
    \end{align}}%
    Note that for any $x_1,x_2 \in \RR^d$:
    \begin{align*}
        \nu_d(\bar{B}(x_1,R_{x_1}(\eta))\cap\bar{B}(x_2,R_{x_2}(\eta))) &= \nu_d(\bar{B}(x_1,R_{x_1}(\eta))\cap\bar{B}(x_2,R_{x_2}(\eta))) \cdot \\
        &\phantom{=} \ \cdot\mathds{1}\{\Vert x_1 - x_2 \Vert \leq R_{x_1}(\eta) + R_{x_2}(\eta)\}
    \end{align*}
    As a consequence, (\ref{an_proof_cross_part2}) may be written as:
    {\allowdisplaybreaks
    \begin{align}
        &F(M)^3\kappa_d^2 \int_{W_n}\int_{W_n} \EE\Bigg(\nu_d(\bar{B}(x_1,R_{x_1}(\eta))\cap\bar{B}(x_2,R_{x_2}(\eta))) R_{x_1}(\eta)^d  R_{x_2}(\eta)^d\cdot \nonumber \\
        &\phantom{=} \cdot \mathds{1}\{\Vert x_1 - x_2 \Vert \leq R_{x_1}(\eta) + R_{x_2}(\eta)\} \Bigg)\mathrm{d}x_1\mathrm{d}x_2 \nonumber \\
        \begin{split}\label{an_term_1}
            &= F(M)^3\kappa_d^2 \int_{W_n}\int_{W_n} \EE\Bigg(\nu_d(\bar{B}(x_1,R_{x_1}(\eta))\cap\bar{B}(x_2,R_{x_2}(\eta))) R_{x_1}(\eta)^d  R_{x_2}(\eta)^d\cdot \\
        &\phantom{=} \cdot \mathds{1}\{\Vert x_1 - x_2 \Vert \leq R_{x_1}(\eta) + R_{x_2}(\eta)\} \mathds{1}\{R_{x_1}(\eta) \leq R_{x_2}(\eta)\}\Bigg)\mathrm{d}x_1\mathrm{d}x_2 +
        \end{split}\\
        \begin{split}\label{an_term_2}
            &\phantom{=} + F(M)^3\kappa_d^2 \int_{W_n}\int_{W_n} \EE\Bigg(\nu_d(\bar{B}(x_1,R_{x_1}(\eta))\cap\bar{B}(x_2,R_{x_2}(\eta))) R_{x_1}(\eta)^d  R_{x_2}(\eta)^d\cdot \\
        &\phantom{=} \cdot \mathds{1}\{\Vert x_1 - x_2 \Vert \leq R_{x_1}(\eta) + R_{x_2}(\eta)\} \mathds{1}\{R_{x_1}(\eta) > R_{x_2}(\eta)\}\Bigg)\mathrm{d}x_1\mathrm{d}x_2
        \end{split}     
    \end{align}
    }%
We now derive an upper bound for the expression in (\ref{an_term_1}), the obtained upper bound will also hold for (\ref{an_term_2}) by symmetry in $x_1$ and $x_2$. The following is an upper bound for (\ref{an_term_1}):
\begin{align}
    &\phantom{\leq} F(M)^3\kappa_d^2 \int_{W_n}\int_{W_n} \EE\Bigg(\nu_d(\bar{B}(x_2,R_{x_2}(\eta)))  R_{x_2}(\eta)^{2d}\mathds{1}\{\Vert x_1 - x_2 \Vert \leq 2R_{x_2}(\eta)\}\cdot \nonumber \\
        &\phantom{=} \cdot  \mathds{1}\{R_{x_1}(\eta) \leq R_{x_2}(\eta)\}\Bigg)\mathrm{d}x_1\mathrm{d}x_2\nonumber \\
    &\leq F(M)^3\kappa_d^3 \int_{W_n} \EE\Bigg(R_{x_2}(\eta)^{3d}\int_{\RR^d}\mathds{1}\{\Vert x_1 - x_2 \Vert \leq 2R_{x_2}(\eta)\}\mathrm{d}x_1\Bigg)\mathrm{d}x_2\nonumber\\
    &= F(M)^3\kappa_d^4 2^d \EE(R_0(\eta)^{4d}) n.\label{an_proof_bound3}
\end{align}
Plugging (\ref{an_proof_bound1}), (\ref{an_proof_bound2}) and (\ref{an_proof_bound3}) back into (\ref{an_L2_bound}) yields the desired result.
\[ \EE\left( \left\Vert A_n \right\Vert^2 \right) \leq \frac{\mathrm{diam}(W)^2d}{n^{1-\frac{2}{d}}}\left(\beta_2 + F(M)^2\kappa_d^3\EE\left(R_0(\eta)^{3d}\right) + F(M)^3\kappa_d^4 2^{d+1} \EE\left(R_0(\eta)^{4d}\right) \right).\]
\end{proof}

\bibliographystyle{abbrv}
\addcontentsline{toc}{section}{References}
\bibliography{export}

\newpage

\end{document}